\newcommand{\fun}[1]{\hbox{\it #1\/}}
\newcommand{\synbracket}[1]{[\![{#1}]\!]}
\newcommand{\DOM}[1]{{\it Domain\/}(#1)}
\def\elem{\hbox{\raise.13ex\hbox{$\scriptstyle\in$}}}
\newcommand{\INTEGER}{{\it Int}}
\newcommand{\IFHEAD}[1]{#1 \rightarrow}
\newcommand{\IFELSE}{[\!]~}
\newcommand{\CASEHEAD}[1]{\mbox{{\bf cases} ${#1}$ {\bf of} }}
\newcommand{\CASE}[2]{{#1} \:\rightarrow\: {#2}}
\newcommand{\CASEELSE}{\mbox{{\bf else} }}
\newcommand{\CASEEND}{{\bf end}}
\newcommand{\LETHEAD}[2]{\mbox{{\bf let} }{#1} = {#2}\mbox{ {\bf in} }}
\newcommand{\LEFTREASONBRACKET}{\langle}
\newcommand{\RIGHTREASONBRACKET}{\rangle}
\newcommand{\BYPHRASE}{by~}
\newenvironment{calculation}{
\samepage
\begin{tabbing}
n \= n \= n \= n \= mm \= n \= mm \= n \= mm \= n \= mm \= n \= mm \= \kill
}{
\end{tabbing}
}
\newcommand{\LASTFORMULA}[1]{\> \ensuremath{#1}}
\newcommand{\FORMULA}[1]{\LASTFORMULA{#1} \\}
\newcommand{\REASON}[2]{%
    \ensuremath{#1}\> \>
       \mbox{\ensuremath{\LEFTREASONBRACKET}{\BYPHRASE}#2\ensuremath{\RIGHTREASONBRACKET}} \\}
\DeclareMathAlphabet{\mathcal}{OMS}{cmsy}{m}{n}
\declaretheoremstyle[spaceabove=\topsep,notefont=\normalfont\itshape]{mystyle}
\newtheorem{definition}{Definition}
\newtheorem{claim}{Claim}
\newtheorem{theorem}{Theorem}[section]
\newtheorem{lemma}[theorem]{Lemma}
\newcommand{\RNum}[1]{\uppercase\expandafter{\romannumeral #1\relax}}
\newcommand*{\rom}[1]{\expandafter\@slowromancap\romannumeral #1@}
\newcommand{\F}{Fig.}
\newcommand{\T}{Table}
\renewcommand{\S}{Sec.}
\newcommand{\ignore}[1]{}
\lstdefinestyle{base}{
  moredelim=**[is][\color{red}]{@}{@},
  escapeinside={<@}{@>}
}
\newcommand{\rs}{\gg}
\newcommand{\cachea}{CacheAudit\xspace}
\newcommand{\cachediff}{CacheD\xspace}
\newcommand{\caches}{CacheS\xspace}
\newcommand{\ai}{abstract interpretation}
\newcommand{\Ai}{Abstract interpretation}
\newcommand{\SAI}{Secret-Augmented Abstract Interpretation}
\newcommand{\sas}{\textbf{SAS}}
\newcommand{\SAS}{\textbf{SAS}}
\newcommand{\powerset}[1]{\mathscr{P}(#1)}
\newcommand\scalemath[2]{\scalebox{#1}{\mbox{\ensuremath{\displaystyle #2}}}}
\newcommand{\INFOP}{\ensuremath{\hat{\Diamond}}}
\newcommand{\INFSTORE}{\ensuremath{\hat{\sigma}}}
\newcommand{\INFMEM}{\ensuremath{\hat{m}}}
\newcommand{\MEANINGFUN}[1]{{\cal #1}}
\newcommand{\Comm}{\SYNTAXCATEGORY{Command}}
\newcommand{\ABSV}{\ensuremath{\mathbf{AV}}}
\newcommand\DejaVuttfamily{%
  \fontfamily{DejaVuSansMono-TLF}\selectfont
}
\lstdefinestyle{base}{
  moredelim=**[is][\color{red}]{@}{@},
  escapeinside={<@}{@>}
}
\definecolor{ForestGreen}{rgb}{0.13, 0.55, 0.13}
\definecolor{bittersweet}{rgb}{1.0, 0.44, 0.37}
\colorlet{myPurple}{blue!40!red}
\lstdefinelanguage
   [x64]{Assembler}     % add a "x64" dialect of Assembler
   [x86masm]{Assembler} % based on the "x86masm" dialect
\colorlet{myPurple}{blue!40!red}
\definecolor{ForestGreen}{rgb}{0.13, 0.55, 0.13}
\newcommand{\code}[1]{\textcolor{myPurple}{\texttt{#1}}}
\newcommand{\revise}[2]{{\color{red}{\ifx&#1&\else- #1\fi}} {\color{ForestGreen}{\ifx&#2&\else+ #2\fi}}}%
\renewcommand{\revise}[2]{#2}%
\newcommand{\Header}{\mathbb{U}}
\newcommand{\Stack}{\mathbb{S}}
\newcommand{\IM}{\mathbb{IM}}
\newcommand{\COLLCP}{\ensuremath{\textit{collc}_{t}}}
\newcommand{\COLLAP}{\ensuremath{\textit{colla}_{\hat{t}}}}
 \newcommand{\V}{\textit{Val}}
\newcommand{\MEM}{m}
\newcommand{\figref}[1]{Fig.~\ref{#1}}
\newcommand{\tabref}[1]{Table~\ref{#1}}
\newcommand{\lemref}[1]{Lemma~\ref{#1}}
\newcommand{\defref}[1]{Def.~\ref{#1}}
\newcommand{\thmref}[1]{Theorem~\ref{#1}}
\newcommand{\nonterm}[1]{\textrm{\textit{#1}}}
\newcommand{\DOMAIN}[1]{\fun{#1}}
\newcommand{\Store}{\DOMAIN{Store}}
\newcommand{\ME}[2]{\MEANINGFUN{E}\synbracket{ #1 }{(#2)}}
\newcommand{\LET}{\mbox{\bf{let}}}
\newcommand{\LETIN}{\mbox{\bf{in}}}
\renewcommand{\IFHEAD}[1]{\mbox{\bf{if}}~#1}
\newcommand{\IFTHEN}{\mbox{\bf{then}}~}
\renewcommand{\IFELSE}{\mbox{\bf{else}}~}
\newcommand{\MO}[1]{\MEANINGFUN{MO}\synbracket{ #1 }}
\newcommand{\MP}[2]{\MEANINGFUN{MP}\synbracket{#1}_{\theta}{(#2)}}
\newcommand{\MS}[2]{\MEANINGFUN{MS}\synbracket{#1}{(#2)}}
\newcommand{\SYNTAXCATEGORY}[1]{\nonterm{#1}}
\newcommand{\ROOT}[1]{\ensuremath{root(#1)}}
\newcommand{\Indexed}[2]{\ensuremath{#1 \!\!\downharpoonright\!\! #2}}
\newcommand{\APPROX}{\ensuremath{\approx}}
\newcommand{\TRANSIT}{\ensuremath{\rightsquigarrow}}
\newcommand{\Trace}{\textit{Trace}}
\newcommand{\Expr}{\SYNTAXCATEGORY{E}}
\newcommand{\INFPC}{\ensuremath{\hat{pc}}}
\newcommand{\INFME}[2]{\MEANINGFUN{\hat{E}}\synbracket{#1}{(#2)}}
\newcommand{\INFMS}[2]{\MEANINGFUN{\hat{MS}}\synbracket{#1}{(#2)}}
\newcommand{\INFMP}[2]{\MEANINGFUN{\hat{MP}}\synbracket{#1}{(#2)}}
\newcommand{\secref}[1]{Section~\ref{#1}}
\newcommand{\INFABS}{\ensuremath{\alpha}}
\newcommand{\INFCONCRETE}{\ensuremath{\gamma}}
\newcommand{\CODOM}{\fun{codom}}
\let\OLDthebibliography\thebibliography
\renewcommand\thebibliography[1]{
  \OLDthebibliography{#1}
  \setlength{\parskip}{0pt}
  \setlength{\itemsep}{1.0pt plus 0.15ex}
}
\begin{document}

%don't want date printed
%make title bold and 14 pt font (Latex default is non-bold, 16 pt)
\title{\large \bf Identifying Cache-Based Side Channels through \SAI}

\date{}

%for single author (just remove % characters)
\author[1]{\rm Shuai Wang\thanks{Most of this work is done while Shuai Wang was working at PSU.}}
\author[2]{\rm Yuyan Bao}
\author[2]{\rm Xiao Liu}
\author[2]{\rm Pei Wang}
\author[2]{\rm Danfeng Zhang}
\author[2]{Dinghao Wu}
\affil[1]{The Hong Kong University of Science and Technology}
\affil[2]{The Pennsylvania State University} 
\affil[ ]{\rm \textit{shuaiw@cse.ust.hk, \{yxb88,  xvl5190, pxw172\}@ist.psu.edu, zhang@cse.psu.edu, dwu@ist.psu.edu}}
\maketitle

% Use the following at camera-ready time to suppress page numbers.
% Comment it out when you first submit the paper for review.
\thispagestyle{empty}

\subsection*{Abstract}
Cache-based side channels enable a dedicated attacker to reveal program secrets
by measuring the cache access patterns. Practical attacks have been shown
against real-world crypto algorithm implementations such as RSA, AES, and
ElGamal. By far, identifying information leaks due to cache-based side channels,
either in a static or dynamic manner, remains a challenge: the existing
approaches fail to offer high precision, full coverage, and good scalability
simultaneously, thus impeding their practical use in real-world scenarios.

In this paper, we propose a novel static analysis method on 
binaries to detect cache-based side channels. We use
\ai\ to reason on program states with respect to abstract values at each
program point. To make such \ai\ scalable to real-world cryptosystems while
offering high precision and full coverage, we propose a novel
abstract domain called the Secret-Augmented Symbolic domain (\sas). \sas\ tracks
program secrets and dependencies on them for precision, while it tracks only
coarse-grained public information for scalability.

We have implemented the proposed technique into a practical tool named
\caches\ and evaluated it on the implementations of widely-used cryptographic
algorithms in real-world crypto libraries, including Libgcrypt, OpenSSL, and
mbedTLS. \caches\ successfully confirmed a total of 154 information leaks
reported by previous research and 54 leaks that were previously unknown. We have
reported our findings to the developers. And they confirmed that many of those unknown
information leaks do lead to potential side channels.

\section{Introduction}
\label{sec:introduction}
Cache-based timing channels enable attackers to reveal secret program 
information, such as private keys, by measuring the runtime cache
behavior of the victim program. Practical attacks have been executed with 
different attack scenarios, such as
time-based~\cite{daniel2005cache,Kocher1996},
access-based~\cite{Gullasch:2011,Osvik+:2006,Percival:2005}, and
trace-based~\cite{aciicmez2006trace}, each of which exploits a victim program
through either coarse-grained or fine-grained monitoring of cache behavior.
Additionally, previous research has successfully launched attacks on commonly used
cryptographic algorithm implementations, for example,
AES~\cite{Gullasch:2011,Osvik+:2006, Tromer10,daniel2005cache},
RSA~\cite{Brumley:Boneh:2005,Kocher1996, Aciicmez07, Percival:2005, Yarom14},
and ElGamal~\cite{Zhang12}.

Pinpointing cache-based side channels from production cryptosystems remains a
challenge. Existing research employs either static or dynamic methods to detect
underlying
issues~\cite{wang2017cached,cacheaudit,Doychev16,gorka2017side,jan2018microwalk,bortzman2018casym,weiser2018data}.
However, the methods are limited to low detection coverage, low precision, and
poor scalability, which impede their usage in analyzing real-world cryptosystems
in the wild.

Abstract interpretation is a well-established framework that can be tuned to
balance precision and scalability for static analysis.
It models program execution within one or several
carefully-designed \textit{abstract domains}, which abstract program concrete semantics 
by tracking certain program states of interest in a concise representation. 
Usually, the elements in an abstract
domain form a complete lattice of finite height, and the operations of the program concrete semantics are
mapped to the abstract transfer functions over the abstract domain. A
well-designed \ai\ framework can correctly approximate program execution and usually
yields a terminating analysis within a finite step of computations.
Nevertheless, the art is to carefully design an abstraction domain that fits the
problem under consideration, while over-approximating others to bound the analysis
to a controllable size; this enables the analysis of non-trivial cases.

We propose a novel abstract domain named the Secret-Augmented Symbolic domain (\sas),
which is specifically designed to perform \ai\ on \textit{large-scale}
secret-aware software, such as real-world cryptosystems. 
\sas\ is designed to perform fine-grained tracking of 
program secrets (e.g., private keys) and dependencies on them, 
while coarsely approximating
non-secret information to speed up the convergence of the analysis.

We implement the proposed technique as a practical tool named \caches, which
models program execution within the \sas\ and pinpoints cache-based side
channels with constraint solving techniques. Like many bug finding
techniques~\cite{machiry2017checker,xie2005scalable,livshits2003tracking},
\caches\ is soundy~\cite{livshits2015soundness}; the implementation is unsound
for speeding up analysis and optimizing memory usage, due to its lightweight but
unsound treatment of memory. However, in contrast to previous studies that
analyze only small-size programs, single procedure or single execution trace
\cite{cacheaudit,Doychev16,wang2017cached,bortzman2018casym,weiser2018data},
\caches\ is scalable enough to deliver whole program static analysis of
real-world cryptosystems without sacrificing much accuracy. 
We have evaluated \caches\ on multiple popular crypto libraries. Although most
libraries have been checked by many previous tools, \caches\ is able to detect
54 unknown information leakage sites from the implementations of RSA/ElGamal
algorithms in three real-world cryptosystems: Libgcrypt (ver. 1.6.3), OpenSSL
(ver. 1.0.2k and 1.0.2f), and mbedTLS (ver. 2.5.1). We show that \caches\ has
good scalability as it largely outperforms previous research regarding coverage;
it is able to complete context-sensitive interprocedural analysis of over 295 K
lines of instructions within 0.5 CPU hour.
In summary, we make the following contributions:
\begin{itemize}[noitemsep,topsep=0pt]
\item We propose a novel \ai-based analysis to pinpoint information leakage
  sites that may lead to cache-based side channels. We propose a novel abstract
  domain named \sas, which performs fine-grained tracking of program secrets and
  dependencies, while over-approximating non-secret values to enable precise
  reasoning in a scalable way.
\item Enabled by the ``symbolic'' representation of abstract values in \sas, we
  facilitate information leak checking in this research with constraint solving techniques.
  Compared with previous \ai-based methods, which only reason on
  the information leakage upper-bound, our technique adequately
  simplifies the process of debugging and fixing side channels.
\item We implement the proposed technique into a practical tool named
  \caches\ and apply it to detect cache-based side channels in real-world
  cryptosystems. From five popular crypto library implementations,
  \caches\ successfully identified 208 information leakage sites (with only one
  false positive), among which 54 are unknown to previous research, to the best
  of our knowledge.
\end{itemize}

\section{Background}
\label{sec:background}
\newsavebox{\figa}
\begin{lrbox}{\figa}% Store first listing
\begin{minipage}[b]{.38\linewidth}
\begin{lstlisting}[basicstyle=\fontsize{6.0}{7.5}\ttfamily]
foo:
  <@\code{mov}@>   @eax@, @ebx@
  <@\code{add}@>   @eax@, 0x1
  <@\code{load}@>  ecx, esi
  <@\code{add}@>   ecx, 0x12
  <@\code{mov}@>   edx, edi
  <@\code{add}@>   @eax@, ecx
\end{lstlisting}
\end{minipage}
\end{lrbox}

\newsavebox{\figb}
\begin{lrbox}{\figb}% Store first listing
\begin{minipage}[b]{0.9\linewidth}
\begin{lstlisting}[basicstyle=\fontsize{6.0}{7.5}\ttfamily]
{<@ebx = $\{\mathtt{k_{1}}\}$@>}
{<@ebx = $\{\mathtt{k_{1}}\}$@>, <@eax = $\{\mathtt{k_{1}}\}$@>} 
{<@ebx = $\{\mathtt{k_{1}}\}$@>, <@eax = $\{\mathtt{k_{1}+1}\}$@>} 
{<@ebx = $\{\mathtt{k_{1}}\}$@>, <@eax = $\{\mathtt{k_{1}+1}\}$@>, <@ecx = $\{\mathtt{m_{1}}\}$@>} 
{<@ebx = $\{\mathtt{k_{1}}\}$@>, <@eax = $\{\mathtt{k_{1}+1}\}$@>, <@ecx = $\{\mathtt{m_{1}+12}\}$@>} 
{<@ebx = $\{\mathtt{k_{1}}\}$@>, <@eax = $\{\mathtt{k_{1}+1}\}$@>, <@ecx = $\{\mathtt{m_{1}+12}\}$@>, <@edx = $\{\mathtt{edi_{0}}\}$@>} 
{<@ebx = $\{\mathtt{k_{1}}\}$@>, <@eax = $\{\mathtt{k_{1}+ m_{1} +13}\}$@>, <@ecx = $\{\mathtt{m_{1}+12}\}$@>, <@edx = $\{\mathtt{edi_{0}}\}$@>} 
\end{lstlisting}
\end{minipage}
\end{lrbox}

\newsavebox{\figc}
\begin{lrbox}{\figc}
\begin{minipage}[b]{0.8\linewidth}
\begin{lstlisting}[basicstyle=\fontsize{6.0}{7.5}\ttfamily]
{<@ebx = $\{s_{1}\}$@>}
{<@ebx = $\{s_{1}\}$@>, <@eax = $\{s_{1}\}$@>} 
{<@ebx = $\{s_{1}\}$@>, <@eax = $\{s_{1}+1\}$@>} 
{<@ebx = $\{s_{1}\}$@>, <@eax = $\{s_{1}+1\}$@>, <@ecx = $\{p\}$@>} 
{<@ebx = $\{s_{1}\}$@>, <@eax = $\{s_{1}+1\}$@>, <@ecx = $\{p\}$@>} 
{<@ebx = $\{s_{1}\}$@>, <@eax = $\{s_{1}+1\}$@>, <@ecx = $\{p\}$@>, <@edx = $\{p\}$@>} 
{<@ebx = $\{s_{1}\}$@>, <@eax = $\{\top\}$@>, <@ecx = $\{p\}$@>, <@edx = $\{p\}$@>} 
\end{lstlisting}
\end{minipage}
\end{lrbox}

\begin{figure*}[!t]
\subfloat[\scriptsize Sample Code.]{\label{subfig:sample-code}\usebox{\figa}}%
\subfloat[\scriptsize Modeling program states with logic formulas $l \in L$.]{\label{subfig:psl}\usebox{\figb}}\hspace{5pt}%
\subfloat[\scriptsize Modeling program states with \sas.]{\label{subfig:pss}\usebox{\figc}}%%

\caption{Execute assembly code with different program representations. Program secrets and
all the affected registers are marked as \textcolor{red}{red}
in \F~\ref{subfig:sample-code}. Program states at line 1 of \F~\ref{subfig:psl}
and \F~\ref{subfig:pss} represent the initial state. Here $\mathtt{k_{1}}$ is a
symbol exhibiting one piece of program secrets (e.g., the first element in a key
array), and $\mathtt{m_{1}}$ is a free symbol representing non-secret content of
unknown memory cells. $\mathtt{edi_{0}}$ is a symbol representing the initial
value of register \texttt{edi}. Symbol $s_{1}$, $p$, and $\top$ defined in \SAS\
stand for one piece of secret, entire non-secret information and all the program
information, respectively (see \S~\ref{sec:sas}).}
\label{fig:motivation-modeling}
\end{figure*}

\noindent \textbf{Abstract Interpretation.}~Abstract interpretation is a
well-established framework to perform sound approximation of program
semantics~\cite{cousot77abstract}. Considering that program concrete semantics
forms a value domain $\mathbf{C}$, \ai\ maps $\mathbf{C}$ to an abstract (and
usually more concise) representation, namely, an abstract domain $\mathbf{A}$.
The design of the abstraction is usually based on certain program properties of
interest, and (possibly infinite) sets of concrete program states are usually
represented by one abstract state in $\mathbf{A}$. To ensure termination,
abstract states could form a lattice with a finite height, and computations of
program concrete semantics are mapped into operators over the abstract elements
in $\mathbf{A}$.

The abstract function ($\alpha$) and concretization function ($\gamma$) need to
be defined jointly with an abstract domain $\mathbf{A}$. Function $\alpha$ lifts
the elements in $\mathbf{C}$ to their corresponding abstract elements in
$\mathbf{A}$, while $\gamma$ casts an abstract value to a set of 
values in $\mathbf{C}$. To establish the correctness of an abstract
interpretation, the abstract domain and the concrete domain need to form a
Galois connection, and operators defined upon elements in an abstract domain are
required to form the local and global soundness notions~\cite{cousot77abstract}.

\noindent \textbf{Cache Structure and Cache-Based Timing Channels.}~A cache is a
fast on-CPU data storage unit with a very limited capacity compared to the main
memory. Caches are usually organized to be set-associative, meaning that the
storage is partitioned into several disjoint sets while each set exclusively
stores data of a particular part of the memory space. Each cache set can be
further divided into smaller storage units of equal size, namely cache lines.
\revise{}{Given the size of each cache line as $2^L$ bytes, usually the upper
  $N-L$ bits of a $N$-bit memory address uniquely locate a cache line where the
  data from that address will be temporally held.}

When the requested data is not found in the cache, the CPU will have to fetch
them from the main memory. This is called a cache miss and causes a significant
delay in execution, compared with fetching data directly from the cache.
Therefore, an attacker may utilize the timing difference to reveal the cache
access pattern and further infer any information on which this pattern may
depend.

\noindent \textbf{Threat Model.}~As mentioned above, some bits of a memory
address can be directly mapped to cache lines being visited, which potentially
enables information leakage via \textit{secret-dependent memory traffic}. In
this research, attackers are assumed to share the same hardware platform with
the victim program, and therefore are able to ``probe'' the shared cache state
and infer cache lines being accessed by the victim. As illustrated in
\F~\ref{fig:threat-model}, \revise{}{our threat model assumes that the attacker
  can observe the address of every memory access, expect for the low bits of
  addresses that distinguish locations in the same cache line.} Overall, by
tracking the secret-dependent cache access of the victim, several bits of
program secrets (w.r.t. entropy) could be leaked to the attacker.

We note that this threat model indeed captures most infamous and practical side
channel attacks~\cite{he2017side}, including prime-and-probe~\cite{Osvik+:2006},
flush-and-reload~\cite{Yarom14}, and prime-and-abort~\cite{disselkoen2017prime},
which are designed to infer the cache line access by measuring the latency of
the victim program or attacker's program at different scales and for different
attack scenarios. Additionally, while this threat model is aligned with many
existing side channel detection
works~\cite{wang2017cached,Doychev16,gorka2017side,jan2018microwalk,bortzman2018casym},
novel techniques proposed in this work enable us to perform scalable static
analysis and reveal much more information leaks of real-world
cryptosystems.\footnote{Consistent with this line of research,
  \caches\ pinpoints information leaks in cryptosystems where cache access
  depends on secrets. Cryptosystem developers can fix the code with information
  provided by \caches. \revise{}{Contrarily, the exploitability of the leaks
    (e.g., reconstruct the entire key by recovering half bits of the RSA private
    key~\cite{boneh1998attack}) is beyond the scope of this work.}} In addition,
while this model is relatively stronger than those based on cache
status~\cite{cacheaudit}, cache status at any point can be determined by
analyzing the accessed cache units in execution.

\underline{\revise{Control-Flow Side Channels.}{}} \revise{Some side channels are caused by
secret-dependent control flows: different secrets detour the execution at
branches, which may lead to variants of executed instructions or execution time.
In general, control-flow side channels are caused by program behavior variants
between different branches, and discovering such differences requires analyzing
two branches simultaneously. Benefitting from the whole-program static analysis
(see \S~\ref{sec:design}), \caches\ is perfectly suitable for analyzing
multi-branch issues. Although this work focuses on subtler leakages via
cache access and does not detect control-flow side channels (since the latter case
has been well studied~\cite{Agat00, Hedin:Sands:2005, Molnar:2005,Barthe:2006,
  coppens2009practical,multirun}), we emphasize that it is straightforward to
pinpoint control-flow side channels by extending \caches.}{}

\begin{figure}[t]
\centering
\includegraphics[width=0.60\linewidth]{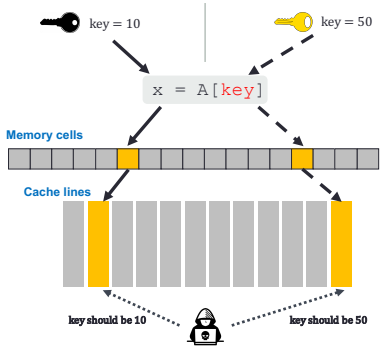}
\caption{The threat model. Different secrets lead to the access of different
  cache lines at one particular program point, which may leak secret information
  to the attackers by indirectly observing cache line access variants. At least
  one bit information (w.r.t. entropy) could be leaked in this example. }
\label{fig:threat-model}
\end{figure}

\section{Motivation}
\label{sec:motivation}
In general, capturing cache-based side channels requires modeling program
secret-dependent semantics (we will discuss the connection between program
semantics and cache access in \S~\ref{sec:check}). In this section we
begin by discussing two baseline approaches to modeling program semantics;
the limitations of both approaches naturally motivate the design of our novel
abstract domain.

\noindent \textbf{Modeling Program Semantics with Logic Formulas.}~An intuitive
way is to represent program concrete semantics with logic formulas (as in a typical
symbolic execution approach~\cite{wang2017cached}), and perform whole-program
static reasoning until a fixed point is reached. The overall workflow exhibits a
typical dataflow analysis procedure, and upon termination, each program point
maintains a program state that maps variables (i.e., registers, CPU flags, and
memory cells) to sets of formulas representing the possible values each variable
may hold regarding any execution paths and inputs. For ease of presentation,
we name the value domain formed by logic formulas $l$ as logic domain $L$.

An example is given in \F~\ref{fig:motivation-modeling}, where we model the
execution of instructions with logic formulas (\F~\ref{subfig:psl}). 
While the overall approach will precisely model program semantics,
some tentative studies indicate its low scalability. Indeed, we implement this
approach and evaluate it with two real-world cases: the AES and RSA
implementations of OpenSSL. We report that both tests are unable to terminate 
(evaluation results are given in \S~\ref{sec:evaluation}). In summary, the
analysis is impeded for the following reasons:

\begin{itemize}[noitemsep,topsep=0pt]
  \item Typically, more and more memory cells would be modeled throughout the
    analysis, and for each variable, its value set (i.e., set of formulas) would 
    also continue to increase. Therefore, the memory usage
    could become significant to even unrealistic for real-world cases.
  \item Program states could be continuously updated within loop iterations. In
    addition, ``loops'' on the call graph (e.g., recursive calls) could exist in
    cryptosystems as well and complicate the analysis.
\end{itemize}

We implement algorithms to detect loop induction variables~\cite{appel2004tiger}
considering both registers and stack memories. Identified induction variables are lifted
into a linear function of symbolic loop iterators; operations on induction
variables are ``merged'' into the linear function, thereby leading to a stable
stage. While the simpler AES case terminated when we re-ran the test, the RSA
case still yielded a ``timeout'' due to the practical challenges mentioned above
(see results in \S~\ref{subsec:evaluation-results}).

\noindent \textbf{Modeling Program Semantics with Free Symbols.}~Another
``baseline'' approach is to model program semantics in a permissive way. That
is, we introduce two free symbols: one for any public information and the other
for secrets. Any secret-related computation outputs the same secret symbol,
while others preserve the same public symbol. Note that this is comparable to
static taint tracking, where each value is either ``tainted'' or ``untainted''.
Despite its simplicity, our tentative study reveals new hurdles as follows:

\begin{itemize}[noitemsep,topsep=0pt]
  \item Memory tracking becomes pointless. Every memory address becomes
    (syntactically) identical because it holds the same public or secret
    symbol. Therefore, a memory store could overturn the entire memory space.
\item Even if memory addresses are tracked in a more precise way, representing
  any secret value and their dependencies coarsely as one free secret symbol yields many
  false positives (since secret-dependent memory
  accesses do not necessarily lead to vulnerable cache accesses; see
  our cache modeling in \S~\ref{sec:check}). Tentative tests of the AES case
  report a false positive rate of 20\% (8 out of 40) due to such modeling. In
  contrast, our novel program modeling yields no false positive when testing
  this case (see \S~\ref{sec:evaluation}).
\end{itemize}

\noindent \textbf{Motivation of Our Approach.}~This paper presents a novel abstract domain
that enables \ai\ of
large-scale cryptosystems in the wild. Our observation is that imprecise tracking of
secrets impedes the accurate modeling of cache behaviors
(cache access modeling is discussed in \S~\ref{sec:check}). 
Nevertheless, tracking too much information, such
as modeling whole-program semantics with logic formulas, could face scalability issues when analyzing
real-world cryptosystems due to various practical challenges.

Our study of real-world cryptosystems actually reveals an interesting and intuitive finding.
That is, program secrets and their dependencies usually exhibit at a very
\textit{small portion} of program points, and even in such secret-carrying
points, most variables maintain \textit{only public information}. It should be noted that in
common scenarios non-secret information is not critical for modeling cache-based
timing channels.
Hence, based on our observation, we promote a novel abstract domain that is particularly
designed to \textit{model the secret-dependent semantics of real-world crypto systems}.
Our abstract domain delivers fine-grained tracking of 
program secrets and their dependencies
with different identifiers for each piece of secret information,
while performing coarse-grained tracking of other public values to effectively enhance scalability.

%% 
%% %-- design
\section{Secret-Augmented Symbolic Domain}
\label{sec:sas}

This section presents the definition of our abstract domain \sas. We formally
define each component following convention, including the concrete semantics,
the abstract domain, and the abstract transfer functions. We also prove that the
computations specified in \sas\ correctly over-approximate concrete semantics.
% EXTENDED PAPER
% Due to space limitations, we highlight only certain necessary components to make
% the paper self-contained. We refer readers to the extended version of this paper
% for more details~\cite{sastr} .
Due to space limitations, we highlight only certain necessary components in
this section. Please refer to Appendix~\ref{sec:formalization-proof} for a
complete formulation.

\subsection{Abstract Values}
\label{subsec:formula-sas}
\begin{figure}[!t]
$
\begin{array}{l@{\ }l@{\ }c@{\ }l}
\mathsf{Literal} & n &\in &\mathbb{Z} \\
\mathsf{OP_1} & \oplus &::= &+ \mid - \\
\mathsf{OP_2} & \otimes &::= &\times \mid \div \mid \% \mid \mathrm{AND} \mid \mathrm{OR} \mid \mathrm{XOR} \mid \mathrm{SHIFT} \\
\mathsf{Atom} & t &::=\; &\top \mid p \mid s_{i} \mid n \\
\mathsf{Expression} & exp &::=\; &t \mid t \oplus exp \mid t \otimes exp \\
\mathsf{Formula} & f &::=\; &e \mid exp \mid e \oplus exp \\
\end{array}
$
\caption{Syntax of abstract value.}
\label{fig:syntax}
\end{figure}

We start by defining abstract values $f \in \mathbf{AV}$ (soon we will show that
\sas\ is defined as the powerset of $\mathbf{AV}$). Comparable to ``symbolic
formulas'' in symbolic execution, $f$ combines symbols and constants via
operators. Elementary symbols in each abstract value are defined as follows:

\begin{itemize}[noitemsep,topsep=0pt]
  \item $p$: a unique symbol representing all the program public information.
  \item $s_{i}$: a symbol representing a piece of program
    secrets; for instance, the i-th element of a secret array.
  \item $e$: a unique symbol representing the initial value of the
    x86 stack register \texttt{esp}.
\end{itemize}

While only one free symbol $p$ is used to represent any and all
unknown non-secret information (e.g., initial value $\mathtt{edi_{0}}$ of
register \texttt{edi} in \F~\ref{subfig:psl}), we retain finer-grained
information about program secrets. Multiple $s_{i}$ are generated, and are
mapped to different pieces of program secrets (e.g., a symbol $s_{1}$
representing $\mathtt{k_{1}}$ in \F~\ref{subfig:pss}). Therefore, different
$s_{i}$ symbols are \textit{semantically different}, meaning each of
them stands for different secrets.

\noindent \textbf{Syntax.}~The syntax of a core of abstract values $f \in
\mathbf{AV}$ is defined in \F~\ref{fig:syntax}. $\mathsf{Literal}$ specifies
that concrete data is preserved in $\mathbf{AV}$. $\mathsf{OP_1}$ and
$\mathsf{OP_2}$ explain typical operators in $\mathbf{AV}$.
\revise{}{$\mathsf{Atom}$ includes symbols and literals}, among which $\top$
(top) is the abstraction of any concrete value. $\mathsf{Expression}$ and
$\mathsf{Formula}$ additionally define expressions and formulas. Note that stack
memory expands linearly in the process address space, and stack register
\texttt{esp} at any program point shall hold a value which adds or subtracts an
offset from the initial value of \texttt{esp} (i.e., $e$). In the syntax
definition, stack memory offsets could be a constant or an $exp$.

Since the symbol $\{s_i\}$ represents the secrets, which our
analysis intends to keep track of, the formulas that contain these symbols
usually need to be specially treated.
We denote this infinite set of special formulas by $\mathbf{AV}_{s}$, where
$\mathbf{AV}_{s} = \{f \in \mathbf{AV} \mid \exists s\in \{s_i\}~\text{s.t.}~s~\text{occurs in}~f\}$.

\noindent \textbf{Reduction of Abstract Formulas.}~We now define the 
operator semantics of abstract value $f \in \mathbf{AV}$. For any operator
$\odot\in \{\oplus\}\cup \{\otimes\}$, we define a reduction rule $T_\odot:
\mathbf{AV} \times \mathbf{AV} \rightarrow \mathbf{AV}$ such that $\llbracket
a_1\odot a_2 \rrbracket = T_\odot(\llbracket a_1 \rrbracket, \llbracket a_2 \rrbracket)$ for any $a_1, a_2 \in \mathbf{AV}$, where
$\llbracket \cdot \rrbracket$ denotes the semantics.
We then define $T_\odot(a_1, a_2)$ as follows:
\[
\scalemath{1.0}{
T_\odot(a_{1}, a_{2}) = 
   \begin{cases}
      \top        & \textrm{if}\; a_{1} = \top\;\textrm{or}\;a_2=\top\\
      \top        & \parbox[t]{.8\linewidth}{else if $a_{1} = p\land a_2\in \mathbf{AV}_{s}$ or\\ $a_{2} = p\land a_1 \in \mathbf{AV}_{s}$}\\
      p           & \parbox[t]{.8\linewidth}{\textrm{else if} $a_{1} = p \land a_2 \notin \mathbf{AV}_{s}$ or\\ $a_{2} = p\land a_1\notin \mathbf{AV}_{s}$}\\
      a_{1}\odot a_{2} & \textrm{otherwise}\\
   \end{cases}
}
\]

Essentially, the first three cases perform reasonable over-approximation on
$f \in \mathbf{AV}$ with different degrees of abstraction. The last case would apply if no
other case can be matched; indeed similar to symbolic execution, 
most operations on $f \in \mathbf{AV}$ 
``concatenates'' abstract values via abstract operators following this rule. For
the implementation, we also implement ``constant folding'' rules for
operands of concrete data; such rules help the reduction of
stack increment and decrement operations.

Since abstract interpretation typically needs to process sets of facts, we
extend $T_{\odot}$ so that it can be applied to pairs of subsets of abstract
values $f \in \mathbf{AV}$, where
\begin{align*}
  \forall X, Y \in \powerset{\mathbf{AV}},&\forall \odot \in \{\oplus\}\cup\{\otimes\}, \\
  &T_\odot(X, Y)  = \{ T_\odot(a, b) \mid a \in X, b \in Y\}
\end{align*}

\subsection{Abstract Domain}
\label{subsec:abs-lattice}
Naturally, each element in \sas\ represents the possible values 
that a program variable may hold; therefore each element in \sas\ forms a 
set of abstract values. That is,
\begin{definition}
Let $\mathbf{AV}$ be the set of abstract values. Then
\[
    \textbf{\rm\bf SAS} = \powerset{\mathbf{AV}}
\]
forms a domain whose elements are subsets of all valid abstract values.
\end{definition}

\begin{claim}
\sas\ forms a lattice, with the top element $\top_{\textbf{\rm\bf SAS}}$, bottom
element $\bot_{\textbf{\rm\bf SAS}}$ and a join operator $\sqcup$ defined over
$\textbf{SAS}$.
\end{claim}
For further discussion and definition of \sas, please refer to
Appendix~\ref{sec:formalization-proof}.
% EXTENDED VERSION
% We specify the $\top$, $\bot$, and join operator $\sqcup$ in
% Appendix~\ref{sec:lattice}.
% We bound the size of each element in \sas\ with a maximal number $N$ (therefore
% the lattice has a finite height) and give corresponding evaluations in
% Appendix~\ref{sec:bou-n}. For further discussion, see the
% extended version~\cite{sastr}.

\underline{Example.}~~\F~\ref{fig:motivation-modeling} explains typical
computations within \sas. We present a set of abstract values for each register
in \F~\ref{subfig:pss}. While the computations over secret symbol $s_{i}$ are
precisely tracked (line 3 in \F~\ref{subfig:pss}), the computations over $p$
preserve this symbol (line 5 in \F~\ref{subfig:pss}), and the computations between
abstract value $a \in \mathbf{AV}_{s}$ and $p$ lead to $\top$ (line 7 in
\F~\ref{subfig:pss}).

\begin{figure*}
\centering
\includegraphics[width=0.95\linewidth]{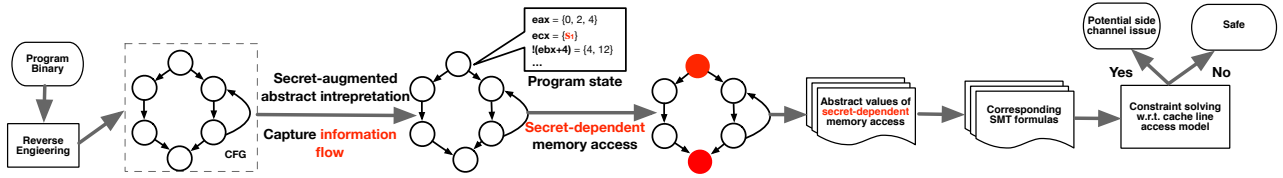}
\caption{The overall workflow of \caches.}
\label{fig:caches}
\end{figure*}

\section{Pinpointing Information Leakage Sites}
\label{sec:check}
Upon the termination of static analysis, we check abstract memory addresses of each
memory load and store instruction. When a secret-dependent address $a \in
\mathbf{AV}_{s}$ is identified, its corresponding memory access instruction is
considered to be ``secret-dependent.'' We then translate each secret-dependent
address $a$ into an SMT formula $f$ for constraint checking (this translation is
discussed in \S~\ref{subsec:design-smt}).

In this research, we adopt a cache model proposed by the existing work to check each
secret-dependent memory access~\cite{wang2017cached}. Given an SMT formula $f$
translated from $a \in \mathbf{AV}_{s}$ that represents a memory address,
\caches\ checks potential cache line access variants by solving the
satisfiability of the following predicate:

\begin{equation} \label{eq:formula}
  f \rs L \neq f[s'_{i} / s_i] \rs L
\end{equation}

As discussed in \S~\ref{sec:background}, assuming the cache has the line size of
$2^L$ bytes, for a memory address of $N$ bits, the upper $N-L$ bits map a memory
access to its corresponding cache line access. In other words, the upper $N-L$
bits decide which cache line the upcoming memory access would visit. Therefore,
for an SMT formula $f$ derived from $a \in \mathbf{AV}_s$, we right shift $f$ by
$L$ bits, and the result $f\rs L$ indicates the cache line being accessed.
Furthermore, by replacing each $s_{i}$ with a fresh secret symbol $s'_{i}$, we
obtain $f[s'_{i} / s_{i}] \rs L$. As a standard setting, the cache line
size is assumed to be 64 ($2^6$) in this work; therefore, we set $L$ as 6.

The constructed constraint checks whether different secrets ($s_{i}$ and
$s_{i}'$) can lead to the access of different cache lines at this memory access.
Recall the threat model shown in \F~\ref{fig:threat-model}, the existence of at
least one satisfiable solution reveals potential side channels at this point.
From an attacker's perspective, by (indirectly) observing the access of
different cache lines, a certain number of secrets could be leaked to
adversaries. \revise{}{In addition, while this constraint assumes that accesses
  to different offsets within cache lines are indistinguishable,
  Constraint~\ref{eq:formula} can be extended to detect related issues. For
  example, information leaks which enable cache bank attacks can be detected by
  changing $L$ from 6 to 2~\cite{yarom2016cachebleed}.}

\section{Design of \caches}
\label{sec:design}
We now present \caches, a tool that uses precise and scalable static analysis to
detect cache-based timing channels in real-world cryptosystems.
\F~\ref{fig:caches} presents the workflow of \caches. Given a binary as the
input, \caches\ first leverages a reverse engineering tool to recover the
assembly code and the control flow structures from the input. The assembly
instructions are further lifted into \textit{platform-independent}
representations before analysis. Technical details on reverse engineering are
discussed in \S~\ref{sec:implementation}.

Given all the recovered program information, we initialize the abstract program
state at each program point. In particular, we update the
initial state of certain program points with one or several ``secret'' symbols to 
represent program secrets (e.g., a sequence of memory cells) 
when the analysis starts. We then perform \ai\ on the whole
program until the fixed point in \SAS\ is reached.

\Ai\ reasons the program execution within \sas\ (\S~\ref{subsec:design-ai}), 
and as mentioned, the proposed abstract domain performs fine-grained 
tracking of program secret-related semantics while maintaining only coarse-grained
public information for scalability. The entire analysis framework forms a
standard worklist algorithm, where each program point maintains its own program
state mapping variables to sets of abstract values (\S~\ref{subsubsec:program-state}).

We define information flow rules to propagate secret information
(\S~\ref{subsec:design-information}) in our context-sensitive and
interprocedural analysis (\S~\ref{subsec:context-sensitivity}). Upon the termination of
analyzing one function, we identify secret-dependent memory
accesses and translate corresponding memory addressing formulas into SMT
formulas (\S~\ref{subsec:design-smt}) and check for side channels
(\S~\ref{sec:check}).

\noindent \textbf{Application Scope.} In this research we design our abstract
domain \sas\ to analyze assembly code: program memory access can be accurately
uncovered by analyzing assembly code, thus supporting a ``down-to-earth''
modeling of cache behavior (see \S~\ref{sec:check}).

To assist the analysis of off-the-shelf cryptosystems and capture information
leaks in the wild, we designed \caches\ to directly process binary executables,
including stripped executables with no debug or relocation information. We rely
on reverse engineering tools to recover program control structures from the
input binary, and further build our analysis framework on top of that (see
\S~\ref{sec:implementation}).

\subsection{Abstract Interpretation}
\label{subsec:design-ai}
In this section, we discuss how the proposed abstract
domain \sas\ is adopted in our tool, and elaborate on several key points to 
deliver a practical and scalable analysis.

\subsubsection{Initialization} 

Before the analysis, we first initialize certain program points with
$\{s_{i}\}$ to represent the initial secret program information; for
the rest their corresponding initial states are naturally defined as
$\{ \}$, or $\{e\}$ for the stack register \texttt{esp}.

Program secrets are maintained in registers or memory cells (e.g.,
on the stack) during execution. Since \caches\ is designed to directly
analyze binary code, we must first recognize the location of program secrets.
We reverse-engineer the input binary and mark
the location of secrets manually. Once the locations of secrets are flagged,
we update the initial value set of corresponding variables (i.e., registers
or memory cells) with a secret symbol $s_{i}$. Additionally, while ``manual
reverse engineering'' is sufficient for studies in this research,
it is always feasible to leverage automatic techniques~\cite{calvet2012aligot}
to search for secrets directly from executables or secret-aware compilers to
track secret locations when source code is available. We leave this to future work.

In addition, since program secrets may be stored in a region of sequential
memory cells (e.g., in an array), we create another identifier named $u$ to
represent the base address of the secret memory region. While $u$ itself is
treated as \textit{public information}, we specify that memory loading from $u$
will obtain program secrets; that is, we introduce one $s_{i}$ for
each memory loading via $u$.

\subsubsection{Program State}
\label{subsubsec:program-state}

\begin{figure}[t]
\centering
\includegraphics[width=0.85\linewidth]{fig/memory-model.pdf}
\caption{A sample program state lookup table. 
  $\mathtt{esp_{0}}$, $\mathtt{eax_{0}}$ and $\mathtt{esi_{0}}$ in the ``key''
  and ``value'' entries are symbols representing the register initial values.
  Symbol $\mathtt{!}$ means pointer dereference, for example $\mathtt{!(eax)}$
  means memory loading from address stored in \texttt{eax}. Lookup tables at
  each program point are the major factor for memory usage, and we optimize the
  design by replacing ``key'' and ``value'' columns with ``new key'' and ``new
  value'' columns, respectively (see \S~\ref{subsubsec:program-state}). Hence,
  shaded boxes are eliminated in \caches.}
\label{fig:program-state} 
\end{figure}

At each program point, \caches\ maintains a lookup table that maps variables
to value sets; each value set $S \in \textbf{SAS}$ consists of abstract values $f \in \mathbf{AV}$
representing possible values of a variable at the current program point.
While the ``lookup table'' is an essential piece of any non-trivial
analysis framework, our study has shown that 
naively-designed program state representations in \caches\ could 
consume significant amounts of computing resources and impede the analysis of non-trivial
programs. Thus, at this step we seek to design a \textit{concise}
and \textit{practical} representation of program states. For the rest of this
section, we first explain a ``baseline'' implementation of the lookup table,
and further discuss two refinements.

\noindent \textbf{The ``Baseline'' Approach.}~A sample lookup table is shown in \F~\ref{fig:program-state} (the ``key'' and
``value'' columns), where each table maps registers and memory addressing
formulas to their corresponding sets for logic formulas $l \in \mathbf{L}$. When
it encounters a memory access instruction, \caches\ computes the memory
addressing formula and searches for its existence in the lookup table. (This
requires some ``equivalence checking''; the details will be explained
in \S~\ref{par:memory-model}). If the search identifies an entry in the
table, \caches\ extracts or updates the content of that entry accordingly.
Consider the example in Listing~\ref{lst:sample}, where we first store
concrete data 14 into memory via address stored in \texttt{eax}, and further
load it out into \texttt{ebx}.

\begin{lstlisting}[caption={Sample instructions.},label={lst:sample},basicstyle=\ttfamily\small,xleftmargin=-.017\textwidth,numbers=none]
                  <@\code{store}@> eax, 14
                  <@\code{load}@>  ebx, eax
\end{lstlisting}

\noindent Knowing that value set of \texttt{eax} is 8+$\mathtt{k_{2}}$*4 (third entry
in \F~\ref{fig:program-state}), the first instruction creates an entry from
address 8+$\mathtt{k_{2}}$*4 to {14} (\F~\ref{fig:program-state} shows program
states after executing the first instruction). Further memory loading would
acquire the value set in \texttt{eax}, and then reset the entry
of \texttt{ebx} with {14} in the state lookup table of the second instruction.

Reading from unknown registers and memory locations would introduce symbols of different
credentials regarding our information flow policy (see \S~\ref{subsec:design-information} for details).

\noindent \textbf{Optimization of Table Values.}~While the precisely
tracked logic formulas $l \in \mathbf{L}$ result in 
notable computing resource usage (\S~\ref{sec:motivation}), 
the proposed abstract domain \SAS\ (\S~\ref{sec:sas}) enables 
succinct representation of abstract values. As shown in \F~\ref{fig:program-state}, the ``value'' column of the
lookup table is now replaced by the ``new value'' column. Consequently, memory
consumption is considerably reduced (details are reported in our evaluation section).

\noindent \textbf{Optimization of Table Keys.}~Since only abstract values are traced in \sas, the ``key'' column can be updated
into a compact representation as well. However, using symbols such as $p$ as the
key will result in an imprecise modeling of memory addresses.

\caches optimizes the ``key'' column in the following way. For most memory
related entries, instead of using abstract memory addressing formulas, memory access
expressions (expressions of registers and constant offsets) are used as keys. For example, the
first instruction in Listing~\ref{lst:sample} uses memory access
expression (i.e., ``eax'') instead of its abstract value $8+s_{2}*4$
for memory lookup. Hence, when analyzing the store instruction, \caches\
creates (or updates) an entry in the lookup table, which uses !(eax) as the key
(here symbol ``!'' means pointer dereference). Likewise, for memory load, 
!(eax) will be used to look up the program state table.
To safely preserve lookup entries via expressions, whenever the value set of a
register is reset in the analysis, entries in the table are
deleted if their keys are memory access expressions via the newly-updated
register. 

Nevertheless, since stack register \texttt{esp} is frequently
manipulated to access stack memory, we preserve abstract addressing formulas via 
$e$ to keep track of stack memory access precisely (e.g., the last entry in
the ``new key'' column of \F~\ref{fig:program-state}).

\subsubsection{Order of Program State}
\label{par:order-program-state}
When multiple program states are possible for a program point, it is important
to define the ``merge'' operation in abstract interpretation. Such an operation
can be defined based on the least upper bound operation $\sqcup$ of \sas\
(recall that \sas\ forms a lattice (\S~\ref{subsec:abs-lattice})).

Given lookup tables $T_{1}$ and $T_{2}$ representing two program states,
$T_{1}\sqcup T_{2}$ is defined as the following table, say $T_3$:
\begin{itemize}[noitemsep,topsep=0pt]
  \item $T_3$'s key set is the union of the key sets of $T_{1}$ and $T_{2}$;
  \item For each key $k$ in $T_3$, $T_3[k]=T_1[k]\sqcup T_2[k]$ (assuming
  $T_1[k]$ or $T_2[k]$ is an empty set if $k$ is not in the table). 
\end{itemize}

Moreover, the least upper bound of program states entails the partial order of any two
program states: $T_1 \sqcup T_2 = T_1 \leftrightarrow T_2 \subseteq T_1$.
\revise{Naturally, $T_{2}$ is treated as less than or equal to $T_{1}$ if the
reverses of these two conditions hold. Otherwise, no order is defined between
these two lookup tables.}{}

\subsubsection{Memory Model}
\label{par:memory-model}
When encountering a memory load and store operation, we must decide which
memory cell is accessed by tracing the memory address. However,
considering \caches\ models program semantics with abstract values, a
memory address can usually contain one or several symbols instead of only
concrete data. Therefore, policies (i.e., a ``memory model'') are usually
required to determine the location of an accessed memory cell given a symbolic
pointer.

When defining the abstract semantics within \sas\
%(see our technical report~\cite{sastr}), 
% EXTENDED paper
(see Appendix~\ref{sec:formalization-proof}), 
we assume the assistance of a sound points-to 
analysis module as pre-knowledge. Nevertheless, 
finding such a convenient tool for assembly code of large-scale cryptosystems is 
quite difficult in practice. We have tried several popular  ``end-to-end'' binary analysis 
platforms that take an executable as 
the input and perform various reverse engineering campaigns including points-to analysis; 
nevertheless, so far we cannot find a practical and robust solution to our scenario.

Therefore, we aim to implement a rigorous memory model by solving 
the \textit{equality constraints} of two abstract formulas. However,
tentative tests show that such a memory model may
lose considerable precision in terms of reasoning symbolic pointers and may also not be scalable enough.
On the other hand, since keys in the memory lookup table are
formulas of $e$ (for stack pointers; recall that $e$ represents the initial value of \texttt{esp}) or memory access expressions
(for other pointers), the current implementation of \caches\ rigorously reasons on
the equality constraints if abstract values are composed of 
$e$ and concrete offsets, which is indeed often the case in analyzing assembly code.
For the rest (e.g., $e$ and symbolic offsets), 
we reason on the syntactical equivalence of memory access expressions. 
This design tradeoff may incorrectly deem equivalent symbolic pointers inequivalent (due to 
the symbolic
``alias'' issue) but not vice versa. Experiments show that this memory model is
efficient enough to handle real-world cryptosystems while being
promisingly accurate.

\subsection{Information Flow}
\label{subsec:design-information}
Considering that information leaks detected in this research are derived from
secret-dependent memory accesses, \caches\ keeps track of the secret program
information flow throughout the analysis. In this section we elaborate on cases
where the secret information can be propagated.

\noindent \textbf{Variable-Level Information Flow.}~The explicit information
flow is modeled in a straightforward way. Since variables (i.e., registers,
memory cells, and CPU flags) are modeled as abstract formulas, high credential
information (exhibited as abstract value $f \in \mathbf{AV}_{s}$) would
naturally ``flow'' among variables during the computations. 
Moreover, reading from unknown variables (those with empty value sets) generates
a symbol $p$ as a proper over-approximation.

\noindent \textbf{Information Flow via Memory Loading.}~By knowing the underlying memory
layout, it could be feasible to infer table lookup indexes by observing the
memory load outputs, hence leaking table indexes of secrets to attackers. It should
be noted that such cases are not rare in real-world cryptosystems,
where many precomputed data structures are deployed in the memory to speed up
computations. Thus, we define policies to capture information flow
through memory loading. To do so, for a load operation, whenever the value
sets of its base address or memory offset include formula $f \in \mathbf{AV}_{s}$, 
\caches\ assigns the memory content to a fresh $s_{i}$, 
indicating secret information could have potentially
propagated to the value being read. In contrast, when loading from unknown 
memory cells (memory cells of empty value sets) via non-secret addresses, we create
a $p$ to update the memory reader.

While most memory addressing formulas refer to specific locations in the memory,
symbols $p$ and $\top$ represent any program (public) information. To safely
approximate memory read access via $p$ and $\top$, \caches\ assigns $\top$ to
the memory reader. In case a memory storing is via symbol $p$ or $\top$, we
terminate the analysis since this would rewrite the whole memory space.
Additionally, we note that memory loading and storing via $\top$ are considered
to be information leaks as well since $\top$ implies that a variable has certain
residual secrets (see \S~\ref{subsec:design-smt}).

\subsection{Interprocedural Analysis}
\label{subsec:context-sensitivity}
Our interprocedural analysis is context-sensitive. \revise{}{We build a classic
  function summary-based interprocedural analysis framework, where a summary
  $(\langle f', i\rangle, o)$ of a function call towards $f$ maps the calling
  context $\langle f', i\rangle$ ($f'$ is the caller name and $i$ is the input)
  to the function call output $o$.} \caches\ maintains a set of summaries for
each function $f$, and for an upcoming call of $f$, its calling context is first
checked regarding the existing summaries of $f$. In case the context is a subset
of any recorded entries (the partial order of calling context is derived from
the order of program states defined in \S~\ref{subsubsec:program-state}), the
analysis will be skipped and we directly return the corresponding output.

To recover the function inputs, we inquire the employed reverse engineering
platform (details are given in \S~\ref{sec:implementation}) to obtain the number
of parameters the approaching function has. According to the calling convention
of 32-bit x86 platforms, a memory stack is used to store function parameters;
thus, we construct stack memory addresses of function parameters and acquire the
value set of each parameter from the program state lookup table at the call
site. If some memory cells of function parameters are absent, symbol $p$ is used
as an over-approximation.
To compute the output information of a function, we join program states at every
return instruction when the analysis of the target function terminates, which
over-approximates the function return states.

\subsection{Translating Abstract Values into SMT Formulas}
\label{subsec:design-smt}
As noted earlier (\S~\ref{sec:check}), cache-access side channels are summarized
into SMT constraints. Upon the termination of analyzing each function, we
identify secret-dependent memory addresses $a \in \mathbf{AV}_{s}$ and build the
side channel constraints.
SMT solvers are used to solve the constraint and check whether different secrets
can lead to cache line access variants. Nevertheless, while many works to date
leverage symbolic execution to construct SMT formulas, here we reason on program
states within \sas. Therefore, before constraint checking, we first translate
abstract formulas into SMT formulas.

Each abstract formula is maintained as a symbolic ``tree'' in \caches,
where tree leaves are symbols and concrete data while other nodes are operators.
At this step, we translate each leaf on the tree into a bit vector implemented by 
a widely-used SMT solver---Z3~\cite{DeMoura:2008:ZES:1792734.1792766}; a bit vector
would be instantialized with a numeric value if it was derived from a constant.
In addition, we translate abstract operators on the tree into bit vector operations in Z3.
Hence, an abstract formula tree would be reduced bottom-up into an SMT formula.

\noindent \textbf{Translate Secret Symbols into Unique Bit Vectors.}~As noted earlier,
$s_{i}$ symbols are semantically different, each of
which represents different pieces of secrets. For the implementation, we assign a
unique id for each newly-created $s_{i}$ symbol, which further leads
to the creation of unique bit vectors at this step. In contrast, $p$ (and $e$) symbols 
are transformed into identical bit vectors.

\noindent \textbf{Memory Access via $\top$.}~It is easy to see that $\top$
implies that a variable has some residual secrets along with possibly public
information. Hence, in addition to checking the constructed SMT constraints with Z3,
memory accesses are flagged as vulnerable whenever their
corresponding addressing formulas are $\top$.

%% 
%% %-- implementation
\section{Implementation}
\label{sec:implementation}
\caches\ is mainly written in Scala (in 6,764 LOC; counted by CLOC~\cite{cloc}).
The tentative implementation (in 7,163 LOC), which models program semantics with
logic formulas (\S~\ref{sec:motivation}), is maintained as a separate ``branch''
of the code base\@.

Starting from an input binary code, the first step is to recover the assembly
program as well as control flow and call graphs. Here we employ a popular
reverse engineering tool, IDA-Pro (version 6.9) for the reverse engineering
task~\cite{IDAPro}. We use the default configurations of IDA-Pro to recover
assembly code and program control structures from the input executables.

\noindent \textbf{Assembly Lifting.}~Many existing binary analysis
infrastructures have provided facilities to lift x86 assembly code into a
high-level intermediate representation. Without reinventing the wheel, here we
employ a well-developed binary analysis platform \textsc{BinNavi}~\cite{binnavi}
to transform x86 assembly code into a \textit{platform-independent} intermediate
language, REIL~\cite{reil}. Our analysis procedures are built on top of the
recovered representations. In addition, for a formal definition of program
concrete semantics in terms of the REIL language, please refer to
%our technical report~\cite{sastr}.
% EXTENDED PAPER
Appendix~\ref{sec:formalization-proof}.

The current implementation of \caches\ analyzes ELF binaries on the x86
platform. Nevertheless, since REIL language is designed as
\textit{platform-independent}, there is no fundamental limitation for
\caches\ to analyze binaries of other formats or from other platforms (e.g., PE
binaries on Windows) as long as the assembly instructions can be translated into
REIL statements. As aforementioned, our current prototype focuses on 32-bit ELF
binaries since the state-of-the-art REIL lifter (BinNavi~\cite{binnavi}) does
not have an official support for 64-bit binaries. However, the proposed
technique shall be applicable to 64-bit binaries with no additional technical
hurdles.

\noindent \textbf{Recover x86 Memory Access Instructions from REIL
  Statements.}~As noted in \S~\ref{subsubsec:program-state}, we use memory
access expressions instead of address formulas as the key to simplify the memory
lookup. While the memory access expressions can be acquired by checking assembly
instructions, note that our analysis is launched on REIL IR; one memory access
instruction is extended into multiple IR statements. Hence, we perform def-use
analysis to ``collapse'' IR statements belonging to the same instruction and
recover the corresponding memory access expression.

\noindent \textbf{Critical Functions.}~\caches\ is designed to perform both
inter and intra-procedural analysis on any binary code component. For the
evaluations in this research, instead of starting from the program entry point,
analyses were launched on critical functions of cryptosystems that have become
the target for many previous attacks. Such critical functions are the starting
points of our interprocedural analysis, and we recursively discover all the
reachable functions on the call graph. As reported in our evaluation (see
\T~\ref{tab:accuracy}), these recursively collected functions usually form a
non-trivial subgraph on the program call graph. \revise{}{In addition, taking
  these critical functions as the starting points of \caches\ makes it easier to
  compare our findings with existing work.}

%% %% 
%% %-- evaluation
\section{Evaluation}
\label{sec:evaluation}
In contrast to many previous studies in which cache-based side channels are
detected from only simple cases, \caches\ is evaluated on several real-world
cryptosystems.
As reported in \T~\ref{tab:plan}, three cryptosystems are evaluated in this
research. OpenSSL and Libgcrypt are widely used cryptosystems on multi-purpose
computers, while mbedTLS is commonly adopted by embedded devices. Eight critical
functions are selected as the starting point of our analysis, which covers major
security-sensitive components in three crypto algorithm implementations: RSA,
AES, and ElGamal.

To prepare \caches\ inputs, we compile test programs shipped in each
cryptosystem and link with the corresponding libraries. All the crypto libraries
are written in C. We build each library and test program into a 32-bit ELF
binary on Ubuntu 12.04 with gcc compiler (version 4.6.3).

\begin{table}[t]
  \centering
  \caption{Cryptosystems analyzed by \caches.}
  \label{tab:plan}
  \resizebox{\linewidth}{!}{
  \begin{tabular}{c@{~}|@{~}c@{~}c@{~}c}
    \hline
    \multirow{2}{*}{\textbf{Implementation}} & \multirow{2}{*}{\textbf{Versions}} & \textbf{Analysis} & \textbf{Implement} \\
    & & \textbf{Starting Function} & \textbf{Which Algorithm} \\
    \hline
     Libgcrypt~\cite{libgcrypt} & 1.6.1, 1.7.3 & \_gcry\_mpi\_powm  & \multirow{2}{*}{RSA/ElGamal} \\
     OpenSSL~\cite{openssl}   & 1.0.2f, 1.0.2k & BN\_mod\_exp\_mont\_consttime & \\
    \hline
     mbedTLS~\cite{mbedtls}   & 2.5.1 & mbedtls\_mpi\_exp\_mod & RSA \\
    \hline
    OpenSSL~\cite{openssl} & 1.0.2f, 1.0.2k & \_x86\_AES\_decrypt\_compact & \multirow{2}{*}{AES} \\
    mbedTLS~\cite{mbedtls} & 2.5.1 &  mbedtls\_internal\_aes\_decrypt & \\
    \hline
  \end{tabular}
  }
\end{table}

\begin{table*}[t]
  \caption{Evaluation result overview. We compare the identified information
    leakage sites by \caches\ with a recent research
    (\cachediff~\cite{wang2017cached}), and we report \caches\ can identify all
    the leakage sites reported by \cachediff. A summary of all leaks can be
    found in \T~\ref{tab:sec-dep-cache-2} in the appendix.}
    % EXTENDED PAPER
    % found at the extended version of this paper~\cite{sastr}.}
  \label{tab:accuracy}
  \centering
  \resizebox{\linewidth}{!}{
\begin{tabular}{c@{~}|@{~}c@{~}|@{~}c@{~}|@{~}c@{~}|@{~}c@{~}|@{~}c@{~}|@{~}c@{~}|@{~}c@{~}|@{~}c||@{~}c|@{~}c|@{~}c}
    \hline
\multirow{2}{*}{\textbf{Algorithm}} & \multirow{2}{*}{\textbf{Implementation}} & \textbf{Information Leakage} & \textbf{\# of Analyzed}  & \textbf{\# of Analyzed} & \textbf{Processing Time} & \textbf{\# of Processed} & \textbf{Peak Memory} & \textbf{Information} & \multicolumn{3}{c}{\textbf{Results Reported in CacheD~\cite{wang2017cached}}}\\\cline{10-12}
   & & \textbf{Sites (known/unknown)} & \textbf{Procedures} & \textbf{Contexts} & \textbf{(CPU Seconds)} & \textbf{REIL Instructions} & \textbf{Usage (MB)} & \textbf{Leakage Units} & \textbf{Leakage Sites} & \textbf{Processing Time} & \textbf{Leakage Units} \\
    \hline
    \textbf{RSA/ElGamal} & Libgcrypt 1.6.1  & 22/18 & 60 & 81  & 228.8  & 50,436  & 7,749 & 11  & 22 &  14293.6 & 5 \\
    \textbf{RSA/ElGamal} & Libgcrypt 1.7.3  & 0/0   & 59 & 59  & 182.2  & 33,386  & 5,823 & 0  & 0 & 11626.0 & 0 \\
    \textbf{RSA/ElGamal} & OpenSSL   1.0.2k & 2/3   & 71 & 81  & 179.2  & 83,183  & 6,134 & 2  & N/A & N/A & N/A \\
    \textbf{RSA/ElGamal} & OpenSSL   1.0.2f & 2/4   & 68 & 72  & 169.5  & 80,096  & 6,113 & 3 & 2 & 165.6 & 2 \\
    \textbf{RSA} & mbedTLS   2.5.1          & 0/29  & 29 & 36  & 775.9  & 35,963  & 9,654 & 2  & N/A & N/A & N/A \\
    \hline
    \textbf{AES}         & OpenSSL   1.0.2k & 32/0  &  1 & 1   & 33.2   & 3,748   & 620  & 1  & N/A & N/A & N/A \\
    \textbf{AES}         & OpenSSL   1.0.2f & 32/0  &  1 & 1   & 35.8   & 3,748   & 578  & 1  & 32 & 48.5 & 1 \\
    \textbf{AES}         & mbedTLS   2.5.1  & 64/0  &  1 & 1   & 32.8   & 4,803   & 619  & 1 & N/A & N/A & N/A \\
    \hline
    \textbf{Total} & & 154/54 & 290 & 332 & 1,637.4 & 295,363 & 37,290 & 21 & 56 & 26,133.7 & 8 \\
    \hline
  \end{tabular}
 }
\end{table*}

\begin{table}[t]
  \centering \caption{Model program semantics in the logic formulas $l \in \mathbf{L}$ and 
  \SAS\ and test OpenSSL 1.0.2k. The second and third rows report the modeling
  results with logic formulas, while the last row reports results in
  \SAS. The comparison of these two program modelings is given in \S~\ref{sec:motivation}.}
  \label{tab:comparison}
  \resizebox{\linewidth}{!}{
  \begin{tabular}{c@{~}|@{~}c@{~}|@{~}c@{~}|@{~}c@{~}|@{~}c@{~}|@{~}c}
    \hline
     \multirow{2}{*}{\textbf{Algorithm}} & \textbf{Execution Time} & \textbf{\# of Processed} & \textbf{\# of Processed} & \textbf{Peak Memory} & \textbf{Detected} \\
     & \textbf{(CPU Second)} & \textbf{Function} & \textbf{Context} & \textbf{Usage (MB)} & \textbf{Leaks} \\
    \hline
    \textbf{RSA/ElGamal} & timeout ($>$ 5 CPU hours) &       15 &      28 & 7,283           & N/A \\
    \textbf{AES}         & timeout ($>$ 5 CPU hours) &        1 &       1 & 47,798          & N/A \\
    \hline
    \textbf{RSA/ElGamal} & timeout ($>$ 5 CPU hours) &       28 &      85 & 53,054         & N/A \\
    \textbf{AES}         & 115.8                   &        1 &       1 & 621            & 32 \\
    \hline
    \textbf{RSA/ElGamal} & 179.2                   &       71 &      81 & 6,134          & 5 \\
    \textbf{AES}         & 33.2                    &        1 &       1 & 620            & 32 \\
    \hline
  \end{tabular}
  }
\end{table}

\subsection{Evaluation Result Overview}
\label{subsec:evaluation-results}
\T~\ref{tab:accuracy} presents the evaluation result overview. In summary, 208
information leak points are reported from the real world cryptosystems evaluated
in this research. We interpret the results as promising; most of the evaluated
cryptosystems contain information leaks due to cache-based side channels, and
\caches\ helps to pinpoint these leaks with program-wide static analysis.

It is commonly acknowledged that the table lookup implementation of the AES
decryption routine is vulnerable to various real-world cache attacks.
\caches\ identifies 32 information leaks from the AES
implementations of OpenSSL (versions 1.0.2f and 1.0.2k), and 64 leaks from
mbedTLS. Indeed, all of these issues are lookup table queries via direct usages
of secrets, which is consistent with findings in existing
research~\cite{7467359,wang2017cached}.

Existing research has pinpointed multiple information leaks in the modular
exponentiation implementation of OpenSSL and
Libgcrypt~\cite{wang2017cached,Liu15}; vulnerable functions are adopted by both
RSA and ElGamal for decryption. \caches\ confirmed these findings (see
\S~\ref{subsec:eval-rsa} for one false positive in OpenSSL).
Furthermore, \caches\ successfully revealed a much larger information leakage
surface than existing trace and static analysis based techniques, because of its
scalable modeling of program semantics. \T~\ref{tab:accuracy} shows that
\caches\ identifies more information leaks from Libgcrypt and OpenSSL in
addition to confirming all issues reported by \cachediff~\cite{wang2017cached}.
Moreover, \caches\ identifies multiple information leakage sites from the
modular exponentiation implementation of mbedTLS, which, to the best of our
knowledge, is unknown to the research community.

While 40 information leakage sites are reported in Libgcrypt (version 1.6.1),
our study shows that they have been fixed in version 1.7.3. Without
secret-dependent memory accesses, the RSA/ElGamal implementation of Libgcrypt
1.7.3 is generally accepted as safe regarding our threat model. Our evaluation
reports consistent findings that no leak is detected regarding our threat model
on secret-dependent cache-line accesses (\revise{}{but we do find
  secret-dependent control flows, see \S~\ref{subsec:secret-control}}).

\noindent \textbf{Computing Resource.}~Our evaluation is launched on a 
machine with 2.90 GHz Intel Xeon(R) E5-2690 CPU and 128 GB memory. 
For each context-sensitive analysis campaign, \T~\ref{tab:accuracy} 
presents the covered functions, contexts, and processed IR instructions. 
We report that \caches\ takes less than 1700 CPU seconds to process all
the test cases, and on average the peak memory usage to evaluate one case is 
less than 5 GB. 
Overall, \caches\ finished all the analysis campaigns with 
reasonable amount of computing resources, and we interpret that the promising results
demonstrate the high scalability of \caches\ in analyzing real-world cryptosystems.

\noindent \textbf{Modeling Program Semantics with Logic Formulas.}~As noted in
\S~\ref{sec:motivation}, we tentatively implement the idea of modeling program
concrete semantics with logic formulas. Note that in addition to the semantics
modeling, all the design and evaluation settings are unchanged.

The first two rows of \T~\ref{tab:comparison} give the evaluation results
for the AES and RSA/ElGamal implementations in OpenSSL 1.0.2k, both
of which report a ``timeout'' after 5 CPU hours. As explained in
\S~\ref{sec:motivation}, we extend the prototype with loop induction
variable detection, and the third row reports the results of the
re-launched tests. Still, the RSA/ElGamal case throws a timeout (a reflection on 
this tentative evaluation is given in
\S~\ref{sec:motivation}). In summary, we interpret that the \SAS\
proposed in this research has largely improved the analysis scalability, which
serves as an indispensable component to pinpoint cache-based timing channels in
real-world cryptosystems.

\noindent \textbf{\revise{}{Comparison with \cachea.}}\footnote{\url{https://github.com/cacheaudit/cacheaudit}}
Besides
\cachediff~\cite{wang2017cached}, we also compare our results with
\cachea~\cite{cacheaudit}.
\cachea\ failed on 
all of our test cases for two
reasons. First, two of
our cases contain some x86 instructions that are not handled by \cachea. Second,
\cachea\ refuses to analyze indirect function calls when constructing the
control flow graph. In addition, we also describe the
key differences between \caches\ and \cachea\ in \S~\ref{sec:related}.

\noindent \textbf{\revise{}{Identifying Information Leakage Units.}}~Considering
some occurrences of information leaks are on adjacent lines of a code component
(a summary of all leaks can be found at 
\T~\ref{tab:sec-dep-cache-2} in the appendix), once a
% EXTENDED VERSION 
%(a summary of all leaks can be found at the extended version of this
%paper~\cite{sastr}), once a 
leak is flagged by \caches, presumably any competent programmer shall spot and
remove all the related defects. Therefore, we group the flagged information
leaks to assess the utility of \caches and also estimate the bug fixing effort.
Though it can be slightly subjective, we propose a metric according to the
source code locations of defects: information leaks will be grouped together as
a ``leakage unit'' if they are within the same or adjacent C statements (e.g.,
within the same loop or adjacent if branches). Also, if a macro is expanded at
different program points (e.g., the macro \texttt{MPN\_COPY} which contains
information leaks in Libgcrypt 1.6.1), we count it only once.

As reported in \T~\ref{tab:accuracy}, \caches\ identified 21 units of
information leaks. We also grouped the findings of \cachediff\ with the same
metric. We have confirmed that \caches\ covered all leakage units reported in
\cachediff, and further revealed new leakage units within statements or
functions not covered by \cachediff (e.g., 6 new leakage units in Libgcrypt
1.6.1). Overall, we interpret the evaluation results as promising; trace-based
analysis, like \cachediff, is incapable of modeling the program collecting
semantics, and therefore underestimates the attack surface.

\noindent \textbf{Confirmation with Library Authors.}~As shown in
\T~\ref{tab:accuracy}, we found unknown information leaks from OpenSSL (versions
1.0.2f and 1.0.2k) and mbedTLS (version 2.5.1). Our findings were reported and
promptly confirmed by the OpenSSL developers~\cite{opensslvul}; the latest
OpenSSL has been patched to eliminate these leaks (the leaks are discussed
shortly in \S~\ref{subsec:eval-rsa}). At the time of writing, we are waiting for
responses from the mbedTLS developers.

\subsection{Exploring the Leaks in mbedTLS}
\label{subsec:verification}
Although mbedTLS developers have not confirmed our findings, we conduct further
study of the 29 flagged information leakage sites from this library to check
whether they can lead to cache-based side channels.

As mentioned above, the constraint solver provides at least one pair of
satisfiable solutions (a pair of secrets $k$ and $k'$) to each leakage site
(\S~\ref{sec:check}). To verify one leak, we instrument the program source code
and modify secrets with $k$ and $k'$. We then compile the instrumented programs
into two binaries and monitor the execution of each binary executable via a
widely-used hardware simulator (\texttt{gem5}~\cite{gem5}). The compiled code is
fed with test cases shipped with the cryptosystems, and we use the full-system
simulation mode of \texttt{gem5} to monitor the execution of the instrumented
program. The full-system simulation mode uses 64-bit Ubuntu 12.04 (this mode
only supports 64-bit OS) to host the application code. We compile the
instrumented source code into 64-bit binaries since executing 32-bit binaries on
the 64-bit OS throws some TLB translation exceptions (this issue is also
reported in \cite{wang2017cached}). The configuration of \texttt{gem5} is
reported in \T~\ref{tab:gem5}. At the leakage point, we intercept the cache
access from CPU to L1 Data Cache; the accessed cache line and corresponding
cache status (hit vs.\ miss) are recorded.

\begin{table}[t]
  \centering
  \caption{\texttt{gem5} configurations.}
  \label{tab:gem5}
\resizebox{0.8\linewidth}{!}{
  \begin{tabular}{c|c}
    \hline
    \textbf{ISA} & x86 \\
    \hline
    \textbf{Processor type} & single core, out-of-order \\
    \hline
    \textbf{L1 Cache} & 4-way, 32KB, 2-cycle latency \\
    \hline
    \textbf{L2 Cache} & 8-way, 1MB, 50-cycle latency \\
    \hline
    \textbf{Cache line size} & 64 Bytes \\
    \hline
    \textbf{Cache replacement policy} & LRU \\
    \hline
  \end{tabular}
  }
\end{table}

\begin{table}[t]
  \centering
  \caption{Hardware simulation results.}
  \label{tab:verification-results}
\resizebox{\linewidth}{!}{
  \begin{tabular}{cccc}
    \hline
    \textbf{\# of \caches\ Detected} & \textbf{\# of Executed} & \textbf{Cache Line} & \textbf{Cache Status} \\
     \textbf{Leakage Sites} & \textbf{Leakage Sites} & \textbf{Access Variants} & \textbf{Variants} \\
   \hline
   29 & 14 & 14 & 6 \\
   \hline
  \end{tabular}
  }
\end{table}

As shown in \T~\ref{tab:verification-results}, among 29 information leakage
sites found in mbedTLS, 14 sites are covered during simulation. We observe that
different cache lines are accessed at these leakage points, when instrumenting
the program with secrets $k$ and $k'$. In other words, by observing the access
of different cache lines, attackers will be able to infer a certain amount of
secret information. In addition, we report that cache status variants (in terms
of cache hit vs.\ miss) are observed in several cases. In summary, we interpret
the verification results as highly promising; we have confirmed that all the
executed information leakages are \textit{true positives} since cache line
access variants are observed.

Although the employed program inputs cannot lead to the full coverage of every
leakage site, we manually checked all the uncovered cases, and we found that
these cases share the same pattern as the covered leaks. For instance, the
covered and uncovered leaks are the same inline assembly sequences residing
within different paths. Overall, we interpret it as convincing to conclude that
all the detected information leaks in mbedTLS are true positives.

\subsection{Case Study of Leaks in mbedTLS}
\label{subsec:evaluation-case}

\begin{figure*}[!ht]
  \centering
  \includegraphics[width=0.95\linewidth]{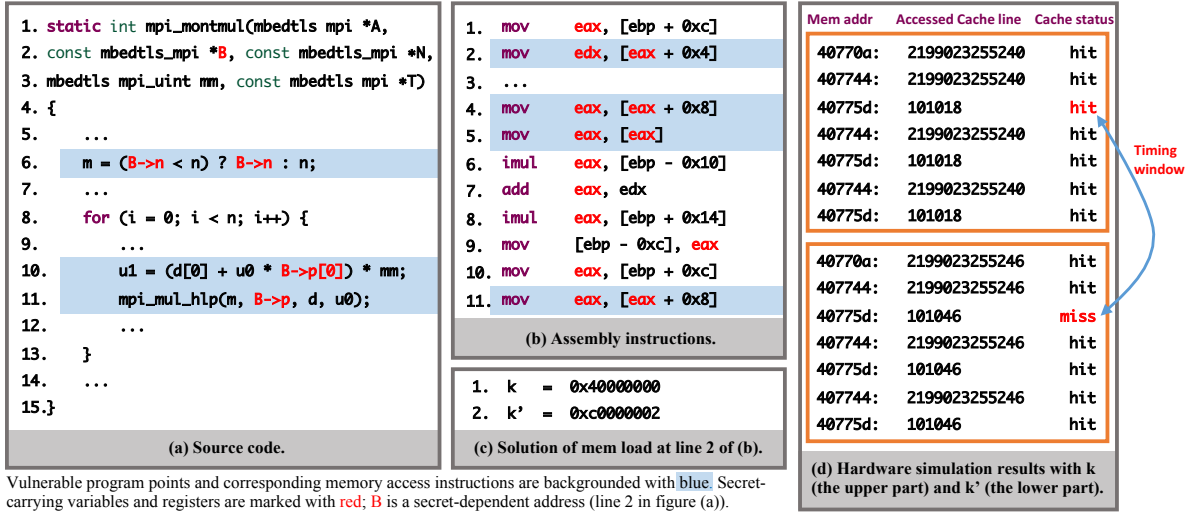}
  \caption{Case study of information leaks in mbedTLS. The constraint solver
    finds a pair of secrets (\texttt{k} and \texttt{k'}) which leads to the
    access of different cache lines at line 2 of (b).}
  \label{fig:rsa-case-study}
\end{figure*}

This section presents a thorough case study of several information leaks
identified by our tool. As presented in \T~\ref{tab:accuracy}, we identified 29
information leakage points in mbedTLS 2.5.1. In particular, the first four leaks
were found in the function \texttt{mpi\_montmul} (source code is given in
\F~\ref{fig:rsa-case-study}(a)), which is a major component of the modular
exponentiation implementation in mbedTLS. The value of function parameter
\texttt{B} is derived from a window size of the secret key (line 2). In
\texttt{mpi\_montmul}, \texttt{B} is used as a pointer to access elements in a C
struct (line 6, line 10, line 11). We envision that different program secrets
would derive into different values of \texttt{B}, which further lead to the
access of different cache lines in secret-dependent memory accesses.

The evaluation shows consistent findings. As shown in \F~\ref{fig:rsa-case-study}(b), 
\caches\ identifies four suspicious memory
accesses in \texttt{mpi\_montmul} (two pointer dereferences at line 6
of \F~\ref{fig:rsa-case-study}(a) are optimized into one memory load 
at line 2 of \F~\ref{fig:rsa-case-study}(b)). By checking the constraint solver,
we find a pair of program secrets that affect the value of \texttt{B} and
further lead to the access of different cache lines at the first memory access
(the solution is given in \F~\ref{fig:rsa-case-study}(c)).

We then instrument the program private key with the solver provided solutions in
\F~\ref{fig:rsa-case-study}(c) and observe the runtime cache access within
\texttt{gem5}. This secret pair is generated by analyzing the first leakage
memory access, but since variants of \texttt{B} may affect the following memory
traffic as well, we report the cache status at all the suspicious memory
accesses in \texttt{mpi\_montmul}. We note that while \caches\ analyzes 32-bit
binaries, at this step we compile the instrumented source code into 64-bit
binaries since the simulated OS throws some exceptions when running 32-bit code.
After compilation, the five leakage points in the source code actually produce
three memory load instructions in the 64-bit assembly code. Cache behaviors,
including the accessed cache line and the corresponding cache status, are
recorded at these points. \F~\ref{fig:rsa-case-study}(d) presents the simulation
results. Due to the limited space, we provide only the first seven records
(59568 records in total). Program counters 0x40770a, 0x407744 and 0x40775d
represent the three identified memory loads of information leaks. It is easy to
see that different cache lines are accessed at each point. Additionally, a
timing window of one cache hit vs.\ miss is found (this memory access represents
a table lookup in the first element of \texttt{B->p}).

\subsection{Information Leaks in the Modular Exponentiation Algorithm}
\label{subsec:eval-rsa}
Both RSA and ElGamal algorithms employ the modular exponentiation algorithm for
decryption. Existing research has reported that such an algorithm is vulnerable
to cache-based timing channel attacks~\cite{wang2017cached,Liu15}. Here, we
evaluate the corresponding implementations in OpenSSL, Libgcrypt, and mbedTLS.
As reported in \T~\ref{tab:accuracy}, \caches\ successfully revealed a much
larger leakage surface, including 80 (54 unknown and 26 known) information leaks,
from our test cases.

\noindent \textbf{Information Leaks in Libgcrypt.}~A large number of leakage
points are reported from the sliding window-based modular exponentiation
implementation in Libgcrypt 1.6.1. Existing research has pointed out the direct
usages of (window-size) secret keys as \textit{exploitable}~\cite{Liu15}, and
\caches\ pinpointed this issue. \revise{}{In addition to the 4 direct usages of
  secrets, we  further
  uncovered 36 leaks due to the propagation of secret information flows,
  as \caches\ keeps track of both variable-level and memory loading
  based information flows (\S~\ref{subsec:design-information}). }

While previous trace-based analysis also keeps track of information flow
propagation (i.e., \cachediff~\cite{wang2017cached}), \caches\ still outperforms
\cachediff\ because of its program-wide analysis. With the help of \cachediff's
authors, we confirmed that \caches\ can detect all 22 leaks reported in
\cachediff~\cite{wang2017cached}, and further reveals 18 additional points.

\noindent \textbf{Information Leaks in mbedTLS.}~\caches\ has also identified
leaks in another commonly used cryptosystem, mbedTLS. % \F~\ref{fig:rsa-mbedtls} 
Appendix~\ref{sec:unknown-mbedtls}
presents several leaks found in the mbedTLS case. In general, function
\texttt{mbedtls\_mpi\_exp\_mod} implements a sliding window-based modular
exponentiation, which leads to secret-dependent memory accesses (precomputed
table lookup). The table lookup statement (line 10) does not generate a leak
point since it only gets a pointer referring to an array element, however,
further memory dereferences on the acquired pointer reveal \revise{}{4 direct
  usages of secrets (discussed in \S~\ref{subsec:evaluation-case}). We also find
  25 leaks due to the propagation of secret information flows
  (\S~\ref{subsec:design-information}).}

We note that mbedTLS uses RSA exponent blinding as a
countermeasure~\cite{Kocher:1996}, which practically introduces noise and
mitigates cache side channels (but is still exploitable with enough collisions
or if the attacker can derive the exponent from a single trace). We leave it as
future work to model the feasibility of exploitations, given the identified
leaks and also taking program randomness (e.g., exponent blinding) into
consideration.

\noindent \textbf{Information Leaks in OpenSSL.}~Information leaks in OpenSSL
are within or derived from functions counting the length of a secret array.
\F~\ref{fig:rsa-openssl} presents a function that contains 4 memory accesses
which engender information leaks: \texttt{BIGNUM} maintains an array of 32-bit
elements, and \texttt{BN\_num\_bits} counts the number of bits within a given
big number. Since the last element of the array may be less than 32 bits, a
lookup table is used to determine the exact bits in the last element of the
secret array (function call at line 7 in \F~\ref{fig:rsa-openssl}). By storing
secrets within a big number structure, table queries in
\texttt{BN\_num\_bits\_word} could lead to secret-dependent memory accesses.

While \cachediff~\cite{wang2017cached} flags only one information leak (line 26
in \F~\ref{fig:rsa-openssl}) covered by its execution trace, \caches\ detects
more leaks. \revise{}{As shown in \F~\ref{fig:rsa-openssl}, four table queries
  are analyzed by \caches, and all of them are flagged as leaks. In addition to
  these four direct usages of secrets, \caches\ also finds one more leak in
  OpenSSL 1.0.2k, and two in OpenSSL 1.0.2f, both are due to the propagation of
  secret information flows.}

\noindent \underline{False Positive.}~In addition to the issues in
Appendix~\ref{sec:unknown-openssl} that have been confirmed and fixed by the OpenSSL
developers, we also find one \textit{false positive} when analyzing OpenSSL
1.0.2f. To defeat side channel attacks against the precomputed table lookup,
OpenSSL forces the cache access at the table lookup point in a constant
order~\cite{Doychev16}. This constant order table lookup is demonstrated with a
sample C code in Appendix~\ref{sec:scatter-gather}. The base address of the lookup
table is aligned to zero the least-significant bits, and scatter and gather
methods are employed to mimic the Fortran-style memory allocation to access the
table in a constant order and remove timing channels.

Ideally, with the base address being aligned, the table access should not
produce an information leak regarding the cache line access model (but
scatter-gather implementation can also be exploited with cache bank
attacks~\cite{yarom2016cachebleed}. As discussed in \S~\ref{sec:check}, our
cache access constraint can be further extended to capture cache bank side
channels). However, since public information (e.g., base address of the table)
is abstracted as a symbol $p$ in \caches, the alignment is not modeled.
Therefore, \caches\ incorrectly flags the table lookup as a leak point, which
leads to a false positive.

\subsection{\revise{}{Flag Secret-Dependent Control Flow}}
\label{subsec:secret-control}
To reduce false negatives and also show the versatility of \caches, we extend
\caches\ to search for secret-dependent branch conditions.
Similar to the detection of secret-dependent memory accesses
(\S~\ref{sec:check}), we check each conditional jump and flag secret-dependent
jump conditions. The conditional jump in REIL IR is \code{jcc}, and the value of
its first operand specifies whether the jump is taken or not. We translate
each secret-dependent condition $c$ into an SMT formula $f$ and solve the
following constraint:

\begin{equation} \label{eq:formula2}
  f \neq f[s'_{i} / s_i]
\end{equation}
\noindent where a satisfiable solution indicates that different secrets lead to
the execution of different branches. In addition, since REIL IR creates
\textit{additional} \code{jcc} statements to model certain x86 instructions
(e.g., the shift arithmetic right \code{sar} and bit scan forward \code{bsf}),
we rule out \code{jcc} statements if their corresponding x86 instructions are
\textit{not} conditional jumps. In this research we do not take \code{jcc} into
consideration if it does not represent an x86 conditional jump, since in general
the silicon implementations of x86 instructions on mainstream CPUs have
\textit{fixed latency}~\cite{intelopt}.

\T~\ref{tab:control-results} presents the evaluation results, including the
information leakage units produced by the same metric used in
\T~\ref{tab:accuracy}. While secret-dependent control flow is absent in all the
AES cases, \caches\ pinpoints multiple instances in every RSA/ElGamal
implementation. We further manually studied each of them, and we report that
besides 4 false positives (explained later in this section), all the other cases
represent secret-dependent branch conditions. In the example given in
Appendix~\ref{sec:branch-openssl}, the value of \texttt{bits}, which is derived from
the private key, is used to construct several conditions. Similar
patterns are also found in other cases.

\noindent \underline{False Negative.}~Bernstein et al.\ exploited the
secret-dependent control flows in Libgcrypt 1.7.6~\cite{bernstein2017sliding},
where the leading and trailing zeros of a window-size secret are used to compute
a branch condition. While the corresponding vulnerable branches also exist in
Libgcrypt 1.7.3, they are not detected by \caches. In general, 32-bit x86 opcode
\code{bsr} and \code{bsf} are used to count the leading and trailing zeros of a
given operand, and both opcodes are lifted into a while loop implemented by a
\code{jcc} statement (for the definition of their semantics, see the \code{bsr}
and \code{bsf} sections of the x86 developer manual~\cite{intel}). Consider a
proof-of-concept pseudo-code below:
\begin{lstlisting}
t = 0;
while (getBit(t, src) == 0) //src could be a secret
{
  t += 1;
}
return t; // the number of trailing zeros in src
\end{lstlisting}
\noindent where the lifted while loop entails implicit information flow, which
is not supported (see \S~\ref{subsec:design-information} for the information
flow policy). In addition, although we disable the checking of \code{jcc}
regarding Constraint~\ref{eq:formula2} if its corresponding x86 instruction is
\textit{not} a conditional jump (like the \code{bsr} and \code{bsf} cases), we
report that once enabling the checking of such \code{jcc} statements,
secret-dependent control flows (e.g., line 3 of the pseudo-code) are detected
for both cases.

\begin{comment}
We also show an interesting case found in Libgcrypt 1.7.3 (line 680 in
mpi-pow.c). The following statement (\texttt{e0} is part of the private key):

\begin{lstlisting}[language=C,basicstyle=\ttfamily\small,numbers=none]
size |= table_size[k] & (0 - (k==e0))
\end{lstlisting}

\noindent is compiled into a sequence of assembly code as follows (simplified
for readability):
\begin{lstlisting}
    <@$\code{mov}$@>    eax, [ebp-0x138]
    <@$\code{cmp}$@>    eax, [ebp-0x114]         <@\comm{; eax represents k}@>
    <@$\code{jne}$@>    l1
    <@$\code{mov}$@>    eax, 0xffffffff
    <@$\code{jmp}$@>    l2
  l1:
    <@$\code{mov}$@>    eax, 0x0
  l2:
    <@$\code{and}$@>    eax, edx      <@\comm{; edx represents table_size[k]}@>
\end{lstlisting}

The boolean expression \texttt{k==e0} is compiled into a conditional jump (line
3). While no explicit control flow appears in the C code, \code{jne} is actually
used by compiler to generate the corresponding instructions.
\end{comment}

 \begin{table}[t]
   \centering
   \caption{Secret-dependent control branches. We found no issue in the AES
     implementations. A summary of all leakage points can be found at
    % \protect\url{https://goo.gl/vh8k7c}.}
     % the extended version of this paper~\cite{sastr}.}
     % EXTENDED VERSION 
     \T~\ref{tab:sec-dep-control} in the appendix.}
   \label{tab:control-results}
   \resizebox{\linewidth}{!}{
   \begin{tabular}{c@{~}|@{~}c@{~}c@{~}c|@{~}c}
     \hline
     \multirow{2}{*}{\textbf{Implementation}} & \multirow{2}{*}{\textbf{Algorithm}} & \textbf{\# of Secret-dependent} & \textbf{False} & \textbf{Information}  \\
     & & \textbf{conditions} & \textbf{Positive} & \textbf{Leakage Unit} \\
     \hline
      Libgcrypt 1.6.1 & RSA/ElGamal & 21 & 4 & 9 \\
      Libgcrypt 1.7.3 & RSA/ElGamal & 6  & 0 & 4 \\
      mbedTLS 2.5.1   & RSA         & 8  & 0 & 4 \\
      OpenSSL 1.0.2f  & RSA/ElGamal & 12 & 0 & 5 \\
      OpenSSL 1.0.2k  & RSA/ElGamal & 12 & 0 & 5 \\
     \hline
     \texttt{Total} & & 59 & 4 & 27 \\
     \hline
   \end{tabular}
   }
 \end{table}

%\begin{wrapfigure}[10]{l}{0.24\linewidth}
%\begin{lstlisting}[xleftmargin=5pt]
%\end{wrapfigure}
\noindent \underline{False Positive.}~We find 4 false positives when analyzing
Libgcrypt 1.6.1. This is due to the imprecise modeling of interprocedural call
sites. Consider a sample pseudo-code below:
\begin{lstlisting}
foo(k, p) { <@\comm{// \texttt{k} is $\{\top\}$}@> <@\comm{and \texttt{p} is $\{12\}$}@>
  if (...) {
    r = bar(k); <@\comm{// r is $\{\top\}$}@>
  } else {
    r = bar(p); <@\comm{// r is $\{\top\}$ since $\langle foo, \{12\}\rangle \subseteq \langle foo, \{\top\}\rangle$}@>
    if (r)  <@\comm{// false positive}@>
      ...
}

bar(i){return i;}
\end{lstlisting}
\noindent where \texttt{foo} performs two function calls to \texttt{bar} with
different parameters. The summary of the first call (line 3) is represented as
$(\langle foo, \{\top\}\rangle, \{\top\})$, where $\langle foo, \{\top\}\rangle$
forms the calling context (as explained in \S~\ref{subsec:context-sensitivity},
a calling context includes the caller name and the input), and the second
$\{\top\}$ is the function call output. Then the following function call (line
5) with $\langle foo, \{12\}\rangle$ as the calling context will directly return
$\{\top\}$ and cause a false positive (line 6) according to the recorded
summary, since $\langle foo, \{12\}\rangle \subseteq \langle foo,
\{\top\}\rangle$. Our study shows that such sound albeit imprecise modeling
caused 4 false positives when analyzing Libgcrypt 1.6.1.

\section{Discussion}
\label{sec:discussion}
\noindent \textbf{Soundness.}~Our abstraction is sound (see
% our technical report for the proof~\cite{sastr}),
% EXTENDED PAPER
Appendix~\ref{sec:formalization-proof}),
but the \caches\ implementation is
soundy~\cite{livshits2015soundness} as it roots the same assumption as previous
techniques that aim to find bugs rather than performing rigorous
verification~\cite{machiry2017checker,xie2005scalable,livshits2003tracking}.

\caches\ adopts a lightweight but unsound memory model implementation; program
state representations are optimized to reduce the memory usage and speed up the
analysis.
There is a line of research aiming to deliver a (nearly) sound memory model when
analyzing x86
assembly~\cite{balakrishnan2004analyzing,reps2008improved,Reynolds79,Bradley2006}.
We leave it to future work to explore practical methods to improve \caches\ with
a sound model without undermining the strength of \caches\ in terms of
scalability and precision. \revise{Furthermore, we present the soundness proof
  of \sas\ in our technical report~\cite{sastr}; a rigorous verification tool
  can be implemented accordingly.}{}

\noindent \textbf{Reduce False Positives.}~Our abstract domain \sas\ models public program
information with free public symbols. To further improve the analysis precision
and eliminate false positives, such as in the case discussed in
\S~\ref{subsec:eval-rsa}, one approach is to perform a
finer-grained modeling of public program information. To this end, so-called
``lazy abstraction'' can be adopted to postpone abstraction until
necessary~\cite{thakur12bilateral}. In contrast to our current approach
where analyses are performed directly over \sas, lazy abstraction provides a
flexible abstraction strategy on demand, where different program points can
exhibit distinct levels of precision. Well-selected program points for lazy
abstraction are critical to achieve scalability. For example, abstraction can be
performed at every loop merge point or whenever abstract formulas become too large and exhaust the memory
resource.
We leave it to future work to explore practical strategies for lazy
abstraction.

\section{Related Work}
\label{sec:related}
\noindent \textbf{Timing Attacks.}~Kocher's seminal
paper~\cite{Kocher1996} identifies timing attacks as a potential threat to
crypto system. Later work finds that timing information reveals the victim
program's usage of data/instruction cache, leading to efficient timing attacks
against real world cryptography software, including
AES~\cite{Gullasch:2011,Osvik+:2006, Tromer10,daniel2005cache,
  bonneau2006cache,aciicmez2007cache}, DES~\cite{tsunoo2003crypt},
RSA~\cite{Aciicmez07, Percival:2005, Yarom14}, ElGamal~\cite{Zhang12}, and ECDSA~\cite{benger2014side}.
Recent work shows that such cache-based timing attacks are possible on emerging
platforms, such as cloud computing, VM
environments, trusted computing environments, and mobile 
platforms~\cite{ristenpart2009hey,xu2011an,Wu12, zhang2011homealone, Liu15,lipp2016mobile,maurice2017hello,gruss2017sgx,bulck2018sgx,gruss2016prefetch}.

\noindent \textbf{Detect Cache-Based Timing Channels.}~\cachea\ leverages static
analysis techniques (i.e., \ai) to reason information leakage due to cache side
channels~\cite{cacheaudit,Doychev16}. \revise{}{\caches\ outperforms
  \cachea\ due to our novel abstract domain. \cachea\ uses relational
  and numerical abstract domains to only infer the information leakage bound,
  while our abstract domain models semantics with symbolic formulas, pinpoints
  information leaks with constraint solving, and enables the generation of
  counter examples to promote debugging. In addition, we propose a principled
  way to improve the scalability by tracking secrets and public information with
  \textit{different granularities}. This enables a context-sensitive
  interprocedural analysis of real-world cryptosystems for which CacheAudit is
  not capable of handling.} Brotzman et al.~\cite{bortzman2018casym} propose a
static symbolic reasoning technique that also covers multiple program paths.
However, their analysis lacks abstraction of public values, and can analyze only
small-size programs.

In contract, dynamic analysis-based approaches, such as taint analysis or
trace-based symbolic execution, are incapable of analyzing the whole
program~\cite{wang2017cached,jan2018microwalk,gorka2017side,weiser2018data,guo2018side}.
\revise{}{\cachediff~\cite{wang2017cached} performs symbolic execution towards a
  single trace to detect side channels. In contrast, abstract interpretation
  framework approximates the program \textit{collecting semantics}, which
  formalizes program abstract semantics at arbitrary program points regarding
  any path and any input. This is fundamentally different and much more
  comprehensive comparing to a path-based tool, like \cachediff.} Wichelmann et
al.~\cite{jan2018microwalk} log execution traces and perform differential
analysis of various granularities to detect side channels. Weiser et
al.~\cite{weiser2018data} detect address-based side-channels by executing test
programs under input variants and further compare traces to detect leakages.

\noindent \textbf{Countermeasure.}~Existing countermeasures against cache
side-channel attacks can be categorized into hardware-based and software-based
approaches.
Hardware-based solutions focus on randomizing the cache accesses with new cache
design~\cite{wang2007new,wang2008novel,kong2009hardware,wang2006covert,liu2014random,liu2016cache},
or enforcing fine-grained isolation 
with respect to cache usage~\cite{shi2011cache,kim2012stealthmem}. Wang et
al. propose locking the cache lines and hiding cache access
patterns~\cite{wang2007new}, which further obfuscates cache accesses by
diversifying the cache mappings~\cite{wang2008novel}. Tiwari et
al.~\cite{tiwari2011crafting} devise a novel micro architecture for
information-flow tracking by design, where noninterference is deployed as the
baseline confidentiality property. Another direction at the hardware level is
based on contracts between software and hardware~\cite{Zhang13,Sapper:asplos,zhang3}, 
where contracts are enforced by formal methods (e.g., type systems) on the hardware
side. Furthermore, some advanced hardware extensions, like hardware transactional memory, 
have also been leveraged to prevent side channels even inside Intel SGX~\cite{gruss2017transactional}.

Analyses are also conducted on the software level to mitigate side channel
attacks~\cite{coppens2009practical,aviram2010determinating,raj2009resource,schwarz2018javascript,schwarz2018keydrown}.
Program transformation techniques are leveraged to remove control-flow timing
leaks by equalizing branches of conditionals with secret
guards~\cite{Agat00}, together with a binary static
checker~\cite{molnar2005program}, and its practicality is
evaluated~\cite{coppens2009practical}. Constant time code defeats timing attacks
by ensuring the control flow, memory accesses, and execution time of individual
instruction is secret
independent~\cite{almeida2013formal,Hedin:Sands:2005,multirun,Barthe:2006,almeida2016verifying,kopf2010approximation}.

\section{Conclusion}
\label{sec:conclusion}
In this paper, we have presented \caches\ for cache-based timing channel
detection. Based on a novel abstract domain \sas, \caches\ does fine-grained
tracking of sensitive information and its dependencies, while performing
scalable analysis with over-approximated public information. We evaluated
\caches\ on multiple real-world cryptosystems. \caches\ confirmed over 154
information leaks reported by previous research and pinpointed 54 leaks not
known previously.

\section{Acknowledgments}
We thank the Usenix Security anonymous reviewers and Gary T.\ Leavens for their
valuable feedback. The work was supported in part by the National Science
Foundation (NSF) under grant CNS-1652790, and the Office of Naval Research (ONR)
under grants N00014-16-12912, N00014-16-1-2265, and N00014-17-1-2894.
\vspace{6pt}

{\normalsize \bibliographystyle{acm}
\bibliography{bib/analysis,bib/sidechannel,bib/logic,bib/symbolic-execution,bib/bmc,bib/timing,bib/new}}

\begin{thebibliography}{10}

\bibitem{IDAPro}
{IDAPro}.
\newblock https://goo.gl/snmrk3.

\bibitem{intelopt}
Intel{\textregistered} 64 and {IA}-32 architectures optimization reference
  manual.

\bibitem{intel}
Intel{\textregistered} 64 and {IA}-32 architectures software developer’s
  manual.

\bibitem{opensslvul}
Patched {OpenSSL} vulnerabilities.
\newblock \url{https://git.io/fj0iz}, 2018.

\bibitem{aciicmez2006trace}
{\sc Aciicmez, O., and Koc, C.~K.}
\newblock Trace-driven cache attacks on {AES}.
\newblock In {\em ICICS\/} (2006).

\bibitem{aciicmez2007cache}
{\sc Aciicmez, O., Schindler, W., and Koc, C.~K.}
\newblock Cache based remote timing attack on the {AES}.
\newblock In {\em CT-RSA\/} (2006).

\bibitem{Aciicmez07}
{\sc Aciicmez, O., and Seifert, J.}
\newblock Cheap hardware parallelism implies cheap security.
\newblock In {\em FDTC\/} (2007).

\bibitem{Agat00}
{\sc Agat, J.}
\newblock Transforming out timing leaks.
\newblock In {\em POPL\/} (2000).

\bibitem{almeida2016verifying}
{\sc Almeida, J.~B., Barbosa, M., Barthe, G., Dupressoir, F., and Emmi, M.}
\newblock Verifying constant-time implementations.
\newblock In {\em USENIX Sec.\/} (2016).

\bibitem{almeida2013formal}
{\sc Almeida, J.~B., Barbosa, M., Pinto, J.~S., and Vieira, B.}
\newblock Formal verification of side-channel countermeasures using
  self-composition.
\newblock {\em Science of Computer Programming\/} (2013).

\bibitem{appel2004tiger}
{\sc Appel, A.~W.}
\newblock {\em Modern Compiler Implementation in ML}.
\newblock Cambridge University Press, 2004.

\bibitem{aviram2010determinating}
{\sc Aviram, A., Hu, S., Ford, B., and Gummadi, R.}
\newblock Determinating timing channels in compute clouds.
\newblock In {\em CCSW\/} (2010).

\bibitem{balakrishnan2004analyzing}
{\sc Balakrishnan, G., and Reps, T.}
\newblock Analyzing memory accesses in x86 executables.
\newblock In {\em CC\/} (2004).

\bibitem{Barthe:2006}
{\sc Barthe, G., Rezk, T., and Warnier, M.}
\newblock Preventing timing leaks through transactional branching instructions.
\newblock {\em Electronic Notes in Theoretical Computer Science\/} (2006).

\bibitem{benger2014side}
{\sc Benger, N., van~de Pol, J., Smart, N.~P., and Yarom, Y.}
\newblock ``{Ooh} aah... just a little bit'' : A small amount of side channel
  can go a long way.
\newblock In {\em CHES\/} (2014).

\bibitem{daniel2005cache}
{\sc Bernstein, D.~J.}
\newblock Cache-timing attacks on {AES}, 2005.

\bibitem{bernstein2017sliding}
{\sc Bernstein, D.~J., Breitner, J., Genkin, D., Bruinderink, L.~G., Heninger,
  N., Lange, T., van Vredendaal, C., and Yarom, Y.}
\newblock Sliding right into disaster: Left-to-right sliding windows leak.
\newblock In {\em CHES\/} (2017).

\bibitem{gem5}
{\sc Binkert, N., Beckmann, B., Black, G., Reinhardt, S.~K., Saidi, A., Basu,
  A., Hestness, J., Hower, D.~R., Krishna, T., Sardashti, S., Sen, R., Sewell,
  K., Shoaib, M., Vaish, N., Hill, M.~D., and Wood, D.~A.}
\newblock The {Gem5} simulator.
\newblock {\em ACM SIGARCH Computer Architecture News\/} (2011).

\bibitem{boneh1998attack}
{\sc Boneh, D., Durfee, G., and Frankel, Y.}
\newblock An attack on {RSA} given a small fraction of the private key bits.
\newblock In {\em ASIACRYPT\/} (1998).

\bibitem{bonneau2006cache}
{\sc Bonneau, J., and Mironov, I.}
\newblock Cache-collision timing attacks against {AES}.
\newblock In {\em CHES\/} (2006).

\bibitem{Bradley2006}
{\sc Bradley, A.~R., Manna, Z., and Sipma, H.~B.}
\newblock What's decidable about arrays?
\newblock In {\em VMCAI\/} (2006).

\bibitem{bortzman2018casym}
{\sc Brotzman, R., Liu, S., Zhang, D., Tan, G., and Kandemir, M.}
\newblock {CaSym}: Cache aware symbolic execution for side channel detection
  and mitigation.
\newblock In {\em IEEE SP\/} (2018).

\bibitem{Brumley:Boneh:2005}
{\sc Brumley, D., and Boneh, D.}
\newblock Remote timing attacks are practical.
\newblock {\em Computer Networks\/} (2005).

\bibitem{bulck2018sgx}
{\sc Bulck, V., Minkin, M., Weisse, O., Genkin, D., Kasikci, B., Piessens, F.,
  Silberstein, M., Wenisch, T.~F., Yarom, Y., and Strackx, R.}
\newblock Foreshadow: Extracting the keys to the {Intel} {SGX} kingdom with
  transient out-of-order execution.
\newblock In {\em USENIX Sec.\/} (2018).

\bibitem{7467359}
{\sc C, A., Giri, R.~P., and Menezes, B.}
\newblock Highly efficient algorithms for aes key retrieval in cache access
  attacks.
\newblock In {\em EuroSP\/} (2016).

\bibitem{calvet2012aligot}
{\sc Calvet, J., Fernandez, J.~M., and Marion, J.-Y.}
\newblock Aligot: Cryptographic function identification in obfuscated binary
  programs.
\newblock In {\em CCS\/} (2012).

\bibitem{coppens2009practical}
{\sc Coppens, B., Verbauwhede, I., Bosschere, K.~D., and Sutter, B.~D.}
\newblock Practical mitigations for timing-based side-channel attacks on modern
  x86 processors.
\newblock In {\em IEEE SP\/} (2009).

\bibitem{cousineau1979rational}
{\sc Cousineau, G., and Nivat, M.}
\newblock On rational expressions representing infinite rational trees:
  Application to the structure of flow charts.
\newblock In {\em International Symposium on Mathematical Foundations of
  Computer Science\/} (1979), Springer.

\bibitem{cousot77abstract}
{\sc Cousot, P., and Cousot, R.}
\newblock Abstract interpretation: a unified lattice model for static analysis
  of programs by construction or approximation of fixpoints.
\newblock In {\em POPL\/} (1977).

\bibitem{cloc}
{\sc Danial, A.}
\newblock {CLOC}.
\newblock \url{https://goo.gl/3KFACB}.

\bibitem{DeMoura:2008:ZES:1792734.1792766}
{\sc De~Moura, L., and Bj{\o}rner, N.}
\newblock Z3: An efficient {SMT} solver.
\newblock In {\em TACAS\/} (2008).

\bibitem{debray1998alias}
{\sc Debray, S., Muth, R., and Weippert, M.}
\newblock Alias analysis of executable code.
\newblock In {\em Proceedings of the 25th ACM SIGPLAN-SIGACT symposium on
  Principles of programming languages\/} (1998), ACM.

\bibitem{Denning76}
{\sc Denning, D.~E.}
\newblock A lattice model of secure information flow.
\newblock {\em CACM\/} (May 1976).

\bibitem{disselkoen2017prime}
{\sc Disselkoen, C., Kohlbrenner, D., Porter, L., and Tullsen, D.}
\newblock {Prime+Abort}: A timer-free high-precision {L3} cache attack using
  {Intel} {TSX}.
\newblock In {\em USENIX Sec.\/} (2017).

\bibitem{cacheaudit}
{\sc Doychev, G., Feld, D., Kopf, B., Mauborgne, L., and Reineke, J.}
\newblock {CacheAudit}: A tool for the static analysis of cache side channels.
\newblock In {\em USENIX Sec.\/} (2013).

\bibitem{Doychev16}
{\sc Doychev, G., and K{\"{o}}pf, B.}
\newblock Rigorous analysis of software countermeasures against cache attacks.
\newblock In {\em PLDI\/} (2017).

\bibitem{fernandez2002speculative}
{\sc Fern{\'a}ndez, M., and Espasa, R.}
\newblock Speculative alias analysis for executable code.
\newblock In {\em Parallel Architectures and Compilation Techniques, 2002.
  Proceedings. 2002 International Conference on\/} (2002), IEEE.

\bibitem{binnavi}
{\sc Google}.
\newblock {BinNavi}.
\newblock \url{https://github.com/google/binnavi}, 2017.

\bibitem{gruss2017transactional}
{\sc Gruss, D., Lettner, J., Schuster, F., Ohrimenko, O., Haller, I., and
  Costa, M.}
\newblock Strong and efficient cache side-channel protection using hardware
  transactional memory.
\newblock In {\em USENIX Sec.\/} (2017).

\bibitem{gruss2016prefetch}
{\sc Gruss, D., Maurice, C., Fogh, A., Lipp, M., and Mangard, S.}
\newblock Prefetch side-channel attacks: Bypassing smap and kernel aslr.
\newblock In {\em CCS\/} (2016).

\bibitem{Gullasch:2011}
{\sc Gullasch, D., Bangerter, E., and Krenn, S.}
\newblock Cache games---bringing access-based cache attacks on {AES} to
  practice.
\newblock In {\em IEEE SP\/} (2011).

\bibitem{Guo2005}
{\sc Guo, B., Bridges, M.~J., Triantafyllis, S., Ottoni, G., Raman, E., and
  August, D.~I.}
\newblock Practical and accurate low-level pointer analysis.
\newblock In {\em Proceedings of the International Symposium on Code Generation
  and Optimization\/} (Washington, DC, USA, 2005), CGO '05, IEEE Computer
  Society.

\bibitem{guo2018side}
{\sc Guo, S., Wu, M., and Wang, C.}
\newblock Adversarial symbolic execution for detecting concurrency-related
  cache timing leaks.
\newblock In {\em FSE\/} (2018).

\bibitem{he2017side}
{\sc He, Z., and Lee, R.~B.}
\newblock How secure is your cache against side-channel attacks?
\newblock In {\em MICRO\/} (2017).

\bibitem{Hedin:Sands:2005}
{\sc Hedin, D., and Sands, D.}
\newblock Timing aware information flow security for a {JavaCard}-like
  bytecode.
\newblock {\em Electronic Notes in Theoretical Computer Science\/} (2005).

\bibitem{gorka2017side}
{\sc Irazoqui, G., Cong, K., Guo, X., Khattri, H., Kanuparthi, A.~K.,
  Eisenbarth, T., and Sunar, B.}
\newblock Did we learn from {LLC} side channel attacks? {A} cache leakage
  detection tool for crypto libraries.
\newblock {\em CoRR\/} (2017).

\bibitem{kim2012stealthmem}
{\sc Kim, T., Peinado, M., and Mainar-Ruiz, G.}
\newblock Stealthmem: System-level protection against cache-based side channel
  attacks in the cloud.
\newblock In {\em USENIX Sec.\/} (2012).

\bibitem{kinder2010static}
{\sc Kinder, J.}
\newblock {\em Static analysis of x86 executables}.
\newblock PhD thesis, Technische Universit{\"a}t Darmstadt, 2010.

\bibitem{Kocher:1996}
{\sc Kocher, P.~C.}
\newblock Timing attacks on implementations of {Diffie--Hellman}, {RSA}, {DSS},
  and other systems.
\newblock In {\em CRYPTO\/} (1996).

\bibitem{Kocher1996}
{\sc Kocher, P.~C.}
\newblock {\em Timing Attacks on Implementations of {Diffie-Hellman}, {RSA},
  {DSS}, and Other Systems}.
\newblock 1996.

\bibitem{kong2009hardware}
{\sc Kong, J., Aciicmez, O., Seifert, J.~P., and Zhou, H.}
\newblock Hardware-software integrated approaches to defend against software
  cache-based side channel attacks.
\newblock In {\em HPCA\/} (2009).

\bibitem{kopf2010approximation}
{\sc K{\"o}pf, B., and Rybalchenko, A.}
\newblock Approximation and randomization for quantitative information-flow
  analysis.
\newblock In {\em CSF\/} (2010).

\bibitem{Sapper:asplos}
{\sc Li, X., Kashyap, V., Oberg, J.~K., Tiwari, M., Rajarathinam, V.~R.,
  Kastner, R., Sherwood, T., Hardekopf, B., and Chong, F.~T.}
\newblock Sapper: A language for hardware-level security policy enforcement.
\newblock In {\em ASPLOS\/} (2014).

\bibitem{libgcrypt}
Libgcrypt.
\newblock \url{https://www.gnu.org/software/libgcrypt/}.

\bibitem{lipp2016mobile}
{\sc Lipp, M., Gruss, D., Spreitzer, R., Maurice, C., and Mangard, S.}
\newblock Armageddon: Cache attacks on mobile devices.
\newblock In {\em USENIX Sec.\/} (2016).

\bibitem{liu2016cache}
{\sc Liu, F., Ge, Q., Yarom, Y., Mckeen, F., Rozas, C., Heiser, G., and Lee,
  R.~B.}
\newblock Catalyst: Defeating last-level cache side channel attacks in cloud
  computing.
\newblock In {\em HPCA\/} (2016).

\bibitem{liu2014random}
{\sc Liu, F., and Lee, R.~B.}
\newblock Random fill cache architecture.
\newblock In {\em MICRO\/} (2014).

\bibitem{Liu15}
{\sc Liu, F., Yarom, Y., Ge, Q., Heiser, G., and Lee, R.}
\newblock Last-level cache side-channel attacks are practical.
\newblock In {\em IEEE S\&P\/} (2015).

\bibitem{liu2013binary}
{\sc Liu, K., Tan, H. B.~K., and Chen, X.}
\newblock Binary code analysis.
\newblock {\em Computer\/} (2013).

\bibitem{livshits2015soundness}
{\sc Livshits, B., Sridharan, M., Smaragdakis, Y., Lhot\'{a}k, O., Amaral,
  J.~N., Chang, B.-Y.~E., Guyer, S.~Z., Khedker, U.~P., M{\o}ller, A., and
  Vardoulakis, D.}
\newblock In defense of soundiness: A manifesto.
\newblock {\em Commun. ACM\/} (2015).

\bibitem{livshits2003tracking}
{\sc Livshits, V.~B., and Lam, M.~S.}
\newblock Tracking pointers with path and context sensitivity for bug detection
  in c programs.
\newblock {\em SIGSOFT Softw. Eng. Notes\/} (2003).

\bibitem{machiry2017checker}
{\sc Machiry, A., Spensky, C., Corina, J., Stephens, N., Kruegel, C., and
  Vigna, G.}
\newblock {DR}. {CHECKER}: A soundy analysis for linux kernel drivers.
\newblock In {\em USENIX\/} (2017).

\bibitem{maurice2017hello}
{\sc Maurice, C., Weber, M., Schwarz, M., Giner, L., Gruss, D., Boano, C.~A.,
  Mangard, S., and R{\"o}mer, K.}
\newblock Hello from the other side: {SSH} over robust cache covert channels in
  the cloud.
\newblock In {\em NDSS\/} (2017).

\bibitem{mbedtls}
mbedtls.
\newblock \url{https://tls.mbed.org/}.

\bibitem{molnar2005program}
{\sc Molnar, D., Piotrowski, M., Schultz, D., and Wagner, D.}
\newblock The program counter security model: Automatic detection and removal
  of control-flow side channel attacks.
\newblock In {\em ICISC\/} (2005).

\bibitem{openssl}
Openssl.
\newblock \url{https://www.openssl.org/}.

\bibitem{Osvik+:2006}
{\sc Osvik, D.~A., Shamir, A., and Tromer, E.}
\newblock Cache attacks and countermeasures: the case of {AES}.
\newblock {\em CT-RSA\/} (2006).

\bibitem{multirun}
{\sc Pasareanu, C., Phan, Q.-S., and Malacaria, P.}
\newblock Multi-run side-channel analysis using symbolic execution and
  {Max-SMT}.
\newblock In {\em CSF\/} (2016).

\bibitem{Percival:2005}
{\sc Percival, C.}
\newblock Cache missing for fun and profit.
\newblock In {\em BSDCan\/} (2005).

\bibitem{raj2009resource}
{\sc Raj, H., Nathuji, R., Singh, A., and England, P.}
\newblock Resource management for isolation enhanced cloud services.
\newblock In {\em CCSW\/} (2009).

\bibitem{reps2008improved}
{\sc Reps, T., and Balakrishnan, G.}
\newblock Improved memory-access analysis for x86 executables.
\newblock In {\em CC\/} (2008).

\bibitem{Reynolds79}
{\sc Reynolds, J.~C.}
\newblock Reasoning about arrays.
\newblock {\em Commun. ACM\/} (1979).

\bibitem{ristenpart2009hey}
{\sc Ristenpart, T., Tromer, E., Shacham, H., and Savage, S.}
\newblock Hey, you, get off of my cloud: Exploring information leakage in
  third-party compute clouds.
\newblock In {\em CCS\/} (2009), ACM.

\bibitem{schmidt1998trace}
{\sc Schmidt, D.~A.}
\newblock Trace-based abstract interpretation of operational semantics.
\newblock {\em Lisp and Symbolic Computation\/} (1998).

\bibitem{Schwartz2010}
{\sc Schwartz, E.~J., Avgerinos, T., and Brumley., D.}
\newblock All you ever wanted to know about dynamic taint analysis and forward
  symbolic execution (but might have been afraid to ask).
\newblock In {\em Proceedings of the 2010 IEEE Symposium on Security and
  Privacy\/} (2010).

\bibitem{schwarz2018javascript}
{\sc Schwarz, M., Lipp, M., and Gruss, D.}
\newblock Javascript zero: Real javascript and zero side-channel attacks.
\newblock In {\em NDSS\/} (2018).

\bibitem{schwarz2018keydrown}
{\sc Schwarz, M., Lipp, M., Gruss, D., Weiser, S., Maurice, C., Spreitzer, R.,
  and Mangard, S.}
\newblock Keydrown: Eliminating software-based keystroke timing side-channel
  attacks.
\newblock In {\em NDSS\/} (2018).

\bibitem{gruss2017sgx}
{\sc Schwarz, M., Weiser, S., Gruss, D., Maurice, C., and Mangard, S.}
\newblock Malware guard extension: Using {SGX} to conceal cache attacks.
\newblock In {\em DIMVA\/} (2017).

\bibitem{shi2011cache}
{\sc Shi, J., Song, X., Chen, H., and Zang, B.}
\newblock Limiting cache-based side-channel in multi-tenant cloud using dynamic
  page coloring.
\newblock In {\em DSNW\/} (2011).

\bibitem{thakur12bilateral}
{\sc Thakur, A.~V., Elder, M., and Reps, T.~W.}
\newblock Bilateral algorithms for symbolic abstraction.
\newblock In {\em SAS\/} (2012).

\bibitem{reil}
{\sc Thomas, D., and Porst, S.}
\newblock {REIL}: A platform-independent intermediate representation of
  disassembled code for static code analysis.
\newblock In {\em CanSecWest\/} (2009).

\bibitem{tiwari2011crafting}
{\sc Tiwari, M., Oberg, J.~K., Li, X., Valamehr, J., Levin, T., Hardekopf, B.,
  Kastner, R., Chong, F.~T., and Sherwood, T.}
\newblock Crafting a usable microkernel, processor, and {I/O} system with
  strict and provable information flow security.
\newblock In {\em ACM SIGARCH Computer Architecture News\/} (2011), ACM.

\bibitem{Tromer10}
{\sc Tromer, E., Osvik, D., and Shamir, A.}
\newblock Efficient cache attacks on {AES}, and countermeasures.
\newblock {\em Journal of Cryptology 23}, 1 (2010), 37--71.

\bibitem{tsunoo2003crypt}
{\sc Tsunoo, Y., Saito, T., Suzaki, T., Shigeri, M., and Miyauchi, H.}
\newblock Cryptanalysis of {DES} implemented on computers with cache.
\newblock In {\em CHES\/} (2003).

\bibitem{wang2017cached}
{\sc Wang, S., Wang, P., Liu, X., Zhang, D., and Wu, D.}
\newblock {CacheD}: Identifying cache-based timing channels in production
  software.
\newblock In {\em USENIX Sec.\/} (2017).

\bibitem{wang2006covert}
{\sc Wang, Z., and Lee, R.~B.}
\newblock Covert and side channels due to processor architecture.
\newblock In {\em ACSAC\/} (2006).

\bibitem{wang2007new}
{\sc Wang, Z., and Lee, R.~B.}
\newblock New cache designs for thwarting software cache-based side channel
  attacks.
\newblock In {\em ISCA\/} (2007).

\bibitem{wang2008novel}
{\sc Wang, Z., and Lee, R.~B.}
\newblock A novel cache architecture with enhanced performance and security.
\newblock In {\em MICRO\/} (2008).

\bibitem{weiser2018data}
{\sc Weiser, S., Zankl, A., Spreitzer, R., Miller, K., Mangard, S., and Sigl,
  G.}
\newblock {DATA} {\textendash} differential address trace analysis: Finding
  address-based side-channels in binaries.
\newblock In {\em USENIX Sec.\/} (2018).

\bibitem{jan2018microwalk}
{\sc Wichelmann, J., Moghimi, A., Eisenbarth, T., and Sunar, B.}
\newblock {MicroWalk}: A framework for finding side channels in binaries.
\newblock In {\em ACSAC\/} (2018).

\bibitem{Wu12}
{\sc Wu, Z., Xu, Z., and Wang, H.}
\newblock Whispers in the hyper-space: High-speed covert channel attacks in the
  cloud.
\newblock In {\em USENIX Sec.\/} (2012).

\bibitem{xie2005scalable}
{\sc Xie, Y., and Aiken, A.}
\newblock Scalable error detection using boolean satisfiability.
\newblock In {\em POPL\/} (2005).

\bibitem{xu2011an}
{\sc Xu, Y., Bailey, M., Jahanian, F., Joshi, K., Hiltunen, M., and
  Schlichting, R.}
\newblock An exploration of {L2} cache covert channels in virtualized
  environments.
\newblock In {\em CCSW\/} (2011).

\bibitem{Yarom14}
{\sc Yarom, Y., and Falkner, K.}
\newblock {FLUSH+RELOAD}: A high resolution, low noise, {L3} cache side-channel
  attack.
\newblock In {\em USENIX Sec.\/} (2014).

\bibitem{yarom2016cachebleed}
{\sc Yarom, Y., Genkin, D., and Heninger, N.}
\newblock {CacheBleed}: A timing attack on {OpenSSL} constant time {RSA}.
\newblock Tech. rep., Cryptology ePrint Archive, Report 2016/224, 2016.

\bibitem{zhang3}
{\sc Zhang, D., Wang, Y., Suh, G.~E., and Myers, A.~C.}
\newblock A hardware design language for timing-sensitive information-flow
  security.
\newblock In {\em ASPLOS\/} (2015).

\bibitem{zhang2011homealone}
{\sc Zhang, Y., Juels, A., Oprea, A., and Reiter, M.~K.}
\newblock {HomeAlone}: Co-residency detection in the cloud via side-channel
  analysis.
\newblock In {\em IEEE SP\/} (2011).

\bibitem{Zhang12}
{\sc Zhang, Y., Juels, A., Reiter, M.~K., and Ristenpart, T.}
\newblock Cross-{VM} side channels and their use to extract private keys.
\newblock In {\em CCS\/} (2012).

\bibitem{Zhang13}
{\sc Zhang, Y., and Reiter, M.~K.}
\newblock D\"{u}ppel: Retrofitting commodity operating systems to mitigate
  cache side channels in the cloud.
\newblock In {\em CCS\/} (2013).

\end{thebibliography}

\begin{appendix}
\begin{figure}[t]
\section{Unknown Information Leaks in OpenSSL}
\label{sec:unknown-openssl}
\centering
\begin{lstlisting}
int BN_num_bits(const BIGNUM *a) {
 int i = a<@->@>top - 1;
 bn_check_top(a);

 if (BN_is_zero(a))
   return 0;
 return ((i * BN_BITS2) + BN_num_bits_word(a<@->@>d[i]));
}

int BN_num_bits_word(BN_ULONG @l@) {
  static const char bits[256]={
    0,1,2,2,3,3,3,3,4,4,4,4,4,4,4,4,
    ...
    8,8,8,8,8,8,8,8,8,8,8,8,8,8,8,8,
  };
  if (@l@ & 0xffff0000L) {
    if (l & 0xff000000L)
      <@\textbf{return bits[\textcolor{red}{l} >> 24] + 24;}@>
    else
      <@\textbf{return bits[\textcolor{red}{l} >> 16] + 16;}@>
  }
  else {
    if (@l@ & 0xff00L)
      <@\textbf{return bits[\textcolor{red}{l} >> 8] + 8;}@>
  else
      <@\textbf{return bits[\textcolor{red}{l}];}@>
  }
}
\end{lstlisting}
\caption{RSA information leaks found in OpenSSL (1.0.2f). Program secrets and
  their dependencies are marked as \textcolor{red}{red} and the leakage points
  are \textbf{boldfaced}.}
\label{fig:rsa-openssl}
\end{figure}

\begin{figure}[H]
  % EXTENDED VERSION
% \begin{figure}
\section{Scatter \& Gather Methods in OpenSSL}
\label{sec:scatter-gather}
\centering
\begin{lstlisting}
char* align(char* buf) {
  uintptr_t addr = (uintptr_t) buf;
  return (char*)(addr - (addr&(BLOCK_SZ-1)) + BLOCK_SZ);
}

void scatter(char* buf, char p[][16], int k) {
  for (int i = 0; i < N; i++) {
    buf[k+i*spacing] = p[k][i];
  } 
}

void gather(char* r,char* buf,int k) {
  for (int i = 0; i < N; i++) {
    r[i] = buf[k+i*spacing]; 
  }
}

\end{lstlisting}
\caption{Simple C program demonstrating the scatter \& gather methods in OpenSSL
  to remove timing channels. This program should be secure regarding our threat
  model, but it would become insecure by skipping the alignment function.}
\label{fig:scatter-gather}
\end{figure}

\begin{figure}[t]
\section{Unknown Information Leaks in mbedTLS}
\label{sec:unknown-mbedtls}
\centering
\begin{lstlisting}
int mbedtls_mpi_exp_mod(mbedtls_mpi *X, mbedtls_mpi *A,
  mbedtls_mpi *E, mbedtls_mpi *N, mbedtls_mpi *_RR)
{
  ...
  while (1) {
    @ei@ = (<@E->p@>[nblimbs] <@>>@> bufsize) & 1;
    ...
    @wbits@ |= (@ei@ <@<<@> (wsize - nbits));
    ...
    mpi_montmul(X, &W[@wbits@], N, mm, &T);
  }
  ...
}

static int mpi_montmul(mbedtls_mpi *A, mbedtls_mpi *@B@,
  mbedtls_mpi *N, mbedtls_mpi_uint mm, mbedtls_mpi *T)
{
   ...
   <@\textbf{m = (\textcolor{red}{B->n} < n) ? \textcolor{red}{B->n} : n;}@>
   for(i = 0; i < n; i++)
   {
     <@\textbf{u1 = (d[0] + u0 * \textcolor{red}{B->p[0]}) * mm;}@>
     <@\textbf{mpi_mul_hlp(m, \textcolor{red}{B->p}, d, u0);}@>
   }
   ...
}

\end{lstlisting}
\caption{RSA information leaks found in mbedTLS (2.5.1). Program secrets and
  their dependencies are marked as \textcolor{red}{red} and the leakage points
  are \textbf{boldfaced}.}
% \vspace*{0.9cm}
% EXTENDED VERSION
\label{fig:rsa-mbedtls}
\end{figure}

\begin{figure}[H]
% EXTENDED VERSION
% \begin{figure}
\section{Secret-Dependent Branch Conditions in OpenSSL}
\label{sec:branch-openssl}
\centering
\begin{lstlisting}
int BN_mod_exp_mont_consttime(BIGNUM *rr,
  const BIGNUM *a, const BIGNUM *p,
  const BIGNUM *m, BN_CTX *ctx,
  BN_MONT_CTX *in_mont) {
 ... 
 @bits@ = BN_num_bits(p);
 <@\textbf{if (\textcolor{red}{bits} == 0)}@>
   ...

 window = BN_window_bits_for_exponent_size(@bits@);
 <@\textbf{for (wvalue = 0, \textcolor{red}{i} = \textcolor{red}{bits}\%window; i>=0; i--,\textcolor{red}{bits}--)}@>
 {
   ...
   <@\textbf{while (\textcolor{red}{bits} >= 0)\{}@> 
     ...
   }
 }
  ...
}

#define BN_window_bits_for_exponent_size(@b@) \
       ((<@\textbf{\textcolor{red}b) > 671}@> ? 6 : \
        (<@\textbf{\textcolor{red}b) > 239}@> ? 5 : \
        (<@\textbf{\textcolor{red}b) >  79}@> ? 4 : \
        (<@\textbf{\textcolor{red}b) >  23}@> ? 3 : 1)
\end{lstlisting}
\caption{Several secret-dependent branch conditions found in OpenSSL (1.0.2f).
  Program secrets and their dependencies are marked as \textcolor{red}{red} and
  the information leakage conditions are \textbf{boldfaced}. Note that the
  output of \texttt{BN_num_bits} depends on the private key.}
%\vspace*{-3.5pt}
\label{fig:rsa-branch}
\end{figure}

% \documentclass[conference,table,dvipsnames,tikz]{IEEEtran}
% 
% \title{An Abstract Interpreter for REIL---The Soundness Proof Accompying the Manuscript ``Identifying Cache-Based Side
% Channels through \SAI''}
%\author{Yuyan Bao, Shuai Wang}
%\input{macro}

% \begin{document}
% \label{firstpage}
% \maketitle

\section{Formalization and Soundness Proof}
\label{sec:formalization-proof}
In this appendix section, we define necessary components to formalize the basis
of our abstract interpretation-based analysis. We also give soundness proof of
our abstraction.

\subsection{Concrete Semantics}
\label{subsec:concrete}
To facilitate a ``down-to-earth'' modeling of program cache access information,
our analysis is directly performed towards program binary executables. For the
ease of analysis, we translate input binary code to an intermediate language
(REIL~\cite{reil}) with low-level program representation (details are disclosed
in the implementation section). Our analysis is conducted on the basis of the
lifted REIL programs. Since we analyze cryptosystems and capture side channels
due to secret-dependent cache access, we assume REIL programs have been
annotated with locations of program secrets before feeding into our analysis
framework. In this section, we define the concrete semantics of REIL programs
upon which our abstraction is derived.

\begin{figure}[H]
$
\begin{array}{l@{\ }l@{\ }l@{\ }@{\ }l}
\mathsf{Prog} & p & ::= & c^\ast \\
%\mathsf{Label} & l & \in & L \\
\mathsf{Comm} & c &::= & r := e \mid r_1 :=\text{load}(r_2) \mid \text{jmp}(r_1,r_2) \\
&   &~~\mid & r_1 := \text{is\_zero}(r_2) \mid \text{store}(r_1,r_2) \\
\mathsf{Expr} & e &::= & r \mid n \mid e \Diamond e \\
\mathsf{Oper} & \Diamond &::= & + \mid - \mid \mathrm{AND} \mid \mathrm{BSH} \mid \mathrm{DIV} \mid \mathrm{MOD} \\
&   &~~\mid & \mathrm{MUL} \mid \mathrm{OR} \mid \mathrm{SUB} \mid \mathrm{XOR}
\end{array}
$
\caption{The syntax of the REIL language, where $r$ denotes a (temporary)
  register, and $n$ represents a 32-bit unsigned integer.}
\label{fig:reil-syntax}
\end{figure}

\subsubsection{Syntax}
\label{subsubsec:reil-syntax}
REIL is designed as a platform-independent intermediate language. One assembly
instruction is typically mapped into a sequence of REIL instructions to expose 
composed operations, implicit memory accesses, and CPU flag manipulations 
into explicit statements.
Each REIL instruction has a typical three-address representation with at most three 
operands (registers, constant, etc.).
REIL programs contain unlimited number of temporary registers (e.g., \texttt{t0}, \texttt{t1}) and also 
preserve the originally x86 CPU registers like \texttt{eax} and \texttt{esp}. 
Also, REIL uses a flat memory model, which has an infinite amount of storage.
In this paper we treat registers uniformly unless otherwise stated.

\F~\ref{fig:reil-syntax} defines the syntax of REIL programs. A program consists of
a possibly empty sequence of numbered commands. Typical commands include memory
operations, control-flow transfers as well as assignments.
The notation, $\Diamond$, stands for binary operators which compute two operands (i.e.,
$\mathsf{Expr}$) and use the output to update a third one.
For the ease of presentation, we note that immediate addressing, such as \texttt{$r:=load(n)$}, 
is encoded as \texttt{$r_1:=n$ $r:=load(r_1$)}.

\subsubsection{Concrete Semantics}
\label{sec:concrete-semantics}
In this section, we define the concrete semantics of secret-annotated REIL,
which is inspired by the work of Schwartz et al. and has operational
flavors~\cite{Schwartz2010}.
The program assumes the availability of user annotations that specify the location program secrets (e.g., 
a piece of private key or random numbers). In other words, users need to pinpoint some registers which contain
addresses of program secrets, for instance the location of private keys before program starts. 
Note that while the pinpointed content is confidential, memory addresses in those registers
are considered to be non-secret.
Besides those specified registers, we assume the rest stores public values. 
In the semantics definition, we annotate such pointers of secrets by specifying 
their values as elements in set $\Header$. Each $u \in \Header$ represents the 
address of program secrets in memory, for instance the beginning address of a private key array.
Addresses in $\Header$ are disjoint with those addresses storing public values, and it is easy to see that 
memory loading 
from $u \in \Header$ would create one piece of secret.
In addition, we define another set $\IM$  which contains all the literal numbers in the semantics definition.

To model secret information flow, we introduce two labels as $hi$ and $lo$. 
Label $hi$ denotes program secrets, such as elements in a private key array. In contrast, 
label $lo$ denotes program non-secret information. It is easy to see that $lo$ and $hi$ form
a simple lattice of two elements, where $hi$ sits on the upper level.
Therefore any program value stored in memory cells and registers can be represented 
as a pair of an integer and its security tag: $\V = \{\INTEGER\}
\times \{hi, lo\}$. For the ease of the definition, for each data $v \in \V$, we use
$v.n$ to represent the first projection of the pair (i.e., the content) while
$v.l$ gives us the associated security label.

\noindent \textbf{Program State.}~A \emph{program state} $s \in \textit{S}$ is a
tripe of a store, a memory and a program counter. A store, $\sigma$, is a
partial function that maps a register to its value. A memory, $\MEM$, is a
partial function that maps an address to its value.

A program is a sequence of numbered commands that are stored in a command table, $\theta$.
A program table maps a command number to its command, and 
a program counter $pc$ is used to fetch the command from $\theta$.
In \tabref{tb:sem-fun}, we present meta-functions
used in the definition of the semantics. 
\begin{table}
 \centering
\begin{tabular}{|r|l|}
\hline
Notation & Explanation \\
\hline 

$s.\sigma$ & the store $\sigma$, where $s = (\sigma, \MEM, pc)$ \\
\hline
$s.\MEM$ & the memory $\MEM$, where $s=(\sigma, \MEM, pc)$\\
\hline
\multirow{2}{*}{$s.pc$} & the program counter $pc$ in the state $s$, \\
 & where $s = (\sigma, \MEM, pc)$ \\
\hline
$a^\ast$ & the possibly empty sequence $a$ \\
\hline
\end{tabular}
\caption{Meta-functions used by the definition of REIL semantics.}
\label{tb:sem-fun}
\end{table}

\noindent \textbf{Concrete Semantics.}~The concrete semantics rules of REIL are
defined in \F~\ref{fig:reil-semantics}. $\mathcal{N}$ is defined as the standard
meaning function for numeric literals, and function $\MEANINGFUN{E}$ evaluates a
given expression. Function $\MEANINGFUN{MP}$ takes a program table and a program
state, and returns a new state. Function $\MEANINGFUN{MS}$ takes a program state
and a command, and produces a new state. With a bound check on each memory
access, error state may be generated. For simplicity, the bound check is omitted
in the definition.

As shown in \F~\ref{fig:reil-semantics}, $\MEANINGFUN{E}$ evaluates the input expression regarding the current program 
store $\sigma$,
and outputs a value pair $v \in \V$ which includes the computation output and its associated security label. 

The secret labels are propagated following the standard convention~\cite{Denning76}. As shown in
the $\MEANINGFUN{E}$ function, computations among tags of different security degrees naturally lead to $hi$, while 
security tags are preserved for the other cases.
Indeed recall $hi$ and $lo$ forms a lattice of two elements, it is easy to see that 
the least upper bound on the lattice is obtained at this step.

In \F~\ref{fig:reil-semantics}, the command, \texttt{x := e}, writes the value of \texttt{e} to the register \texttt{x}.
The command, \texttt{r$_1$:=load(r$_2$)} fetches the value from a memory location
whose address is in \texttt{r$_2$} and writes to it to \texttt{r$_1$}. 
Again, the value stored in \texttt{r$_2$} will be associated with label $hi$ 
whenever the memory address in \texttt{r$_1$} is within $\Header$. In contrast, 
the original tag (could be $hi$ or $lo$) of the loaded memory content will flow into the memory reader.
Memory store command \texttt{store(r$_1$,$r_2$)}
writes \texttt{r$_1$}'s value into the memory location pointed by 
\texttt{r$_2$}.
Command \texttt{x := is\_zero(y)} sets \texttt{x} to \texttt{(1,lo)} if \texttt{y} is zero and 
sets \texttt{x} to \texttt{(0,lo)} if \texttt{y} is one.
Command \texttt{jmp(r$_1$,r$_2$)} performs a
(conditional) jump towards the address stored in \texttt{r$_2$}, if predicate in \texttt{r$_1$} is true. Note that 
x86 direct jump (i.e., \texttt{jmp}) can be easily modelled by setting the condition as always true.

\begin{figure}[!t]
$
\begin{array}{l}
\MEANINGFUN{E}: \nonterm{Expr} \rightarrow \Store \rightarrow \V \\
\ME{r}{\sigma} = (\sigma(r), lo) \\
\ME{n}{\sigma} = (\mathcal{N}\synbracket{n}, lo) \\
\ME{e_1 \Diamond e_2}{\sigma} = \LETHEAD{v_1}{\ME{e_1}{\sigma}} ~ \LETIN \\
\quad  \LETHEAD{v_2}{\ME{e_2}{\sigma}} ~ \LETIN ~ v_1 ~ \Diamond ~ v_2 \\
\quad \mbox{where } v_1 ~ \Diamond ~ v_2 = \\
\quad \quad \IFHEAD{v_1.l \neq v_2.l} ~ \IFTHEN ~ (v_1.n ~ \MO{\Diamond} ~ v_2.n, hi) \\
\quad \quad \IFELSE ~ (v_1.n ~ \MO{\Diamond} ~ v_2.n, v_1.l) \\
\\
\MEANINGFUN{MP}: \nonterm{Command Table} \rightarrow \nonterm{State} \rightarrow  \nonterm{State} \\
\\
\MP{\theta[pc]}{\sigma, \MEM, pc} =  \\ 
\quad \LETHEAD{(\sigma', \MEM', pc')} { \MS{\theta(pc)}{\sigma,\MEM, pc}} ~ \LETIN \\
\quad \quad (\MS{\theta[pc']}{\sigma', \MEM', pc'} \\
\\
\MEANINGFUN{MS}: \Comm \rightarrow \nonterm{State} \rightarrow  \rightarrow\nonterm{State}  \\
\MS{r := e}{\sigma, \MEM, pc} = (\sigma[r \mapsto \ME{e}{\sigma}], \MEM, pc + 1)  \\
\\
\MS{r_1 := load(r_2)}{\sigma, \MEM, pc} = \LETHEAD{v}{\ME{r_2}{\sigma}} ~ \LETIN \\
\quad \IFHEAD{v.n \in \Header} ~ \IFTHEN \\
\quad\quad (\sigma[r_1 \mapsto (\MEM(v.n).n, hi)], \MEM, pc + 1)  \\
\quad \IFELSE \: (\sigma[r_1 \mapsto \MEM(v.n)], \MEM, pc + 1) \\

\\
\MS{store(r_1, r_2)}{\sigma, \MEM, pc} = \LETHEAD{v_1}{\ME{r_1}{\sigma}} ~ \LETIN ~ \\ \quad\LETHEAD{v_2}{\ME{r_2}{\sigma}} ~ \LETIN ~ (\sigma, \MEM[v_2 \mapsto v_1], pc + 1) \\
\\
\MS{r_1 := is\_zero(r_2)}{\sigma, \MEM, pc} = \LETHEAD{v}{\ME{r_2}{\sigma}} ~ \LETIN ~ \\
\quad \IFHEAD{v.n = 0} ~ \IFTHEN  ~ (s[r_1 \mapsto (1, lo)], \MEM, pc + 1) \\
\quad \IFELSE ~ (\sigma[r_1 \mapsto (0, lo)], \MEM, pc + 1) \\
 \\
\MS{jmp(r_1, r_2)}{\sigma, \MEM, pc} = \LETHEAD{v_1}{\ME{r_1}{\sigma}} ~ \LETIN ~ \\
 \quad \IFHEAD{v_1.n \neq false} ~ \IFTHEN  \\
 \quad \quad \LETHEAD{v_2.n}{\ME{r_2}{\sigma}} ~ \LETIN  ~ (\sigma, \MEM, v_2.n)\\
 \quad \IFELSE (\sigma, \MEM, pc + 1) \\
\\
%\MS{r_1:=malloc(r_2)}{\sigma, \MEM, pc} = \LETHEAD{v}{\ME{r_2}{\sigma}} ~ \LETIN ~ \\
%\quad \IFHEAD{v.n \geq 0} ~ \IFTHEN ~ \LETHEAD{n'}{allocate(\MEM, v.n)} ~ \LETIN \\
%\quad \quad \IFHEAD{v.l \neq hi} ~ \IFTHEN ~ (\sigma[r_1\mapsto (n', lo)], \MEM, pc + 1) \\
%\quad \quad \IFELSE (\sigma[x\mapsto (n', hi)], \MEM, pc + 1) \\
%\quad \IFELSE (\sigma, \MEM, pc + 1) \\
%\\
%\MS{free(r)}{\sigma, \MEM, pc} = (\sigma, \MEM, pc + 1) \\
%\\
%\MS{c_1; c_2}{s, l} = 
%\\
\end{array}
$
\caption{The concrete semantics rules of REIL, where $\MO{-}$ yields the semantics
  of operators. The \textit{allocate} function takes a memory and the expected allocation size, and returns a new address.}
\label{fig:reil-semantics}
\end{figure}

\subsubsection{Concrete Program Traces}
\label{sec:cpt}
A program trace is a finite non-empty alternating sequence of program states and actions, $s_0, \theta(s_0.pc), s_1, \theta(s_1.pc), \cdots, s_{n}$, such that $s_0$ is an initial state, and $\forall ~ i \in \{0, \cdots, n - 1\}$, $(s_{i + 1}) = \MS{\theta(s_i.pc)}{s_i}$, where $\theta$ is a command table.
A program computation is the powerset of program traces.
\textcolor{black}{Thus, for a given piece of input, the program execution generates a 
concrete computation tree whose path forms an execution trace $t$. 
If the computation is divergent, the execution trace is infinite.}

Let $tr = s_0 \theta(s_0.pc) s_1 \theta(s_1.pc) \cdots s_n$ be a computation trace. 
The function $\ROOT{tr} = s_0$ denotes $tr$'s root. 
The notation $\Indexed{tr}{i}$ is the trace ending with 
the $i$-th state in $t$, i.e., $\Indexed{tr}{i} = s_0, \theta(s_0.pc), s_1, \theta(s_1.pc), \cdots, s_{i-1}$.

The notation $tr \TRANSIT tr'$ means that there is a sequence of actions $a^\ast$, such that $\ROOT{tr'} = \MS{a^\ast}{\ROOT{tr}}$.

\subsubsection{Concrete Path-based Collecting Semantics}

The forward path-based collecting semantics maps a program counter $pc$
to the set of execution paths that end at the program 
point referred by $pc$: $\COLLCP: PC \rightarrow \powerset{\Trace}$, 
which is $\COLLCP(pc) = \Indexed{t}{pc}$, where $t$ is a program computation tree.

\subsection{Abstract Semantics}
\label{sec:abstraction}
This section explains necessary components of our abstraction. To this end, we formalize 
our abstract domain \sas\ and further establish a lattice over \sas. We then
present abstract transfer functions over \sas\ and the abstract collecting semantics, and also discuss the 
termination conditions.

\subsubsection{Abstract Values}
\label{sec:abstract-values}
This section proposes the design of a novel abstract domain named Secret-Augmented Symbolic domain (\sas). 
\sas\ is designed to be acquired by a standard abstract 
interpretation framework to analyze secret-aware software systems.
To do so, 
we start by defining abstract values $f \in \mathbf{AV}$. 
Each $f \in \mathbf{AV}$ composites symbols and constants via operators, and 
the grammar of $\mathbf{AV}$ is defined in \F~\ref{fig:abs-syntax}. 
Shortly 
we will show that \sas\ is defined as the powerset of $\mathbf{AV}$, meaning each element $e \in \sas$ is 
an abstract value set $\{f \mid f \in \mathbf{AV}\}$.

\begin{figure}[!t]
$
\begin{array}{l@{\ }l@{\ }c@{\ }l}
\mathsf{Abstract Value} & av &::=\; &\top \mid P \mid S \mid U \mid N \mid E \\ %\mid av_1 \INFOP av_2\\
\mathsf{Numbers} & N  & ::= & n \mid n_1 \INFOP n_2 \\
\mathsf{Public} & P & ::= & p \\
\mathsf{Secret} & S & ::= & s \mid s \INFOP n \mid n \INFOP s \mid \epsilon \oplus s \mid h \oplus s \\
     & & & \mid s_1 \INFOP s_2 \\
\mathsf{Header} & U & ::= & h \mid h \INFOP n \\
\mathsf{ESP} & E & ::= &\epsilon \mid \epsilon \oplus n \\
\mathsf{Ops} & \tilde{\Diamond} & ::= & \oplus \mid \otimes \\
\mathsf{Linear Ops} & \oplus &::= & + \mid - \\
\mathsf{Other Ops} & \otimes &::= &\times \mid \div \mid \% \mid \mathrm{AND} \mid \mathrm{OR} \mid \mathrm{XOR} \\
& & & \mid \mathrm{SHIFT} \\
\end{array}
$
\caption{Grammar of $\mathbf{AV}$.}
\label{fig:abs-syntax}
\end{figure}

For the ease of presentation and analysis, 
\F~\ref{fig:abs-syntax} puts abstract values into \textit{disjoint categories}; each category has its own notation. 
We now elaborate on different categories of abstract values $f \in \mathbf{AV}$. 

\begin{itemize}
\item $P$: this category only has one unique identifier $p$, which presents any non-secret program 
information. By keeping track of public information with only one identifier $p$,
$P$ enables a coarse-grained and scalable representation of program semantics.
\item $S$: this category includes secret-carry abstract values, 
where each piece of program secret are represented by one symbol $s_{i}$. 
In other words, $S = \{f \in \mathbf{AV} \mid \exists s\in \{s_i\}~\text{s.t.}~s~\text{is a subterm of}~f\}$
where $\{s_i\}$ represents the set of secrets that our analysis intends to keep track of.
For example, an abstract memory address $s_0 + 4$ means the base address is derived from a piece of program secret.
Note that in this way, different pieces of program secrets can be represented by the variable $s_{i}$ with 
different indexes. Secrets are kept track of in a fine-grained approach, 
and an element in this category is named a $S$-value.
\item $U$: this category represents special memory addresses that point to secrets before program execution.
To be more specific, we create special identifier $h$ to present the base address of a secret memory
  region, e.g., the header of a private key array. Before execution, we assume that pointers referring 
  to the base address of secret memory chunk have been initialized with identifier $h$. Hence 
$U = \{f \in \mathbf{AV} \mid h~\text{is identifical to or a subterm of}~f\}$.
  A value $u \in U$ in this category is called a $U$-value.
\item $E$: this category represents abstract memory addresses of x86 stack.
To this end, we create special identifier $\epsilon$ to present the initial value of stack register 
   \texttt{esp}.
  Before analysis, we assume that the value set of register \texttt{esp}
  has been initialized with $\{\epsilon\}$. Hence 
  $E = \{f \in \mathbf{AV} \mid \epsilon~\text{is identifical to or a subterm of}~f\}$. 
  A value $e \in E$ in this category is named as a $E$-value.
 Also, memory offsets of values in $E$ should be constant (\F~\ref{fig:abs-syntax}), 
 since $E$ is defined as disjoint with other categories.
\end{itemize}

At this step we have introduced several kinds of abstract identifiers and categories, and 
strict partial order $\succ$ can be defined over these values from each category with respect to granularities of 
semantics abstraction.\footnote{It is easy to see that asymmetric (if $C_1 \succ C_2$ holds then $C_2 \succ C_1$ does not),
transitive (if $C_1 \succ C_2$ and $C_2 \succ C_3$ then $C_1 \succ C_3$), and irreflexive ($C_i \succ C_i$ is impossible) 
relations hold. Hence strict partial order over the defined categories is established.}
Nevertheless, it is not possible to define a complete order, since some 
identifiers are not comparable, for instance $u \in U$ and $e \in E$. 

\F~\ref{fig:absvalues} presents 
a lattice of aforementioned categories over $\mathbf{AV}$ with respect to different abstraction levels. 
Obviously, $\top$ 
over-approximates any program information, and hence it is located on the top of the lattice. $p \in P$ represents
all the program public information, and therefore exhibits the least upper bound of any abstract values 
$f \in (N \cup E \cup U)$. On the other hand, $s \in S$ stands for the program secrets and information derived from secrets, 
hence $s \in S$ is not comparable to $p \in P$.
Also, there is no direct comparison between $N$, $U$, and $E$, 
although all of them deliver finer-grained semantics information  comparing to $P$.

\begin{figure}[!t]
\begin{center}
\begin{tikzpicture}[level distance=2.5em, level 1/.style={sibling distance=4em}, level 2/.style={sibling distance=2.7em}]
\node {$\top$}
    child{ node{ $p$}
	  child{ node {$n$} }
	  child{ node {$u$} }
	  child{ node {$e$}}}
    child{ node {$s$} }
; 
\end{tikzpicture}
\caption{Abstraction lattice for abstract values in $\ABSV$, 
where $s \in S$, $u \in U$, $n \in N$, $e \in E$.}
\label{fig:absvalues}
\end{center}
\end{figure}
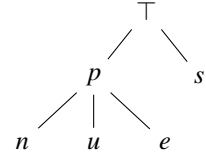

\subsubsection{Domain Definitions}
\label{subsec:sas-domain}
We now elaborate on the design of our proposed abstract domain. The proposed
abstract domain is named as \textit{Secret-Augmented Symbolic} domain (\sas),
since it augments the modeling of secret information by performing fine-grained 
tracking of program secrets and dependencies on the secrets, 
while only delivers coarse-grained overestimations towards public information.

\begin{definition} \label{def:av}
Let $\mathbf{AV}$ be the set of abstract values. Then
\[
    \textbf{\rm\bf SAS} = \powerset{\mathbf{AV}}
\]
forms a domain whose elements are subsets of all valid abstract values.
\end{definition}

We now make $\textbf{SAS}$ as a lattice. To do so, we will specify a top
element $\top \in \textbf{SAS}$ and a join operator $\sqcup$ over
$\textbf{SAS}$.

\noindent \textbf{Set Collapse and Bound.}~Each element in \sas\ is a set of
formulas defined in $\mathbf{AV}$. Considering formulas with different degrees
of abstractions may exist in one set, here we define reasonable rules to
``collapse'' elements in a set. The ``collapse'' function $\text{COL} :
\textbf{SAS} \rightarrow \textbf{SAS}$ is given by:

\[
\scalemath{1.0}{
\text{COL}(X) =
   \begin{cases}
      \{\top\}           & \textrm{if}\; \top \in X \\
      \{\top\}           & \textrm{else if}\; p \in X\land S \cap X \neq \varnothing\\
      \{p\}    & \textrm{else if}\; p \in X\land S \cap X = \varnothing\\
      % X - \{\bot\} & \textrm{else}\\
      X & \textrm{otherwise}\\
   \end{cases}
}
\]
 
While the first three rules introduce single symbols as a safe and concise
approximation, the last rule preserve a set in \sas.

In addition, each set in \sas\ is also bounded with a maximum size of $N$
through function $\text{BOU}$ as follows:

\[
\scalemath{1.0}{
\text{BOU}(X) =
   \begin{cases}
      \{\top\}           & \textrm{if}\; \vert X\vert > N \land S \cap X \neq \varnothing\\
      \{p\}    & \textrm{else if}\; \vert X\vert > N \land S \cap X = \varnothing\\
      X  & \textrm{otherwise}\\
   \end{cases}
}
\]

Hence, the abstract value set of any variable is bounded by $N$ during
computations within \sas, which practically speedups the convergence of our
analysis.

With $\text{COL}$ and $\text{BOU}$ defined, we can finally complete
$\textbf{SAS}$ as a lattice.
\begin{claim}
$\text{\rm\bf SAS} = \powerset{\mathbf{AV}}$ forms a lattice with the
top element 
\[\top_{\textbf{\rm\bf SAS}} = \{\top\}\]
bottom element 
\[\bot_{\textbf{\rm\bf SAS}} = \{ \}\]
and the join operator
\[
\sqcup = \text{\rm BOU} \circ \text{\rm COL} \circ \cup
\]
\end{claim}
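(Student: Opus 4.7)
The plan is to verify the three join-semilattice axioms for $\sqcup$ together with the absorption laws for $\top_{\textbf{SAS}}$ and $\bot_{\textbf{SAS}}$, and then derive the induced partial order via $X \sqsubseteq Y$ iff $X \sqcup Y = Y$. A useful preliminary is to note that $\text{BOU} \circ \text{COL}$ is an idempotent projection of $\textbf{SAS}$ onto its ``canonical'' image $\mathcal{C}$ (the fixed points of that composition), and the range of $\sqcup$ lies entirely in $\mathcal{C}$; so the lattice structure is effectively a quotient lattice on $\mathcal{C}$, and I would state each axiom with respect to canonical inputs.

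First I would dispose of the straightforward axioms. Commutativity of $\sqcup$ is inherited from commutativity of $\cup$. Idempotence on $\mathcal{C}$ follows from $X \cup X = X$ combined with canonicity. The bottom law $X \sqcup \{\} = X$ reduces to $\text{BOU}(\text{COL}(X)) = X$ for canonical $X$. The top law $X \sqcup \{\top\} = \{\top\}$ holds because $\top \in X \cup \{\top\}$ triggers the first clause of $\text{COL}$, producing $\{\top\}$, which $\text{BOU}$ preserves since the result has cardinality one.

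The hard part will be associativity of $\sqcup$, since $\text{COL}$ and $\text{BOU}$ are non-injective and do not distribute pointwise over $\cup$. I would carry out a case analysis on which clause of $\text{COL}$ fires on the combined set $X \cup Y \cup Z$: if $\top$ appears in any input, both parenthesizations collapse to $\{\top\}$; if a $P$-value meets an $S$-value across the three inputs, both sides again escalate to $\{\top\}$; and the remaining two $\text{COL}$ branches are symmetric. Within each branch, the $\text{BOU}$ step depends only on two predicates — ``does the union contain an $S$-value?'' and ``does the cardinality exceed $N$?'' — both of which are monotone in $X \cup Y \cup Z$, so the two parenthesizations produce the same canonical result. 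The anticipated subtlety is that merging canonical sets of size $N$ can exceed the bound and trigger a size-collapse on one side before the other; I would verify explicitly that whichever side collapses first is matched by the other collapsing the same way (to $\{p\}$ when no secret is present, to $\{\top\}$ when one is), by appealing to the monotonicity of the two $\text{BOU}$ predicates.

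Once associativity is secured, the claim that $\sqcup$ is the least upper bound in $(\mathcal{C}, \sqsubseteq)$ and that $\{\top\}$ and $\{\}$ are the maximum and minimum follows directly from the absorption laws established above. Finite meets are then recoverable as $X \sqcap Y = \bigsqcup\{Z \in \mathcal{C} : Z \sqsubseteq X \text{ and } Z \sqsubseteq Y\}$, which is well-defined because $\{\}$ is always a lower bound and the set of lower bounds of any two canonical elements is nonempty and closed under $\sqcup$; this completes the lattice structure.
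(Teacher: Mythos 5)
The paper states this claim without any accompanying proof (and likewise leaves Theorem~\ref{thm:col} unproven), so there is no argument of record to compare yours against; your proposal is, in effect, supplying the missing proof, and its overall strategy is sound. Your key structural observation --- that $\text{BOU} \circ \text{COL}$ is an idempotent normalization and that the algebraic laws should be verified on its image $\mathcal{C}$ of canonical sets --- is not optional polish but a necessary repair of the statement: on the full carrier $\powerset{\mathbf{AV}}$ the operator $\sqcup$ is not even idempotent (e.g.\ $\{p, n\} \sqcup \{p, n\} = \{p\}$), so the induced relation $X \sqsubseteq Y \Leftrightarrow X \sqcup Y = Y$ fails reflexivity and the claim is false as literally written. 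Your identification of associativity as the only non-trivial axiom is also correct, and the intended argument --- that whether the final result is $\{\top\}$, $\{p\}$, or the plain union is determined by predicates of the total union $X \cup Y \cup Z$ --- does go through, essentially because an early collapse to $\{p\}$ or $\{\top\}$ leaves a marker that re-triggers the corresponding $\text{COL}$ clause at the next step, and because $\vert X \cup Y\vert > N$ implies $\vert X \cup Y \cup Z\vert > N$.

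Two steps deserve tightening before this counts as a complete proof. First, your phrase that the two $\text{BOU}$ predicates are ``monotone in $X \cup Y \cup Z$'' is not quite the right invariant: after a collapse the intermediate operand is a singleton, so the cardinality predicate is not literally monotone along the computation; what must be proved is the transfer property that a size-triggered collapse of $X \cup Y$ to $\{p\}$ (resp.\ $\{\top\}$) forces the other parenthesization to reach the same normal form via the $p$-membership (resp.\ secret-membership) clause of $\text{COL}$ or via $\vert X \cup Y \cup Z\vert > N$ at its final $\text{BOU}$. You anticipate this subtlety but do not discharge it; the case split should be written out. Second, the closing construction of meets as $\bigsqcup$ of all common lower bounds is under-justified: $\mathcal{C}$ is infinite (there are infinitely many numerals and secret identifiers), the set of lower bounds of, say, $\{p\}$ is infinite, and closure of that set under \emph{binary} $\sqcup$ does not by itself yield a maximum. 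You need either a direct case analysis showing the meet exists (it does: for two plain canonical sets the lower bounds are their common subsets; for $\{\top\}$ and $\{p\}$ the relevant maxima are the other argument and its secret-free part, respectively), or an argument that $(\mathcal{C}, \sqsubseteq)$ is join-complete, neither of which is currently present.
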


\begin{theorem} \label{thm:col} The definition of $\sqcup$ is sound.
\end{theorem}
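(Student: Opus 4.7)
The plan is to read soundness of the \emph{definition} of $\sqcup$ as well-formedness: I need to certify that $\sqcup = \text{BOU} \circ \text{COL} \circ \cup$ is a total, unambiguous function $\textbf{SAS} \times \textbf{SAS} \to \textbf{SAS}$. This splits into three obligations, one per factor in the composition, which I would discharge in order.

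First, for the outermost $\cup$: any $X_1, X_2 \in \textbf{SAS} = \powerset{\mathbf{AV}}$ are subsets of $\mathbf{AV}$, so $X_1 \cup X_2$ is again a subset of $\mathbf{AV}$ and therefore lies in $\textbf{SAS}$. This already ensures that the argument subsequently passed to $\text{COL}$ is a legitimate element of the domain.

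Second, for $\text{COL}$ (and symmetrically $\text{BOU}$) I would verify two properties. Unambiguity: the clauses are sequenced by ``else if'', which I interpret as conjoining each guard with the negations of all earlier guards; under this reading the guards become pairwise disjoint, and together with the final ``otherwise'' clause they cover every $X \in \textbf{SAS}$, so exactly one branch applies to any input. Closure under $\textbf{SAS}$: each branch returns either $\{\top\}$, $\{p\}$, or the input $X$ itself; since $\top, p \in \mathbf{AV}$ by the grammar in Figure~\ref{fig:abs-syntax} and $X$ is already a subset of $\mathbf{AV}$, all three outputs lie in $\powerset{\mathbf{AV}} = \textbf{SAS}$. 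The argument for $\text{BOU}$ is identical in structure: its three cases are likewise sequentially guarded, and every branch emits $\{\top\}$, $\{p\}$, or $X$, so both unambiguity and closure follow from the same reasoning applied to the cardinality predicate $\vert X\vert > N$ in place of membership of $\top$ and $p$.

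Composing the three checks yields a total, well-defined function $\textbf{SAS} \times \textbf{SAS} \to \textbf{SAS}$, which is precisely the content of the theorem. The main obstacle I anticipate is being careful about the ``else if'' reading: as literally stated, the clauses rely on natural-language ordering for disambiguation, and overlapping conditions (for instance, an $X$ with both $\top \in X$ and $p \in X$ satisfies the guards of multiple clauses in isolation) could in principle make the specification ill-posed. I would therefore first rewrite the case tables of $\text{COL}$ and $\text{BOU}$ into an explicit disjoint-guard normal form by conjoining each guard with the negations of its predecessors, and only then check mutual exclusion and exhaustiveness; once this normalization is in place the remaining verification is a routine case analysis over at most a handful of distinguishing predicates.
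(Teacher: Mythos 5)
There is a genuine gap here: you have proved the wrong property. You read ``sound'' as ``well-defined'' (total, unambiguous, closed over $\textbf{SAS}$), but that is not the property the paper needs from \thmref{thm:col}, nor the property that the later proofs invoke. Every use of this theorem in the soundness proof of the abstract semantics has the shape ``$\alpha(\cdots) \subseteq A \sqcup B$ holds by the soundness of $\sqcup$'' --- for instance $\alpha(\MEM_0(v.n).n, hi) \subseteq (\{s\} \sqcup \INFSTORE_0(r_1))$ in the \texttt{load} case, and obligation (2) in the \texttt{store} case. What is being used there is the \emph{over-approximation} (upper-bound) property of the join: joining two abstract value sets must not lose any concrete value that either argument represents, i.e.\ $\INFCONCRETE(X) \cup \INFCONCRETE(Y) \subseteq \INFCONCRETE(X \sqcup Y)$ for all $X, Y \in \textbf{SAS}$. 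Mere totality and unambiguity of $\text{BOU} \circ \text{COL} \circ \cup$ cannot discharge any of those obligations: a function that mapped every pair to $\{\}$ would also be total and unambiguous, yet obviously unsound in the intended sense.

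The substantive content of the theorem is exactly the part your proposal never touches: that the collapsing rules of $\text{COL}$ and $\text{BOU}$ only ever \emph{enlarge} the concretization. Concretely, one argues (i) $\INFCONCRETE(X \cup Y) = \INFCONCRETE(X) \cup \INFCONCRETE(Y)$, since $\gamma$ is defined as the union of the gadget concretizations $\gamma^{\circ}$; (ii) $\text{COL}$ is extensive with respect to $\gamma$: if $\top \in X$ the result $\{\top\}$ concretizes to everything; if $p \in X$ and $S \cap X = \varnothing$, every element of $X$ lies in $N \cup U \cup E \cup \{p\}$, and each of $\gamma^{\circ}(n)$, $\gamma^{\circ}(u)$, $\gamma^{\circ}(e)$ is contained in $\gamma^{\circ}(p) = \{(n, lo) \mid n \in \mathbb{Z}\}$, so $\INFCONCRETE(X) \subseteq \INFCONCRETE(\{p\})$; (iii) the same case analysis for $\text{BOU}$, where you must additionally use the fact that $\text{COL}$ runs \emph{first}: after $\text{COL}$, any set still containing $\top$ has already been collapsed to $\{\top\}$, so $\text{BOU}$'s $\{p\}$ clause can never fire on a set containing $\top$ --- if it could, collapsing a $\top$-containing set to $\{p\}$ would drop the $hi$-labelled concrete values and soundness would fail. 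That ordering subtlety is the one place where the definition could actually go wrong, and it is invisible to a pure well-formedness check. Your normalization of the ``else if'' guards is fine as far as it goes, but it should be a preliminary remark, not the theorem's proof.
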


\subsubsection{Abstract Semantics}
\label{subsec:abs-semantics}
This section defines the abstract semantics. In general, abstract semantics defines 
computations over abstract value sets
$\{f \mid f \in \mathbf{AV} \}$. A program's abstract state
is a triple of an abstract store ($\INFSTORE$), an abstract memory ($\INFMEM$),
and a set of program counters. An abstract store is a partial function that maps
a register to a set of abstract values; an abstract memory is a partial function
that maps an abstract value (i.e., abstract memory address) to a set of abstract values.

To interpret abstract memory accesses, the classic challenge is to reason point-to relationships.
It should be noted that aliasing analysis has been extensively studied in literatures to 
date~\cite{Guo2005,kinder2010static,fernandez2002speculative,liu2013binary,debray1998alias},
and therefore in this section we assume to acquire point-to information from a standby module 
along the analysis. In particular,
we assume that each memory pointer maintains a set of pointers which may be
aliased with the pointer. Therefore, memory loading via a pointer indeed joins the value sets of 
its aliased memory locations, while memory storing updates all the aliased locations. 
For the ease of presentation, we omit this part in \F~\ref{fig:infsemantics}.

\begin{figure}[!ht]
\begin{displaymath}
\begin{array}{l}
\MEANINGFUN{\tilde{E}}: \Expr \rightarrow \textit{AbsStore}  \rightarrow \powerset{\mathbf{AV}} \\

\INFME{r}{\INFSTORE} = \INFSTORE(r) \\

\INFME{n}{\INFSTORE} = \{n\} \\

\INFME{e_1 \Diamond e_2}{\INFSTORE} = \{ a \INFOP b \mid a \in \INFME{e_1}{\INFSTORE}, b \in \INFME{e_2}{\INFSTORE}\} \\
\\
\MEANINGFUN{\hat{MP}}: \nonterm{Command Table} \rightarrow \powerset{\nonterm{AbstractState}} \rightarrow \\
\qquad~ \powerset{\nonterm{AbstractState}} \\
\INFMP{\theta}{\hat{S}} = \bigcup_{\hat{s} \in \hat{S}}\{\INFMS{\theta(pc)}{\hat{s}}\}_{pc \in \hat{s}.pc} \\
\\
\MEANINGFUN{\hat{MS}}: \Comm \rightarrow \nonterm{AbstractState} \rightarrow \nonterm{AbstractState} \\
\INFMS{r := e}{\INFSTORE, \INFMEM, \INFPC} = (\INFSTORE[r \mapsto \INFSTORE(r) \sqcup \INFME{e}{\INFSTORE}], \INFMEM, \INFPC + 1)  \\
\\
\INFMS{r_1 := load(r_2)}{\INFSTORE, \INFMEM, \INFPC} = \\
~~ \LET ~ V = \\
~~~~\CASEHEAD{\INFSTORE(r_2)} \\
~~~~~~\CASE{\top}{\top} \\
~~~~~~\CASE{P}{\top} \\
~~~~~~\CASE{S}{\{s_i\} \sqcup \INFSTORE(r_1)~\text{where}~s_i~\text{is fresh secret identifier}} \\
~~~~~~\CASE{U}{\{s_i\} \sqcup \INFSTORE(r_1)~\text{where}~s_i~\text{is fresh secret identifier}} \\
~~~~~~\CASE{E}{\INFMEM{(\INFSTORE(r_2))} \sqcup \INFSTORE(r_1)} \\
~~~~~~\CASE{N}{\INFMEM{(\INFSTORE(r_2))} \sqcup \INFSTORE(r_1)} \\
~~~~\CASEEND \\ 
~~\LETIN ~ (\INFSTORE[r_1 \mapsto V], \INFMEM, \INFPC + 1) \\
\\
\INFMS{store(r_1, r_2)}{\INFSTORE, \INFMEM, \INFPC} = (\INFSTORE, \INFMEM', \INFPC + 1) \\
~~\mbox{where } \INFMEM' = \forall ~ v \in \INFSTORE(r_1). \INFMEM[v \mapsto (\INFMEM(v) \sqcup \INFSTORE(r_2))] \\
\\
\INFMS{r_1 := is\_zero(r_2)}{\INFSTORE, \INFMEM, \INFPC} = \\
~~\IFHEAD{\INFSTORE(r_2) \subseteq N} \IFTHEN ~ \\
~~~~~~\CASEHEAD{\INFSTORE(r_2)} \\
~~~~~~~~\CASE{\{0\}}{(\INFSTORE[r_1 \mapsto \{0\}], \INFMEM, \INFPC + 1)} \\
~~~~~~~~\CASE{0 \not\in V}{(\INFSTORE[r_1 \mapsto\{1\}], \INFMEM, \INFPC + 1)} \\
~~~~~~~~\CASEELSE ~ (\INFSTORE[r_1 \mapsto \{0, 1\}], \INFMEM, \INFPC + 1) \\
~~~~~~\CASEEND \\
~~\IFELSE ~ (\INFSTORE[r_1 \mapsto \{0, 1\}], \INFMEM, \INFPC + 1) \\
\\
\INFMS{jmp(r_1, r_2)}{\INFSTORE, \INFMEM, \INFPC} = (\INFSTORE, \INFMEM, 
(\INFPC + 1) \cup \INFSTORE(r_2)) \\
\\
\end{array}
\end{displaymath}
\caption{The abstract semantics of REIL. The notation $\INFPC + 1$ is a short hand for $\{pc + 1 \mid pc \in \INFPC\}$.}
\label{fig:infsemantics}
\end{figure}

\F~\ref{fig:infsemantics} presents the definition of abstract semantics.
Function $\MEANINGFUN{\tilde{E}}$ evaluates a given abstract expression, and 
function $\MEANINGFUN{\tilde{MP}}$ takes a program 
table and a set of abstract states, and returns a new set of abstract states.
Function $\MEANINGFUN{\tilde{MS}}$ takes a program abstract state and 
a command, and computes a new abstract state.
As noted earlier, we assume pointers referring to the base memory address of program secrets 
have been pinpointed 
and initialized with special identifier $h \in U$.
Also, we follow the standard convention to 
perform ``weak update'' when updating the value sets of variables.

Note that $\MEANINGFUN{\tilde{E}}$ defines the evaluation rules of abstract expressions, and 
given two abstract values $f_1, f_2 \in \mathbf{AV}$, we define abstract 
operator $\INFOP$ in \F~\ref{fig:absop}.
The first six cases reasonbly over-approximate computations among 
$f \in \mathbf{AV}$ with different degrees of abstraction. The seventh case would be applied 
to compute two pieces of constants, while the last case is applied if no
other case can be matched. Indeed the last case concatenates two abstract values in
$\mathbf{AV}$ with an abstract operator. 

\begin{figure}[!th]
\begin{displaymath}
\begin{array}{l}
 v_1 ~ \INFOP ~ v_2 = \CASEHEAD{(v_1, v_2)} \\
 ~~~~\CASE{(v, \top)}{\top} \\
 ~~~~\CASE{(\top, v)}{\top} \\
 ~~~~\CASE{(p, s)}{\top} \mbox{ where } s \in S \\
 ~~~~\CASE{(s, p)}{\top} \mbox{ where } s \in S \\
 ~~~~\CASE{(p, v)}{p} \mbox{ where } v \not\in S\\
 ~~~~\CASE{(v, p)}{p} \mbox{ where } v \not\in S\\
 ~~~~\CASE{(n_1, n_2)}{n_1 \INFOP n_2} \\  
 ~~~~\CASE{(av_1, av_2)}{av_1 \INFOP av_2} \\  
 ~~\CASEEND
 \end{array}
 \end{displaymath}
 \caption{Definition of abstract binary operations where $v_1 \in \mathbf{AV}$
   and $v_2 \in \mathbf{AV}$.}
 \label{fig:absop}
 \end{figure}

$\MEANINGFUN{\tilde{MS}}$ defines memory operations within our abstract domain.

Memory loading yields a $\top$ as a safe over-estimation if the memory address
$addr \in P$ or $addr \in \top$. For memory address $addr \in S \cup U$, we create a fresh piece of secret
identifier $s_i$ to update the register. For the other cases, we 
use the recorded content
to perform a weak update. 

The abstract semantics of the store command ($store(r_1, r_2)$) is straightforward.
If $r_1$ stores the value $\top$, then the whole memory is updated to $\top$.
In this case, the program will terminate immediately. 
Otherwise, values of $r_2$ will be added to the memory location which is pointed to by $r_1$.

\subsubsection{Termination}
\label{subsec:termination}
One key reason to use the abstract interpretation framework is that it
guarantees the termination of the analysis within finite computation steps, even
though a precise modeling of program semantics may not be achievable. To do so,
the abstract domain is required to form a lattice of finite height, and abstract
transfer functions defined over the abstract domain needs to be monotonic.

It is easy to see that all the abstract transfer functions defined in
$\mathbf{SAS}$ are monotonic (see \S~\ref{subsec:abs-semantics}), and we have
also shown that \sas\ forms a lattice of finite height by defining the
$\text{BOU}$ function to bound the size of each element (i.e., $\{f \mid f \in
\mathbf{AV}\}$) in \sas. Hence, the termination guarantee of \sas\ is naturally
established.

\subsubsection{Abstract Program Traces}
A program abstract trace is a finite non-empty alternating sequence of 
abstract program states and actions,
 $\hat{s_0}, \theta(\hat{s_0}.pc) \hat{s_1} \theta(\hat{s_1}.pc) \cdots \hat{s_n}$, 
 such that $\forall i \in \{0, \cdots, n - 1\}$, $\hat{s_{i + 1}} = \INFMS{a}{\hat{s_i}}$.
Compared with the concrete program traces defined in \secref{sec:cpt}, 
the abstract data represents properties of the concrete one. 
For given abstract input data, the program execution generates an abstract computation tree.

Due to the abstraction, a terminating concrete computation may terminate, but its abstract one may not.
Following Schmidt's work~\cite{schmidt1998trace}, we 
build a \emph{regular} computation tree~\cite{cousineau1979rational}, so that 
trace building is terminated at the nodes that have been seen earlier in the path~\cite{schmidt1998trace}.

\subsubsection{Abstract Path-based Collecting Semantics}
The abstract forward path-based collecting semantics maps an abstract 
program counter to the set of abstract execution paths that 
end with the program 
counter : $\COLLAP: \textit{AbstractPC} \rightarrow \powerset{\textit{AbstractTraces}}$, which is defined as $
\COLLAP(\hat{pc}) = \Indexed{\hat{t}}{\INFPC}$, where $\hat{t}$ is a computation tree.

\subsection{Soundness}
\label{sec:soundness}
In this section we give the soundness proof of our abstraction. We start by
defining the abstraction and concretization functions. We then present the
proofs accordingly.

\subsubsection{Abstraction and Concretization Functions}

\begin{figure}
\begin{displaymath}
\begin{array}{l}
\INFABS(v : \V) = \\
~~\CASEHEAD{v} \\
~~~~ \CASE{(n, hi)}{\{s\}} \\
~~~~ \CASE{(n, lo)}{\{n\}} \mbox{ where } n \in \IM\\
~~~~ \CASE{(n, lo)}{\{u\}} \mbox{ where } n \in \Header\\
~~~~ \CASE{(n, lo)}{\{e\}} \mbox{ where } n \in \Stack \\
~~~~ \CASE{(n, lo)}{\{p\}} \mbox{ where } n \in \mathbb{Z} \land n \not\in \Header \cup \IM \cup \Stack\\
~~\CASEEND \\
\\
\INFCONCRETE^{\circ}(\tilde{v} : \mathbf{AV}) = 
~~\CASEHEAD{\tilde{v}} \\
~~~~ \CASE{n}{\{(n, lo) \mid  n \in \IM \}} \\
~~~~ \CASE{p}{\{(n, lo) \mid  n \in \mathbb{Z} \}} \\
~~~~ \CASE{u}{\{(n, lo) \mid  n \in \Header\}} \\
~~~~ \CASE{e}{\{(n, lo) \mid  n \in \Stack \}} \\
~~~~ \CASE{s}{\{(n, hi) \mid n \in \mathbb{Z}\}} \\
~~~~ \CASE{\top}{\{(n, t) \mid n \in \mathbb{Z}, t \in \{lo, hi\}\}} \\
~~\CASEEND \\
\\
\gamma(S) := \bigcup_{f \in S}\gamma^{\circ}(f)
\\
\end{array}
\end{displaymath}
\caption{The definition of abstraction function $\INFABS$ and concretization function $\INFCONCRETE$. 
$\Stack$ defines the set of stack memory addresses, and 
$s \in S$, $u \in U$, $n \in N$, $e \in E$.}
\label{fig:abs-conc}
\end{figure}

\F~\ref{fig:abs-conc} defines the abstraction function $\INFABS$ and the
concretization $\INFCONCRETE$ function. The abstraction function $\INFABS$ is
defined over concrete values $\alpha: \V \rightarrow \textbf{SAS}$. As shown in
\F~\ref{fig:abs-conc}, each piece of secret is translated into a singleton set
of $\{s_i\}$, and concrete data is preserved. Memory addresses towards x86
memory stack is modeled as $\{e\}$ where $e \in E$, and pointers referring to
the locations of program secrets are mapped into $\{u\}$. As for the other cases
where a non-secret value does not belong to any of the aforementioned category,
it would be lifted into $\{p\}$.

We also define the concretization function $\gamma: \textbf{SAS} \rightarrow
\powerset{Val}$ which casts a set of abstract values into a set of concrete
values. Here we start by defining a gadget concretization function
$\gamma^{\circ} : \mathbf{AV} \rightarrow \powerset{\V}$ which operates on
individual abstract value.
As shown in \F~\ref{fig:abs-conc}, function $\gamma^{\circ}$ preserves the
constant data back and forth, and $e$ is translated into the stack memory
addresses in $\Stack$. Public symbol $p$ over-approximates all the non-secret
values in $\powerset{\V}$. $s_{i}$ is converted into its corresponding secret
value in \V, while $\top$ maps to the whole set of values in $\powerset{\V}$.
Note that for most cases the output of $\gamma$ is a singleton set.

\subsubsection{Soundness Of Abstract Values}
The following lemma states the abstraction and concretization functions on values are correct.
As defined earlier, given a concrete value $v$, $\INFABS(v)$ 
computes $v$'s abstract value, 
and $\INFCONCRETE(\INFABS(v))$ maps the abstract value back to concrete values. A correct approximation means that $v \in \INFCONCRETE(\INFABS(v))$.
\begin{lemma} \label{lem:gamma-alpha} 
    Let $\V$ be the set of concrete values. 
    Let $\INFABS$ and $\INFCONCRETE$ be the abstract function and the 
    concretization function respectively. Then $\forall v \in \V, v \in \INFCONCRETE(\INFABS(v))$ holds.
\end{lemma}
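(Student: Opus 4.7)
The plan is to prove the statement by straightforward case analysis on the form of the concrete value $v \in \V$. Recall that $\V = \mathbb{Z} \times \{hi, lo\}$, so every $v$ is a pair $(n, l)$, and the definition of $\alpha$ in \F~\ref{fig:abs-conc} splits on this pair into five cases which (by our standing disjointness assumption on $\IM$, $\Header$, $\Stack$) are mutually exclusive and jointly exhaustive. For each case, $\alpha(v)$ is a singleton $\{f\} \subseteq \mathbf{AV}$, so $\gamma(\alpha(v)) = \bigcup_{f' \in \{f\}} \gamma^{\circ}(f') = \gamma^{\circ}(f)$, and the goal reduces to checking $v \in \gamma^{\circ}(f)$ by unfolding one additional definition.

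The discharge of the five cases is essentially symmetric and consists in matching each clause of $\alpha$ against the corresponding clause of $\gamma^{\circ}$. For $v = (n, hi)$, we have $\alpha(v) = \{s\}$ and $\gamma^{\circ}(s) = \{(n', hi) \mid n' \in \mathbb{Z}\}$, which contains $(n, hi)$ since $n \in \mathbb{Z}$. For $v = (n, lo)$ with $n \in \IM$, we have $\alpha(v) = \{n\}$ and $\gamma^{\circ}(n) = \{(n', lo) \mid n' \in \IM\}$, which contains $v$ by the side condition. The $\Header$ and $\Stack$ cases are analogous, producing $\{u\}$ and $\{e\}$ respectively, and their concretizations recover all $(n', lo)$ with $n'$ in the same region. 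The final catchall case, $n \in \mathbb{Z} \setminus (\Header \cup \IM \cup \Stack)$, yields $\alpha(v) = \{p\}$ and $\gamma^{\circ}(p) = \{(n', lo) \mid n' \in \mathbb{Z}\}$, which trivially contains $v$.

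There is essentially no obstacle here beyond the bookkeeping of definitions. The only assumption that has to be made explicit at the outset is that the regions $\IM$, $\Header$, and $\Stack$ are pairwise disjoint subsets of $\mathbb{Z}$ (so that $\alpha$ is well-defined as a function and its case split is unambiguous); this is implicit in the modeling in Section~\ref{subsec:concrete} where $\Header$ is stipulated to be disjoint from addresses storing public values, and $\Stack$ is treated as a dedicated region for x86 stack addresses. Once that is stated, the proof is a five-line case analysis that requires only unfolding $\alpha$, $\gamma$, and $\gamma^{\circ}$ in sequence.
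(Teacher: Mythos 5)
Your proposal is correct and follows essentially the same route as the paper's own proof: a five-way case analysis matching each clause of $\INFABS$ against the corresponding clause of $\INFCONCRETE^{\circ}$, with the observation that each $\INFABS(v)$ is a singleton so $\INFCONCRETE$ reduces to $\INFCONCRETE^{\circ}$. Your explicit remark that the regions $\IM$, $\Header$, and $\Stack$ must be pairwise disjoint for the case split to be well-defined is a point the paper leaves implicit, but it does not change the argument.
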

\begin{proof} Let $s$ be a an arbitrary program state. Let $v \in \V$ be an arbitrary value in $s$. There are five cases.
\begin{enumerate}
\item $(n, lo)$, where $n \in \IM$. In this case, we need to show that $(n, lo) \in \INFCONCRETE(\INFABS(n, lo))$. By the definition in \figref{fig:abs-conc},
$\INFCONCRETE(\INFABS((n, lo))) = \INFCONCRETE(n)$ and $\INFCONCRETE(n) = \{(n, lo) \mid n \in \IM\}$.

\item $(n, lo)$, where $n \in \Header$. In this case, we need to show that $(n, lo) \in \INFCONCRETE(\INFABS(n, lo))$. By the definition in \figref{fig:abs-conc}, $\INFCONCRETE(\INFABS(n, lo)) = \INFCONCRETE(u)$ 
for some $u \in U$, and $\INFCONCRETE(u) = \{(n, lo) \mid n \in \Header \}$.

\item $(n, lo)$, where $n \in \mathbb{Z} \mbox{ and } n \not\in \Header \cup \IM \cup \Stack$. In this case, we need to show that $(n, lo) \in \INFCONCRETE(\INFABS(n, lo))$. By the definition in \figref{fig:abs-conc}, $\INFCONCRETE(\INFABS(n, lo)) = \INFCONCRETE(p)$ 
for $p \in P$, and $\INFCONCRETE(p) = \{(n, lo) \mid n \in \mathbb{Z} \}$.

\item $(n, lo)$, where $n \in \Stack$. In this case, we need to show 
that $(n, lo) \in \INFCONCRETE(\INFABS(n, lo))$. By the definition 
in \figref{fig:abs-conc}, $\INFCONCRETE(\INFABS(n, lo)) = \INFCONCRETE(e)$ for 
some $e \in E$, and $\INFCONCRETE(\epsilon) = \{(n, lo) \mid n \in \Stack\}$.

\item $(n, hi)$. In this case, we need to show 
that $(n, hi) \in \INFCONCRETE(\INFABS(n, hi))$. By the definition in \figref{fig:abs-conc},
$\INFCONCRETE(\INFABS((n, hi))) = \INFCONCRETE(s)$ for 
some $s \in S$ and $\INFCONCRETE(s) = \{(n, hi) \mid n \in \mathbb{Z}\}$.

\end{enumerate}
\end{proof}

The following lemma shows that the semantics of the abstract operator defined in \figref{fig:absop} is sound.
\begin{lemma} \label{lem:op} Let $\V$ be the set of concrete values. 
    Let $\INFABS$ and $\INFCONCRETE$ be the abstract function and the concretization function, respectively. 
    Let $\Diamond$ be a concrete binary operator. 
    Then $\forall v_1 \in \V, v_2 \in \V, 
    (v_1 \Diamond v_2) \in \INFCONCRETE(\INFABS(v_1) \INFOP \INFABS(v_2))$ holds.
\end{lemma}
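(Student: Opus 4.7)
The plan is to reduce the set-level claim to a pairwise claim. Since $\INFABS$ applied to any $v \in \V$ yields a singleton set (by inspection of Fig.~\ref{fig:abs-conc}), writing $\INFABS(v_i) = \{a_i\}$ the evaluation rule $\INFME{e_1 \Diamond e_2}{\INFSTORE}$ from Fig.~\ref{fig:infsemantics} specializes to $\INFABS(v_1) \INFOP \INFABS(v_2) = \{a_1 \INFOP a_2\}$, where $a_1 \INFOP a_2$ is defined by the cases of Fig.~\ref{fig:absop}. It then suffices to show, by case analysis on the category pair $(a_1, a_2)$ together with the security labels $(v_1.l, v_2.l)$, that $v_1 \Diamond v_2 \in \INFCONCRETE^{\circ}(a_1 \INFOP a_2)$, where by the concrete semantics $v_1 \Diamond v_2 = (v_1.n ~ \MO{\Diamond} ~ v_2.n, \ell)$ with $\ell = hi$ when $v_1.l \neq v_2.l$ and $\ell = v_1.l$ otherwise.

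I would then march through the eight clauses of Fig.~\ref{fig:absop} in order. The first two clauses ($\top$ on either side) are vacuous because $\INFABS$ never produces $\top$. The $(p,s)$ and $(s,p)$ clauses arise precisely when one label is $hi$ and the other $lo$, so the concrete result carries label $hi$; this is subsumed by $\INFCONCRETE^{\circ}(\top) = \{(n,t) \mid n \in \mathbb{Z}, t \in \{lo, hi\}\}$. The $(p, v)$ and $(v, p)$ clauses with $v \notin S$ require both concrete labels to be $lo$, so the result is in $\INFCONCRETE^{\circ}(p) = \{(n, lo) \mid n \in \mathbb{Z}\}$. The literal clause $(n_1, n_2)$ is immediate because $\INFABS$ on literals produces the same literal and $n_1 \INFOP n_2 = n_1 \MO{\Diamond} n_2$ by construction, with label $lo$ on both sides.

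The hard part is the fall-through clause that builds the symbolic expression $av_1 \INFOP av_2$, because the gadget function $\INFCONCRETE^{\circ}$ of Fig.~\ref{fig:abs-conc} is only tabulated on the base symbols $n, p, u, e, s, \top$ and must be extended compositionally to the compound forms of $\mathbf{AV}$ admitted by Fig.~\ref{fig:abs-syntax} (such as $s \INFOP n$, $h \INFOP n$, $\epsilon \oplus n$, and $s_1 \INFOP s_2$). The proof obligation in each surviving sub-case is twofold: the security label delivered by the concrete rule must match the one encoded by the target category, and the integer $v_1.n \MO{\Diamond} v_2.n$ must land in the projected set. For any compound touching $S$ the concretization is $\{(n, hi) \mid n \in \mathbb{Z}\}$, which trivially contains the concrete integer. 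For compounds in $U$ or $E$ one must check, appealing to the shape restrictions of Fig.~\ref{fig:abs-syntax} (offsets are constants), that the label stays $lo$ and that the integer stays inside $\Header$ or $\Stack$, respectively; whenever this cannot be guaranteed, the abstract operator of Fig.~\ref{fig:absop} must be read as routing such mixes through an earlier clause (namely $(p, v)$ or $\top$), which is where the bulk of the verification work concentrates.

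Once every sub-case is discharged, the lemma follows by exhaustion, and the singleton observation at the outset lifts the pairwise statement to the required set-level statement $v_1 \Diamond v_2 \in \INFCONCRETE(\INFABS(v_1) \INFOP \INFABS(v_2))$.
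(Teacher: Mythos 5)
Your proposal is correct and follows essentially the same strategy as the paper's proof: observe that $\INFABS$ always returns a singleton, reduce the set-level claim to a pairwise membership check, and exhaust the combinations of operand categories and security labels against the clauses of the abstract operator. The only structural difference is the axis of the case split---the paper enumerates thirteen cases keyed on the concrete classification of $(v_1, v_2)$ (literal, header, stack, public, secret, under the various label pairings), whereas you enumerate by the eight clauses of Fig.~\ref{fig:absop}; the two enumerations cover the same ground. Your remark about the fall-through clause is well taken and is in fact more careful than the paper's own treatment: the paper's cases asserting, e.g., $\INFCONCRETE(n_1 \INFOP u) = \{(n, lo) \mid n \in \Header\}$ and then concluding $(n_1 \MO{\Diamond} n_2, lo)$ belongs to that set silently assume both that $\INFCONCRETE^{\circ}$ extends compositionally to compound terms of $\mathbf{AV}$ and that $n_1 \MO{\Diamond} n_2$ lands back in $\Header$ (respectively $\Stack$ for the $E$ cases)---precisely the two obligations you isolate as where the bulk of the verification concentrates.
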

\begin{proof} Let $v_1 \in \V$ and $v_2 \in \V$ be arbitrary. And Let $\INFOP$ be $\INFABS(\Diamond)$. 
\begin{enumerate}
%(n_1, n_2)
\item ($(n_1, lo), (n_2, lo)$), where $n_1 \in \IM$ and $n_2 \in \IM$. 
In this case, we need to show that $(n_1, lo) \MO{\Diamond} (n_2, lo) \in \INFCONCRETE(\INFABS(n_1, lo) \INFOP \INFABS(n_2, lo))$. By the definition of the concrete semantics (\figref{fig:reil-semantics}),  we know that $(n_1, lo)\Diamond (n_2, lo) = (n_1 \MO{\Diamond} n_2, lo)$.
By the definition in \figref{fig:abs-conc} twice and by the definition in \figref{fig:absop}, we know $\INFABS(n_1, lo) \INFOP \INFABS(n_2, lo) = n_1 \INFOP n_2$.  
Thus, by the definition in \figref{fig:abs-conc}, we get $(n_1 \Diamond n_2, lo) \in \INFCONCRETE(n_1 \INFOP n_2)$.
%(n, S)
\item ($(n_1, lo), (n_2, hi)$), where $n_1 \in \IM$ and $n_2 \in \mathbb{Z}$. 
In this case, we need to show that $(n_1, lo) \MO{\Diamond} (n_2, hi) \in \INFCONCRETE(\INFABS(n_1, lo) \INFOP \INFABS(n_2, hi))$.
By the definition of the concrete semantics (\figref{fig:reil-semantics}), we know that $(n_1, lo)\Diamond (n_2, hi) = (n_1 \MO{\Diamond} n_2, hi)$.
By the definition in \figref{fig:abs-conc} twice, we know $\INFABS(n_1, lo) \INFOP \INFABS(n_2, hi) = n_1 \INFOP s$, for some $s \in S$. 
By the definition of the abstract value in \figref{fig:abs-syntax}, we know $n_1 \INFOP s \in S$.
By the definition in \figref{fig:abs-conc}, we have $\INFCONCRETE(n_1 \INFOP s) = \{(n, hi) \mid n \in \mathbb{Z}\}$.
Thus, we get $(n_1 \MO{\Diamond} n_2, hi) \in \INFCONCRETE(n_1 \INFOP s)$.

%(n, u)
\item ($(n_1, lo), (n_2, lo)$), where $n_1 \in \IM$ and $n_2 \in \mathbb{U}$. 
In this case, we need to show that $(n_1, lo) \MO{\Diamond} (n_2, hi) \in \INFCONCRETE(\INFABS(n_1, lo) \INFOP (n_2, lo))$.
By the definition of the concrete semantics (\figref{fig:reil-semantics}), we know $(n_1, lo) \Diamond (n_2, lo) = (n_1 \MO{\Diamond} n_2, lo)$.
By the definition in \figref{fig:abs-conc} twice, we know $\INFABS(n_1, lo) \INFOP \INFABS(u, lo) = n_1 \INFOP u$, for some $u \in U$.
By the definition of the abstract value in \figref{fig:abs-syntax}, we know $n_1 \INFOP u \in U$.
By the definition in \figref{fig:abs-conc}, we have $\INFCONCRETE(n_1 \INFOP u) = \{(n, lo) \mid n \in \mathbb{U}\}$. Thus, by the definition in \figref{fig:abs-conc}, we get $(n_1 \MO{\Diamond} n_2, lo) \in \INFCONCRETE(n_1 \INFOP u)$.

%(n, esp)
\item ($(n_1, lo), (n_2, lo)$), where $n_1 \in \IM$, and $n_2 \in \Stack$.
In this case, we need to show that $(n_1, lo) \MO{\Diamond} (n_2, lo) \in \INFCONCRETE(\INFABS(n_1, lo) \INFOP (n_2, lo))$.
By the definition of the concrete semantics (\figref{fig:reil-semantics}), we know $(n_1, lo) \Diamond (n_2, lo) = (n_1 \MO{\Diamond} n_2, lo)$.
By the definition in \figref{fig:abs-conc} twice, we know $\INFABS(n_1, lo) \INFOP \INFABS(\epsilon, lo) = n_1 \INFOP \epsilon$, 
for some $e \in E$.
By the definition of the abstract value in \figref{fig:abs-syntax}, we know that $n_1 \INFOP e \in E$.
By the definition in \figref{fig:abs-conc}, we have $\INFCONCRETE(n_1 \INFOP e) = \{(n, lo) \mid n \in E\}$. Thus, by the definition in \figref{fig:abs-conc}, we get $(n_1 \MO{\Diamond} n_2, lo) \in \INFCONCRETE(n_1 \INFOP e)$.

%(n, p)
\item ($(n_1, lo), (n_2, lo)$), where $n_1 \in \IM$ and $n_2 \not\in (\mathbb{U} \cup \IM \cup \Stack)$. 
In this case, we need to show that $(n_1, lo) \MO{\Diamond} (n_2, lo) \in \INFCONCRETE(\INFABS(n_1, lo) \INFOP \INFABS(n_2, lo))$.
By the definition of the concrete semantics (\figref{fig:reil-semantics}), we know $(n_1, lo) \Diamond (n_2, lo) = (n_1 \MO{\Diamond} n_2, lo)$.
By the definition in \figref{fig:abs-conc} twice, we know $\INFABS(n_1, lo) \INFOP \INFABS(n_2, lo) = n_1 \INFOP p$, for $p \in P$.
By the definition in \figref{fig:absop}, we know $n_1 \INFOP p \in P$.
Thus, by the definition in \figref{fig:abs-conc}, we get $(n_1 \MO{\Diamond} n_2, lo) \in \INFCONCRETE(p)$, for $p \in P$.

%$u \INFOP p$
\item ($(n_1, lo), (n_2, lo)$), where $n_1 \in \mathbb{U}$ and $n_2 \not\in (\mathbb{U} \cup \IM \cup \Stack)$. 
In this case, we need to show that $(n_1, lo) \Diamond (n_2, lo) \in \INFCONCRETE(\INFABS(n_1, lo) \INFOP \INFABS(n_2, lo))$.
By the definition of the concrete semantics (\figref{fig:reil-semantics}), we know $(n_1, lo) \Diamond (n_2, lo) = (n_1 \MO{\Diamond} n_2, lo)$.
By the definition in \figref{fig:abs-conc} twice, we know $\INFABS(n_1, lo) \INFOP \INFABS(n_2, lo) = u \INFOP p$, for some $u \in U$ and $p \in P$.
By the definition in \figref{fig:absop}, we know $u \INFOP p \in P$.
Thus, by the definition in \figref{fig:abs-conc}, we get $(n_1 \MO{\Diamond} n_2, lo) \in \INFCONCRETE(p)$, for $p \in P$.

%$\epsilon \INFOP p$
\item ($(n_1, lo), (n_2, lo)$), where $n_1 \in \{esp\}$ and $n_2 \not\in (\mathbb{U} \cup \IM \cup \Stack)$. 
In this case, we need to show that $(n_1, lo) \Diamond (n_2, lo) \in \INFCONCRETE(\INFABS(n_1, lo) \INFOP \INFABS(n_2, lo))$.
By the definition of the concrete semantics (\figref{fig:reil-semantics}), we know $(n_1, lo) \Diamond (n_2, lo) = (n_1 \MO{\Diamond} n_2, lo)$.
By the definition in \figref{fig:abs-conc} twice, we know $\INFABS(n_1, lo) \INFOP \INFABS(n_2, lo) = \epsilon \INFOP p$, for some $\epsilon \in \mathcal{E}$ and $p \in P$.
By the definition in \figref{fig:absop}, we know $e \INFOP p \in P$.
Thus, by the definition in \figref{fig:abs-conc}, we get $(n_1 \MO{\Diamond} n_2, lo) \in \INFCONCRETE(p)$, for $p \in P$.

%(u, s)
\item (($n_1, lo), (n_2, hi)$), where $n_1 \in U$. 
In this case, we need to show that $(n_1, lo) \Diamond (n_2, hi) \in \INFCONCRETE(\INFABS(n_1, lo) \INFOP (n_2, hi))$.
By the definition of the concrete semantics (\figref{fig:reil-semantics}), $(n_1, lo) \MO{\Diamond} (n_2, hi) = (n_1 \Diamond n_2, hi)$.
By the definition in \figref{fig:abs-conc} twice, $\INFABS(n_1, lo) \INFOP \INFABS(n_2, hi) = u \INFOP s$, for some $u \in U$ and $s \in S$.
By the definition of the abstract value (\figref{fig:abs-syntax}), we know $u \INFOP s \in S$.
By the definition in \figref{fig:abs-conc}, we have $\INFCONCRETE(u \INFOP s) = \{(n, hi) \mid n \in \mathbb{Z}\}$. 
Thus, by the definition in \figref{fig:abs-conc}, we get $(n_1 \MO{\Diamond} n_2, hi) \in \INFCONCRETE(u \INFOP s)$.

%(esp, s)
\item (($n_1, lo), (n_2, hi)$), where $n_1 \in \Stack$. 
In this case, we need to show that $(n_1, lo) \Diamond (n_2, hi) \in \INFCONCRETE(\INFABS(n_1, lo) \INFOP (n_2, hi))$.
By the definition of the concrete semantics (\figref{fig:reil-semantics}), $(n_1, lo) \MO{\Diamond} (n_2, hi) = (n_1 \Diamond n_2, hi)$.
By the definition in \figref{fig:abs-conc} twice, $\INFABS(n_1, lo) \INFOP \INFABS(n_2, hi) = e \INFOP s$, for some $e \in \mathcal{E}$ and $s \in S$.
By the definition of the abstract value (\figref{fig:abs-syntax}), we know $e \INFOP s \in S$.
By the definition in \figref{fig:abs-conc}, we have $\INFCONCRETE(e \INFOP s) = \{(n, hi) \mid n \in \mathbb{Z}\}$. 
Thus, by the definition in \figref{fig:abs-conc}, we get $(n_1 \MO{\Diamond} n_2, hi) \in \INFCONCRETE(e \INFOP s)$.

%(p, s)
\item (($n_1, lo), (n_2, hi)$), where $n_1 \not\in (\mathbb{U} \cup \IM \cup \Stack)$. 
In this case, we need to show that $(n_1, lo) \Diamond (n_2, hi) \in \INFCONCRETE(\INFABS(n_1, lo) \INFOP (n_2, hi))$.
By the definition of the concrete semantics (\figref{fig:reil-semantics}), $(n_1, lo) \MO{\Diamond} (n_2, hi) = (n_1 \MO{\Diamond} n_2, hi)$.
By the definition in \figref{fig:abs-conc} twice, $\INFABS(n_1, lo) \INFOP \INFABS(n_2, hi) = p \INFOP s$, for some $p \in P$ and $s \in S$.
By the definition of the abstract operator (\figref{fig:absop}), we know $p \INFOP s \in \top$.
By the definition in \figref{fig:abs-conc}, we have $\INFCONCRETE(p \INFOP s) = \{(n, t) \mid n \in \mathbb{Z}, t \in \{lo, hi\}\}$. 
Thus, by the definition in \figref{fig:abs-conc}, we get $(n_1 \MO{\Diamond} n_2, hi) \in \INFCONCRETE(p \INFOP s)$.

%(s, s)
\item (($n_1, hi), (n_2, hi)$). 
In this case, we need to show that $(n_1, hi) \Diamond (n_2, hi) \in \INFCONCRETE(\INFABS(n_1, hi) \INFOP (n_2, hi))$.
By the definition of the concrete semantics (\figref{fig:reil-semantics}), $(n_1, hi) \MO{\Diamond} (n_2, hi) = (n_1 \MO{\Diamond} n_2, hi)$.
By the definition in \figref{fig:abs-conc} twice, $\INFABS(n_1, hi) \INFOP \INFABS(n_2, hi) = s_1 \INFOP s_2$, for some $s_1 \in S$ and $s_2 \in S$.
By the definition of the abstract value (\figref{fig:abs-syntax}), we know $ s_1 \INFOP s_2 \in S$.
By the definition in \figref{fig:abs-conc}, we have $\INFCONCRETE(s_1 \INFOP s_2) = \{(n, hi) \mid n \in \mathbb{Z}\}$. 
Thus, by the definition in \figref{fig:abs-conc}, we get $(n_1 \MO{\Diamond} n_2, hi) \in \INFCONCRETE(s_1 \INFOP s_2)$.

%(p_1, p_2)
\item (($n_1, lo), (n_2, lo)$), where $n_1 \mbox{ or } n_2 \not\in (\mathbb{U} \cup \IM \cup \Stack)$. 
In this case, we need to show that $(n_1, lo) \Diamond (n_2, lo) \in \INFCONCRETE(\INFABS(n_1, lo) \INFOP (n_2, lo))$.
By the definition of the concrete semantics (\figref{fig:reil-semantics}), $(n_1, lo) \MO{\Diamond} (n_2, lo) = (n_1 \MO{\Diamond} n_2, lo)$.
By the definition in \figref{fig:abs-conc} twice, $\INFABS(n_1, lo) \INFOP \INFABS(n_2, lo) = p \INFOP p$, where $p \in P$.
By the definition of the abstract operator (\figref{fig:absop}), we know $p \INFOP p \in P$.
By the definition in \figref{fig:abs-conc}, we have $\INFCONCRETE(p) = \{(n, lo) \mid n \in \mathbb{Z}, n \not\in (\mathbb{U} \cup \IM \cup \Stack)\}$. 
Thus, by the definition in \figref{fig:abs-conc}, we get $(n_1 \MO{\Diamond} n_2, lo) \in \INFCONCRETE(p)$.

%(u_1, u_2)
\item (($n_1, lo), (n_2, lo)$), where $n_1 \mbox{ and } n_2 \in U$. 
In this case, we need to show that $(n_1, lo) \Diamond (n_2, lo) \in \INFCONCRETE(\INFABS(n_1, lo) \INFOP (n_2, lo))$.
By the definition of the concrete semantics (\figref{fig:reil-semantics}), $(n_1, lo) \MO{\Diamond} (n_2, lo) = (n_1 \MO{\Diamond} n_2, lo)$.
By the definition in \figref{fig:abs-conc} twice, $\INFABS(n_1, lo) \INFOP \INFABS(n_2, lo) = u_1 \INFOP u_2$, for some $u_1 \in U$ and $u_2 \in U$.
By the definition of the abstract value (\figref{fig:abs-syntax}), we know $u_1 \INFOP u_2 \in U$.
By the definition in \figref{fig:abs-conc}, we have $\INFCONCRETE(u) = \{(n, lo) \mid n \in \mathbb{U}\}$. 
Thus, by the definition in \figref{fig:abs-conc}, we get $(n_1 \MO{\Diamond} n_2, lo) \in \INFCONCRETE(u_1 \INFOP u_2)$.

\end{enumerate}

\end{proof}

\subsubsection{Soundness of Abstract Semantics}
In order to avoid introducing further meta-functions to our notation, we will
reuse the notation for the abstract and concretization functions.

Before the definitions, we introduce some common notations used in this section. Let $X$ and $Y$ be two sets. Let $f$ be a function that maps from $X$ to $Y$. 
Then the set $X$ is the \emph{domain} of $f$, written $\DOM(f) = X$; the set $Y$ is the \emph{codomain} of $f$, written $\CODOM(f) = Y$.

The following defines the abstraction functions for the store, the memory and
the program counter.

\begin{definition} \label{def:alpha-state} Let $\sigma$ be a store. Then $\INFABS(\sigma)$ generates the abstract store, such that
\[\INFABS(\sigma) = \{(x, v) \mid  x \in \DOM(\sigma), v = \{\INFABS(\sigma(x))\}\}. \]
Let $\MEM$ be a memory. Then $\INFABS(\MEM)$ generates the abstract memory, such that
\[\INFABS(\MEM) = \{(l, v) \mid l' \in \DOM(\MEM), l = \{\INFABS(l')\}, v =  \{\INFABS(\MEM(l'))\} \}).\]
Let $pc$ be a program counter. Then $\INFABS(pc)$ generates the abstract program counter, such that $\INFABS(pc) = \{pc\}$.

Let $(\sigma, \MEM, pc)$ be a program state. Then $\INFABS(\sigma, \MEM, pc)$
generates the abstract state, such that $\INFABS(\sigma,
\MEM, pc) = (\INFABS(\sigma), \INFABS(\MEM), \INFABS(pc))$.
\end{definition}

The following defines the concretization functions for the store, the memory and the program counter.
\begin{definition} \label{def:gamma-state}
Let $\INFSTORE$ be an abstract store. Then $\INFCONCRETE(\INFSTORE)$ generates the set of possible concrete stores, such that 
\[\INFCONCRETE(\INFSTORE) = \{\sigma \mid  x \in \DOM(\INFSTORE), \sigma(x) \in  \INFCONCRETE(\INFSTORE(x)) \}).\]
Let $\INFMEM$ be an abstract memory. Then $\INFCONCRETE(\INFMEM)$ generates the set of possible concrete memories, such that
\[\INFCONCRETE(\INFMEM) = \{\MEM \mid l \in \DOM(\INFMEM), \MEM(l) \in  \INFCONCRETE(\MEM(l)) \}).\]
Let $\INFPC$ be an abstract program counter. Then $\INFCONCRETE(\INFPC)$ generates the set of possible program counters, such that $\INFCONCRETE(\INFPC) = \{pc \mid pc \in \INFPC\}$.

Let $(\INFSTORE, \INFMEM, \INFPC)$ be an abstract state. Then $\INFCONCRETE(\INFSTORE, \INFMEM, \INFPC)$ generates the set of possible concrete states, such that $\INFCONCRETE(\INFSTORE, \INFMEM, \INFPC) = \{(\sigma, \MEM, pc) \mid \sigma \in \INFCONCRETE(\INFSTORE), \MEM \in \INFCONCRETE(\INFMEM), pc \in \INFPC\}$.
\end{definition}

A concrete state $s$ is safely approximated by an abstract state $\hat{s}$, written $s \APPROX \hat{s}$, if $s \in \gamma(\hat{s})$, where $\gamma$ is a concretization function.

\begin{lemma} \label{lem:stateappro} Let $\alpha$ and $\gamma$ be the abstract and concretization functions defined in \defref{def:alpha-state} and \defref{def:gamma-state}. Let $s$ be a concrete state, then $s \APPROX \alpha(s)$. 
\end{lemma}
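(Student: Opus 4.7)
The plan is to unfold the definitions of $\alpha$ and $\gamma$ on states, reduce the statement $s \APPROX \alpha(s)$ to three componentwise claims (store, memory, program counter), and then dispatch each of them. Two of the three reduce, after unpacking the pointwise definitions in \defref{def:alpha-state} and \defref{def:gamma-state}, to a pointwise application of \lemref{lem:gamma-alpha}; the third is immediate since the program counter is abstracted to a singleton.

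Concretely, I would write $s = (\sigma, \MEM, pc)$ and note, by \defref{def:alpha-state}, that $\alpha(s) = (\alpha(\sigma), \alpha(\MEM), \alpha(pc))$. Expanding $\gamma$ via \defref{def:gamma-state}, the goal $s \in \gamma(\alpha(s))$ unfolds to the conjunction: (i) $\sigma \in \gamma(\alpha(\sigma))$, (ii) $\MEM \in \gamma(\alpha(\MEM))$, and (iii) $pc \in \gamma(\alpha(pc))$. For (iii), $\alpha(pc) = \{pc\}$ so $\gamma(\alpha(pc)) = \{pc\}$ and the claim is immediate. For (i), I would pick an arbitrary $x \in \DOM(\sigma)$; by construction $\DOM(\alpha(\sigma)) = \DOM(\sigma)$ and $\alpha(\sigma)(x) = \{\alpha(\sigma(x))\}$, so the membership $\sigma(x) \in \gamma(\alpha(\sigma)(x)) = \gamma^\circ(\alpha(\sigma(x)))$ is exactly the content of \lemref{lem:gamma-alpha} applied to the value $\sigma(x) \in \V$. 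Case (ii) for the memory is entirely analogous: for each $l \in \DOM(\MEM)$, \lemref{lem:gamma-alpha} gives both $l \in \gamma^\circ(\alpha(l))$ (treating the address as a value) and $\MEM(l) \in \gamma^\circ(\alpha(\MEM(l)))$, which together witness that $\MEM \in \gamma(\alpha(\MEM))$.

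There is essentially no real obstacle: the argument is a mechanical unfolding of definitions followed by a pointwise appeal to the value-level soundness already established as \lemref{lem:gamma-alpha}. The only mildly delicate point worth spelling out is the handling of memory addresses, since $\alpha(\MEM)$ abstracts \emph{both} the keys and the values of $\MEM$; one has to observe that the abstraction is set-valued on keys and that \defref{def:gamma-state} quantifies over the original concrete addresses $l$, so matching an arbitrary $\MEM \in \gamma(\alpha(\MEM))$ back to the original $\MEM$ only requires the pointwise inclusion $\MEM(l) \in \gamma^\circ(\alpha(\MEM(l)))$. I would keep the write-up short—three short paragraphs, one per component—since each step is a direct instance of \lemref{lem:gamma-alpha} after unfolding.
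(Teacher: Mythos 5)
Your proposal is correct. Note that the paper itself states \lemref{lem:stateappro} without any proof, so there is no authorial argument to diverge from; your componentwise unfolding of \defref{def:alpha-state} and \defref{def:gamma-state} followed by a pointwise appeal to \lemref{lem:gamma-alpha} is exactly the argument the authors evidently intend, and you correctly isolate the one delicate spot (the abstraction of memory \emph{addresses}, which \lemref{lem:gamma-alpha} only covers if an address is coerced to a low-labelled value, and which the paper's definitions leave slightly ill-typed).
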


The following lemma shows that definition of the abstract evaluation on expressions is correct. 
\begin{lemma} \label{lem:exprappox} Let $e$ be an expression, and $\INFSTORE$ be an abstract store. Let $\INFCONCRETE$ be the concretization function for $\INFSTORE$ defined in \defref{def:gamma-state}. 
Let $\sigma$ be a program store, such that $\sigma \in \INFCONCRETE(\INFSTORE)$. Then $\ME{e}{\sigma} \in \INFCONCRETE(\INFME{e}{\INFSTORE})$
\end{lemma}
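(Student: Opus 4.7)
The plan is to proceed by structural induction on the expression $e$, following the three productions in the syntax of $\nonterm{Expr}$ from \F~\ref{fig:reil-syntax} (register, literal, binary operation). The hypothesis $\sigma \in \INFCONCRETE(\INFSTORE)$ will be used pointwise via \defref{def:gamma-state}, namely $\sigma(x) \in \INFCONCRETE(\INFSTORE(x))$ for every $x \in \DOM(\INFSTORE)$.

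For the base case $e = r$, I would just unfold both sides: $\ME{r}{\sigma} = (\sigma(r), lo)$ and $\INFME{r}{\INFSTORE} = \INFSTORE(r)$, so the goal reduces directly to the pointwise concretization condition for stores supplied by the hypothesis (with the tag $lo$ tracked through the definition of $\INFCONCRETE$). For $e = n$, both $\ME{n}{\sigma} = (\mathcal{N}\synbracket{n}, lo)$ and $\INFME{n}{\INFSTORE} = \{n\}$ are constants, and by \F~\ref{fig:abs-conc} we have $\INFCONCRETE^{\circ}(n) = \{(n, lo) \mid n \in \IM\}$, so the membership is immediate.

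The inductive case $e = e_1 \Diamond e_2$ is where the real work happens. By the induction hypothesis, $\ME{e_1}{\sigma} \in \INFCONCRETE(\INFME{e_1}{\INFSTORE})$ and $\ME{e_2}{\sigma} \in \INFCONCRETE(\INFME{e_2}{\INFSTORE})$, so by the definition of $\INFCONCRETE$ as a pointwise union there exist witnesses $a \in \INFME{e_1}{\INFSTORE}$ and $b \in \INFME{e_2}{\INFSTORE}$ with $\ME{e_1}{\sigma} \in \INFCONCRETE^{\circ}(a)$ and $\ME{e_2}{\sigma} \in \INFCONCRETE^{\circ}(b)$. Since $a \INFOP b \in \INFME{e_1 \Diamond e_2}{\INFSTORE}$ by the definition of $\MEANINGFUN{\tilde{E}}$, it suffices to show $\ME{e_1}{\sigma} \Diamond \ME{e_2}{\sigma} \in \INFCONCRETE^{\circ}(a \INFOP b)$, and then conclude by the definition of $\INFCONCRETE$.

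The main obstacle is that \lemref{lem:op} is stated only for the \emph{abstraction} of the two concrete operands, i.e.\ for $\INFABS(v_1) \INFOP \INFABS(v_2)$, whereas here the witnesses $a$ and $b$ are arbitrary abstract values that merely over-approximate $\ME{e_1}{\sigma}$ and $\ME{e_2}{\sigma}$. I would therefore lift \lemref{lem:op} to the stronger statement: for any $a, b \in \mathbf{AV}$ and any $v_1 \in \INFCONCRETE^{\circ}(a)$, $v_2 \in \INFCONCRETE^{\circ}(b)$, we have $v_1 \Diamond v_2 \in \INFCONCRETE^{\circ}(a \INFOP b)$. The argument mirrors the twelve-case analysis in the proof of \lemref{lem:op}, but now split on the categories of $a$ and $b$ (i.e.\ which clause of $\INFCONCRETE^{\circ}$ produced $v_1, v_2$) and the corresponding clause of $\INFOP$ in \F~\ref{fig:absop}; the $\top$ and $P$-with-$S$ cases are immediate because $\INFCONCRETE^{\circ}(\top)$ is the whole of $\V$ and $\INFCONCRETE^{\circ}$ of any $S$-value absorbs both labels, while the remaining cases are structurally identical to the existing proof with $a, b$ in place of $\INFABS(v_1), \INFABS(v_2)$. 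Once this generalization is established, the inductive step closes in one line.
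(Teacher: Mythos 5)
Your proposal follows the same skeleton as the paper's proof — structural induction on $e$, with the register and literal base cases discharged by unfolding both semantics and appealing to the value-level correspondence — but it diverges in the inductive step in a way that is worth spelling out. The paper closes the case $e_1 \Diamond e_2$ by inserting a step labelled ``assumption'' which asserts $\INFME{e_i}{\INFSTORE} = \INFABS(\ME{e_i}{\sigma})$, and only then invokes \lemref{lem:op}; but that equality does not follow from the lemma's actual hypothesis $\sigma \in \INFCONCRETE(\INFSTORE)$, nor from the induction hypothesis, since $\INFME{e_i}{\INFSTORE}$ may be a strict over-approximation of the abstraction of the concrete result (e.g.\ a multi-element set, or a coarser symbol such as $p$ or $\top$ produced by earlier joins). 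You correctly identify this as the crux and repair it by generalizing \lemref{lem:op} to its $\gamma$-based form: for arbitrary abstract witnesses $a, b$ and any $v_1 \in \INFCONCRETE^{\circ}(a)$, $v_2 \in \INFCONCRETE^{\circ}(b)$, one has $v_1 \Diamond v_2 \in \INFCONCRETE^{\circ}(a \INFOP b)$. This is the standard local-soundness formulation of an abstract operator, it is what the induction hypothesis actually delivers witnesses for, and its proof is indeed a routine re-run of the existing twelve-case analysis keyed on the categories of $a$ and $b$ rather than on the concrete tags. In short, your route proves the same statement by the same induction but with a stronger, correctly-stated key lemma; the paper's version reaches the same conclusion only by an unjustified identification of the abstract evaluation with the abstraction of the concrete one. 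Your version is the one that actually goes through, and you should present the generalized operator lemma explicitly rather than leaving it as a remark.
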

\begin{proof} Let $e$, $\INFSTORE$, $\INFCONCRETE$ and $\sigma$ be given. The proof is by the induction on the expression's structure. The proof is done in calculational style, starting from the concrete evaluation. There are two base cases.
\begin{enumerate}
\item ($x$) In this case, $e$ has the form $x$. 
\begin{calculation}
\FORMULA{\ME{x}{\sigma}}
\REASON{=}{the concrete semantics in \figref{fig:reil-semantics}}
\FORMULA{\sigma(x)}
\REASON{\in}{\lemref{lem:gamma-alpha}}
\FORMULA{\INFCONCRETE(\INFSTORE(x))}
\REASON{=}{the abstract semantics in \figref{fig:infsemantics}}
\FORMULA{\INFCONCRETE(\INFME{x}{\INFSTORE})}
\end{calculation}

\item ($n$) In this case, $e$ has the form $n$. 
\begin{calculation}
\FORMULA{\ME{n}{\sigma}}
\REASON{=}{the semantics in \figref{fig:reil-semantics}}
\FORMULA{(\mathcal{N}\synbracket{n}, lo)}
\REASON{\in}{\lemref{lem:gamma-alpha}}
\FORMULA{\INFCONCRETE(n)}
\REASON{=}{the semantics in \figref{fig:infsemantics}}
\FORMULA{\INFCONCRETE(\INFME{n}{\INFSTORE})}
\end{calculation}
\end{enumerate}

The inductive hypothesis is that for all subexpressions $e_i$, $\ME{e_i}{\sigma} \in \INFCONCRETE(\INFME{e_i}{\INFSTORE})$.
There is one inductive case, where $e$ is ($e_1 \Diamond e_2$).
\begin{calculation}
\FORMULA{\ME{e_1 \Diamond e_2}{\sigma}}
\REASON{=}{the semantics in \figref{fig:reil-semantics}}
\FORMULA{\ME{e_1}{\sigma} \Diamond \ME{e_2}{\sigma}}
\REASON{\Rightarrow}{inductive hypothesis}
\FORMULA{\ME{e_1}{\sigma} \in \INFCONCRETE(\INFME{e_1}{\INFSTORE}) \mbox{ and } \ME{e_2}{\sigma} \in \INFCONCRETE(\INFME{e_2}{\INFSTORE})}
\REASON{\Rightarrow}{assumption}
\FORMULA{\INFME{e_1}{\INFSTORE} = \INFABS(\ME{e_1}{\sigma}) \mbox{ and } \INFME{e_2}{\INFSTORE} = \INFABS(\ME{e_2}{\sigma})}
\REASON{\Rightarrow}{\lemref{lem:op}}
\FORMULA{\ME{e_1}{\sigma} \Diamond \ME{e_2}{\sigma} \in \INFCONCRETE(\INFME{e_1}{\INFSTORE} \INFOP \INFME{e_2}{\INFSTORE})}
\REASON{=}{the semantics in \figref{fig:infsemantics}}
\FORMULA{\ME{e_1}{\sigma} \Diamond \ME{e_2}{\sigma} \in \INFCONCRETE(\INFME{e_1 \INFOP e_2}{\INFSTORE})}
\end{calculation}
\end{proof}

The rest of the section shows that the abstract path-based collecting semantics is a safe modeling of the concrete semantics. 
The following definition con-inductively defines that the abstract computation trace safely approximates the concrete computation trace.
\begin{definition} \label{def:traceappro} A concrete computation trace, $tr_c$ is \emph{safely approximated} by an abstract computation trace, $tr_a$, written $tr_c \APPROX tr_a$, if and only if 
\begin{enumerate}
\item $\ROOT{tr_c} \APPROX \ROOT{tr_a}$
\item If $\{s_i = \MS{a_i}{\ROOT{tr_c}}\}_{i \in I}$ is the set of all possible transitions from $\ROOT{tr_c}$, and for each $i$, $a_i$, if $s_i = \MS{a_i}{\ROOT{tr_c}}$, then there exists a transition $\hat{s_j} = \INFMS{a_i}{\ROOT{tr_a}}$, such that $s_i \APPROX \hat{s_j}$.
\end{enumerate}
\end{definition}

\begin{theorem} Let $tr_c$ be a concrete computation trace, and let $tr_a$ be its abstract computation trace. Then $tr_c$ is safely approximated by $tr_a$.
\end{theorem}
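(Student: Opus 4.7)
The plan is to discharge both clauses of Definition~\ref{def:traceappro} by coinduction on the trace structure, mirroring the construction of the regular abstract computation tree. For the root obligation, since $tr_a$ is launched from (an abstraction of) $\ROOT{tr_c}$, the relation $\ROOT{tr_c} \APPROX \ROOT{tr_a}$ follows immediately from \lemref{lem:stateappro}.

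For the transition obligation, let $s = \ROOT{tr_c}$ and $\hat{s} = \ROOT{tr_a}$ with $s \APPROX \hat{s}$. Because concrete REIL is deterministic, there is a unique successor $s' = \MS{\theta(s.pc)}{s}$, and I would exhibit an abstract successor $\hat{s'} = \INFMS{\theta(s.pc)}{\hat{s}}$ with $s' \APPROX \hat{s'}$ by case analysis on the command $\theta(s.pc)$. For $r := e$, \lemref{lem:exprappox} together with the weak update via $\sqcup$ yields approximation of the updated store. For $r_1 := load(r_2)$, split on the category of $\INFSTORE(r_2)$: whenever it contains an $S$- or $U$-value, the fresh $s_i$ introduced by the abstract rule concretizes to $\{(n,hi) \mid n \in \mathbb{Z}\}$ and covers the concrete value produced when the address falls in $\Header$; otherwise the abstract memory's concretization (supplied by the hypothesis on $\hat{s}$) already contains the loaded value. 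For $store(r_1, r_2)$, the weak update $\INFMEM[v \mapsto \INFMEM(v) \sqcup \INFSTORE(r_2)]$ only enlarges each cell's concretization, so the updated concrete memory stays inside the concretized abstract one. For $r_1 := is\_zero(r_2)$, the possible abstract outputs $\{0\}$, $\{1\}$, or $\{0,1\}$ always cover the concrete result by the soundness of $\INFME{r_2}{\INFSTORE}$. For $jmp(r_1, r_2)$, the abstract program counter $(\INFPC+1) \cup \INFSTORE(r_2)$ contains both the fall-through and every jump target, so the concrete $pc'$ lies in it. Invoking the coinductive hypothesis on the pair $(s', \hat{s'})$ then closes the diagram.

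The chief obstacle is memory soundness under aliasing. The excerpt delegates point-to information to a ``standby module'', so the proof must be predicated on an explicit assumption that the alias set supplied for any pointer covers every concrete location the pointer may name; with that in hand the weak updates by $\sqcup$ are sound and the load/store cases go through. A secondary subtlety is that Definition~\ref{def:traceappro} is phrased coinductively because the regular abstract computation tree cuts off at repeat nodes while the concrete trace may continue. I would justify reapplying the coinductive hypothesis at such a cut by observing that the abstract subtree rooted at the earlier occurrence already approximates every continuation of the concrete trace from a state that the present concrete state still approximates, so the diagram commutes along the cycle.
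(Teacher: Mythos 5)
Your proposal is correct and follows essentially the same route as the paper's proof: discharge the root obligation via Lemma~\ref{lem:stateappro}, then discharge the transition obligation by case analysis on each REIL command, appealing to Lemma~\ref{lem:exprappox}, the value/operator soundness lemmas, and the soundness of $\sqcup$ (Theorem~\ref{thm:col}). Your additional remarks on the aliasing assumption and on closing the coinductive diagram at repeated nodes of the regular abstract tree make explicit two points the paper leaves implicit, but they do not change the argument.
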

\begin{proof} Let $tr_c$ be a concrete program trace, and let $s_0 = \ROOT{tr_c}$. 
Let $tr_a$ be the abstract computation trace of $tr_c$, and $\alpha(s_0) = \ROOT{tr_a}$. According to the definition of the safe approximation (\defref{def:traceappro}), there are two proof obligations.

Obligation 1: $\ROOT{t_c} \APPROX \ROOT{t_a}$. This is discharged by \lemref{lem:stateappro}, where $s_0 \APPROX \hat{s_0}$.

Obligation 2: Assume $s_0 \APPROX \hat{s_0}$. Let $s_0 = (\sigma_0, \MEM_0, pc_0)$, and let $\hat{s_0} = (\INFSTORE_0, \INFMEM_0, \INFPC_0)$. We enumerate all the possible transitions $a$.
\begin{enumerate}
\item ($r := e$) 
\begin{calculation}
\FORMULA{\MS{r := e}{\sigma_0, \MEM_0, pc_0}}
\REASON{=}{the concrete semantics defined in \figref{fig:reil-semantics}}
\FORMULA{(\sigma_0[r \mapsto \ME{e}{\sigma_0}], \MEM_0, pc_0 + 1)}
\end{calculation}
\begin{calculation}
\FORMULA{\INFMS{r := e}{\INFSTORE_0, \INFMEM_0, \INFPC_0}}
\REASON{=}{the abstract semantics defined in \figref{fig:infsemantics}}
\FORMULA{(\INFSTORE_0[r \mapsto \INFME{e}{\INFSTORE_0}], \INFMEM_0, \INFPC + 1)}
\end{calculation}

Because  $\sigma_0 \APPROX \INFSTORE_0$ by the assumption, by \lemref{lem:exprappox}, $\sigma_0[r \mapsto \ME{e}{\sigma_0}] \APPROX \INFSTORE[r \mapsto \INFME{e}{\INFSTORE_0}]$.
As $pc_0 \APPROX \INFPC_0$, $pc_0 \in \INFPC_0$. By the definition of $\INFPC_0 + 1$, we know that $(pc_0 + 1) \APPROX (\INFPC_0 + 1)$.
Thus, $(\sigma_0[r \mapsto \ME{e}{\sigma_0}], \MEM_0, pc_0 + 1) \APPROX (\INFSTORE_0[r \mapsto \INFME{e}{\INFSTORE_0}], \INFMEM_0, \INFPC + 1)$.

\item ($r_1 := load(r_2)$) Let $v = \sigma_0(r_2)$.
There are three cases.
\begin{enumerate}
\item $v.n \in \Header$. 
\begin{calculation}
\FORMULA{\MS{r_1 := load(r_2)}{\sigma_0, \MEM_0, pc_0}}
\REASON{=}{the concrete semantics defined \figref{fig:reil-semantics}}
\FORMULA{(\sigma_0[r_1 \mapsto (\MEM_0(v.n).n, hi)], \MEM_0, pc_0 + 1)}
\end{calculation}

By the assumption $s_0 \APPROX \hat{s_0}$ and definition of the abstraction function in \figref{fig:abs-conc}, we know that $\alpha(v) = \{u\}$ for some $u \in U$, where $v.n \in \Header$.
\begin{calculation}
\FORMULA{\INFMS{r_1 := load(r_2)}{\INFSTORE_0, \MEM_0, \INFPC_0}}
\REASON{=}{the abstract semantics defined in \figref{fig:infsemantics}}
\FORMULA{(\INFSTORE_0[r_1 \mapsto (\{s\} \sqcup \INFSTORE_0(r_1))], \INFMEM_0, \INFPC_0 + 1)}
\end{calculation}
where $s = \INFMEM_0(u)$. We need to prove that for all possible value of $\INFSTORE_0(r_1)$, $\alpha(\MEM_0(v.n).n, hi) \subseteq (\{s\} \sqcup \INFSTORE_0(r_1))$, which is true by the soundness of $\sqcup$ (\thmref{thm:col}).

\item $v.l = hi$.
\begin{calculation}
\FORMULA{\MS{r_1 := load(r_2)}{\sigma_0, \MEM_0, pc_0}}
\REASON{=}{the concrete semantics defined \figref{fig:reil-semantics}}
\FORMULA{(\sigma_0[r_1 \mapsto \MEM_0(v.n)], \MEM_0, pc_0 + 1)}
\end{calculation}

By the assumption $s_0 \APPROX \hat{s_0}$ and definition of the abstraction function in \figref{fig:abs-conc}, we know that $\alpha(v) = \{s'\}$, for some $s'$. Because loading from a secret address gets a secret value, let $\INFMEM_0(s') = \{s\}$.
\begin{calculation}
\FORMULA{\INFMS{r_1 := load(r_2)}{\INFSTORE_0, \MEM_0, \INFPC_0}}
\REASON{=}{the abstract semantics defined in \figref{fig:infsemantics}}
\FORMULA{(\INFSTORE_0[r_1 \mapsto (\{s\} \sqcup \INFSTORE_0(r_1))], \INFMEM_0, \INFPC_0 + 1)}
\end{calculation}
We need to prove that for all possible value of $\INFSTORE_0(r_1)$, $\alpha(\MEM_0(v.n)) \subseteq (\INFMEM_0(s') \sqcup \INFSTORE_0(r_1))$, where $\alpha(v) = \{s'\}$ and $\INFMEM_0(s') = \{s\}$.
By the assumption $s_0 \APPROX \hat{s_0}$, $\alpha(\MEM_0) \subseteq \INFMEM_0$. Thus, $\alpha(\MEM_0(v.n)) \subseteq \INFMEM_0(\alpha(v))$.
By the soundness of $\sqcup$ (\thmref{thm:col}), $\{\alpha(\MEM_0(v.n))\} \subseteq (\INFMEM_0(s') \sqcup \INFSTORE_0(r_1))$ is true.

\item $v.n \not\in \Header$ and $v.l = lo$.
\begin{calculation}
\FORMULA{\MS{r_1 := load(r_2)}{\sigma_0, \MEM_0, pc_0}}
\REASON{=}{the concrete semantics defined \figref{fig:reil-semantics}}
\FORMULA{(\sigma_0[r_1 \mapsto \MEM_0(v.n)], \MEM_0, pc_0 + 1)}
\end{calculation}
There are two case:
\begin{enumerate}
\item $\alpha(v) = P$. By the abstract semantics defined in \figref{fig:infsemantics}, $r_1$ is updated to $\top$. As $\top$ sits on the top of the lattice, the result is sound.
\item $\alpha(v) = \epsilon$ or $\alpha(v) = n$.
 \begin{calculation}
\FORMULA{\INFMS{r_1 := load(r_2)}{\INFSTORE_0, \INFMEM_0, \INFPC_0}}
\REASON{=}{the abstract semantics defined in \figref{fig:infsemantics}}
\FORMULA{(\INFSTORE_0[r_1 \mapsto (\INFMEM(\INFSTORE(r_2)) \sqcup \INFSTORE_0(r_1))], \INFMEM_0, \INFPC_0 + 1)}
\end{calculation}
\end{enumerate}
We need to prove that for all possible value of $\INFSTORE_0(r_1)$, $\alpha(\MEM_0(\sigma_0(r_2).n)) \subseteq (\INFMEM_0(\INFSTORE_0(r_2)) \sqcup \INFSTORE_0(r_1))$.
By the assumption $s_0 \APPROX \hat{s_0}$ and definition of the abstraction function in \figref{fig:abs-conc}, $\alpha(\MEM_0) \subseteq \INFMEM_0$ and $\alpha(\sigma_0) \subseteq \INFSTORE_0$. Therefore, $\alpha(\sigma_0(r_2)) \subseteq \INFSTORE_0(r_2)$ and $\alpha(\MEM_0(\sigma_0(r_2).n) \subseteq \INFMEM_0(\INFSTORE_0(r_2))$.
By the soundness of $\sqcup$ (\thmref{thm:col}), $\alpha(\MEM_0(\sigma_0(r_2).n)) \subseteq (\INFMEM_0(\INFSTORE_0(r_2)) \sqcup \INFSTORE_0(r_1))$ is true.
\end{enumerate}

\item ($r_1 := is\_zero(r_2)$). Let $v = \sigma_0(r_2)$. There are four cases.
\begin{enumerate}
\item $v = (0, lo)$, where $0 \in \IM$.
\begin{calculation}
\FORMULA{\MS{r_1 := is\_zero(r_2)}{\sigma_0, \MEM_0, pc_0}}
\REASON{=}{the concrete semantics defined \figref{fig:reil-semantics}}
\FORMULA{(\sigma_0[r_1 \mapsto (1, lo)], \MEM_0, pc_0 + 1)}
\end{calculation}
By the assumption $s_0 \APPROX \hat{s_0}$ and definition of the abstraction function in \figref{fig:abs-conc}, we know that $\alpha(v) = \{0\}$.
\begin{calculation}
\FORMULA{\INFMS{r_1 := load(r_2)}{\INFSTORE_0, \INFMEM_0, \INFPC_0}}
\REASON{=}{the abstract semantics defined in \figref{fig:infsemantics}}
\FORMULA{(\INFSTORE_0[r_1 \mapsto \{1\}], \INFMEM_0, \INFPC_0 + 1)}
\end{calculation}
We need to prove that $\alpha((1, lo)) \subseteq \{1\}$, which is true by the definition of $\alpha$ in \figref{fig:abs-conc}.

\item $v = (n, lo)$, where $n \neq 0$ and $n \in \IM$.
\begin{calculation}
\FORMULA{\MS{r_1 := is\_zero(r_2)}{\sigma_0, \MEM_0, pc_0}}
\REASON{=}{the concrete semantics defined \figref{fig:reil-semantics}}
\FORMULA{(\sigma_0[r_1 \mapsto (0, lo)], \MEM_0, pc_0 + 1)}
\end{calculation}
By the assumption $s_0 \APPROX \hat{s_0}$ and definition of the abstraction function in \figref{fig:abs-conc}, we know that $\alpha(v) = \{n\}$, where $n \in IM$.
\begin{calculation}
\FORMULA{\INFMS{r_1 := load(r_2)}{\INFSTORE_0, \INFMEM_0, \INFPC_0}}
\REASON{=}{the abstract semantics defined in \figref{fig:infsemantics}}
\FORMULA{(\INFSTORE_0[r_1 \mapsto \{0\}], \INFMEM_0, \INFPC_0 + 1)}
\end{calculation}
We need to prove that $\alpha((0, lo)) \subseteq \{0, 1\}$, which is true by the definition of $\alpha$ in \figref{fig:abs-conc}.

\item $v = (0, l)$, where $l \in \{lo, hi\}$.
\begin{calculation}
\FORMULA{\MS{r_1 := is\_zero(r_2)}{\sigma_0, \MEM_0, pc_0}}
\REASON{=}{the concrete semantics defined \figref{fig:reil-semantics}}
\FORMULA{(\sigma_0[r_1 \mapsto (1, l)], \MEM_0, pc_0 + 1)}
\end{calculation}
By the assumption $s_0 \APPROX \hat{s_0}$ and definition of the abstraction function in \figref{fig:abs-conc}, we know that $\alpha((0, l))$ could be any value in $p$, $\Header$, $\mathcal{E}$ or $S$.
\begin{calculation}
\FORMULA{\INFMS{r_1 := load(r_2)}{\INFSTORE_0, \INFMEM_0, \INFPC_0}}
\REASON{=}{the abstract semantics defined in \figref{fig:infsemantics}}
\FORMULA{(\INFSTORE_0[r_1 \mapsto \{0, 1\}], \INFMEM_0, \INFPC_0 + 1)}
\end{calculation}
We need to prove that $\alpha((0, l)) \subseteq \{0, 1\}$, which is true by the definition of $\alpha$ in \figref{fig:abs-conc}.

\item $v = (n, l)$, where $n \neq 0$ and $l \in \{lo, hi\}$.
\begin{calculation}
\FORMULA{\MS{r_1 := is\_zero(r_2)}{\sigma_0, \MEM_0, pc_0}}
\REASON{=}{the concrete semantics defined \figref{fig:reil-semantics}}
\FORMULA{(\sigma_0[r_1 \mapsto (0, l)], \MEM_0, pc_0 + 1)}
\end{calculation}
By the assumption $s_0 \APPROX \hat{s_0}$ and definition of the abstraction function in \figref{fig:abs-conc}, we know that $\alpha((n, l))$, could be any value in $p$, $\Header$, $\mathcal{E}$ or $S$.
\begin{calculation}
\FORMULA{\INFMS{r_1 := load(r_2)}{\INFSTORE_0, \INFMEM_0, \INFPC_0}}
\REASON{=}{the abstract semantics defined in \figref{fig:infsemantics}}
\FORMULA{(\INFSTORE_0[r_1 \mapsto \{0, 1\}], \INFMEM_0, \INFPC_0 + 1)}
\end{calculation}
 We need to prove that $\alpha((n, l)) \subseteq \{0, 1\}$, which is true by the definition of $\alpha$ in \figref{fig:abs-conc}.
\end{enumerate}

\item ($store(r_1, r_2)$). Recall the concrete semantics as follows: 
\begin{calculation}
\FORMULA{\MS{store(r_1, r_2)}{\sigma_0, \MEM_0, pc_0}}
\REASON{=}{the concrete semantics defined in \figref{fig:reil-semantics}}
\FORMULA{(\sigma_0, \MEM_0[\sigma_0(r_1) \mapsto \sigma_0(r_2)], pc_0 + 1)}
\end{calculation}
And recall its abstract semantics as follows:
\begin{calculation}
\FORMULA{\INFMS{store(r_1, r_2)}{\INFSTORE_0, \INFMEM_0, \INFPC_0}}
\REASON{=}{the abstract semantics defined in \figref{fig:infsemantics}}
\FORMULA{(\INFSTORE, \INFMEM, \INFPC_0 + 1)}
\end{calculation}
where $\INFMEM = \forall ~ v \in \INFSTORE_0(r_1). \INFMEM_0[v \mapsto (\INFSTORE_0(r_1)\sqcup \INFSTORE_0(r_2))]$.

By the assumption $(\sigma_0, \MEM_0, pc_0) \APPROX (\INFSTORE_0, \INFMEM_0, \INFPC_0)$,
we need to show that (1) $\alpha(\sigma_0(r_1)) \subseteq \INFSTORE_0(r_1)$ and (2) $\alpha(\sigma_0(r_2)) \subseteq (\INFSTORE_0(r_1) \sqcup \INFSTORE_0(r_2))$.
The obligation (1) is discharged by the assumption. The obligation (2) is discharged by  \thmref{thm:col}.

\item ($jmp(r_1, r_2)$). There are two cases. 
\begin{enumerate}
\item $\sigma_0(r_1).n \neq 0$. Recall the concrete semantics as follows:
\begin{calculation}
\FORMULA{\MS{jmp(r_1, r_2)}{\sigma_0, \MEM_0, pc_0}}
\REASON{=}{the concrete semantics defined in \figref{fig:reil-semantics}}
\FORMULA{(\sigma_0, \MEM_0, \sigma_0(r_2).n)}
\end{calculation}
Recall the abstract semantics as follows:
\begin{calculation}
\FORMULA{\INFMS{jmp(r_1, r_2)}{\INFSTORE_0, \INFMEM_0, \INFPC_0}}
\REASON{=}{the abstract semantics defed in \figref{fig:infsemantics}}
\FORMULA{(\INFSTORE_0, \INFMEM_0, (\INFPC_0 + 1) \cup \INFSTORE_0(r_2))}
\end{calculation}
We need to show that $\alpha(\sigma_0(r_2)) \subseteq (\INFPC_0 + 1) \cup \INFSTORE_0(r_2))$, which is true by the assumption $(\sigma_0, \MEM_0, pc_0) \APPROX (\INFSTORE_0, \INFMEM_0, \INFPC_0)$.
\item $\sigma_0(r_1).n = 0$ Recall the concrete semantics as follows:
\begin{calculation}
\FORMULA{\MS{jmp(r_1, r_2)}{\sigma_0, \MEM_0, pc_0}}
\REASON{=}{the concrete semantics defined in \figref{fig:reil-semantics}}
\FORMULA{(\sigma_0, \MEM_0, pc_0 + 1)}
\end{calculation}
Recall the abstract semantics as follows:
\begin{calculation}
\FORMULA{\INFMS{jmp(r_1, r_2)}{\INFSTORE_0, \INFMEM_0, \INFPC_0}}
\REASON{=}{the abstract semantics defed in \figref{fig:infsemantics}}
\FORMULA{(\INFSTORE_0, \INFMEM_0, (\INFPC_0 + 1) \cup \INFSTORE_0(r_2))}
\end{calculation}
We need to show that $\alpha(pc_0 + 1) \subseteq (\INFPC_0 + 1) \cup \INFSTORE_0(r_2))$, which is true by the assumption $(\sigma_0, \MEM_0, pc_0) \APPROX (\INFSTORE_0, \INFMEM_0, \INFPC_0)$.
\end{enumerate}
\end{enumerate}
\end{proof}

\section{\revise{}{Evaluating Different Configurations of the $\text{BOU}$ Function}}
\label{sec:bou-n}
The definition of the $\text{BOU}$ function includes a parameter $N$ as the
maximum size of each abstract value set. \T~\ref{tab:n-bou} reports the
evaluation results of \caches\ with respect to different $N$. As expected, with
the increase of the allowed size, analyses took more time before reaching the
fixed point. Also, when the allowed size is small (i.e., $N$ is 1 or 10), the
value set of certain registers is lifted into $\{p\}$ rapidly
%(see
% Appendix~\ref{sec:formalization-proof} for the definition of $\text{BOU}$) 
% EXTENDED PAPER
%Appendix~\ref{sec:lattice} for the definition of $\text{BOU}$)
and terminates the analysis due to memory write accesses through $p$ (see
\S~\ref{subsec:design-information}; we terminate the analysis for memory access
of $p$ since it rewrites the whole memory). The full evaluation data in terms of
different configurations is available in 
\T~\ref{tab:size-1}--\ref{tab:size-100}.
% 
% EXTENDED PAPER
% the extended paper~\cite{sastr}.

\begin{table}[t]
  \centering
  \caption{Evaluating different configurations of $\texttt{BOU}$. When $N$ is
    set as 1 and 10, several analyses terminated before reaching the fixed point
    due to memory write accesses through the public symbol $p$. The full
    evaluation data in terms of each configuration can be found at
    \T~\ref{tab:size-1}--\ref{tab:size-100}.}
    % EXTENDED PAPER
    % \cite{sastr}.}
  \label{tab:n-bou}
\resizebox{0.85\linewidth}{!}{
  \begin{tabular}{c|c|c|c}
    \hline
    \textbf{Value of $N$} & \textbf{True Positive} & \textbf{False Positive} & \textbf{Processing Time (CPU Seconds)} \\
    \hline
    \textbf{1} & N/A & N/A & N/A \\
    \hline
    \textbf{10} & 167 & 1 & 584.5 \\
    \hline
    \textbf{25} & 207 & 1 & 1,446.8 \\
    \hline
    \textbf{50 (the default config)} & 207 & 1 & 1,637.4 \\
    \hline
    \textbf{100} & 207 & 1 & 3,563.46 \\
    \hline
  \end{tabular}
  }
\end{table}

\begin{table}[t]
\centering
\caption{Table: Information leakage sites due to secret-dependent control branches. False positives are marked as \textcolor{red}{red}.}
\label{tab:sec-dep-control}
\resizebox{\linewidth}{!}{
\begin{tabular}{c@{~}|@{~}|c@{~}|c@{~}|c@{~}|c@{~}|c@{~}}
\hline
\textbf{Library} & \textbf{Algorithm} & \textbf{File} & \textbf{line number} & \textbf{Function} & \textbf{Leakage Units} \\
\hline
mbedTLS 2.5.1 & RSA & bignum.c & 1736 & mbedtls_mpi_exp_mod & 1 \\
\hline
mbedTLS 2.5.1 & RSA & bignum.c & 1739 & mbedtls_mpi_exp_mod & 1 \\
\hline
mbedTLS 2.5.1 & RSA & bignum.c & 1784 & mbedtls_mpi_exp_mod & 2 \\
\hline
mbedTLS 2.5.1 & RSA & bignum.c & 1793 & mbedtls_mpi_exp_mod & 3 \\
\hline
mbedTLS 2.5.1 & RSA & bignum.c & 1128 & mpi_mul_hlp & 4  \\
\hline
mbedTLS 2.5.1 & RSA & bignum.c & 1143 & mpi_mul_hlp & 4 \\
\hline
mbedTLS 2.5.1 & RSA & bignum.c & 1154 & mpi_mul_hlp & 4 \\
\hline
mbedTLS 2.5.1 & RSA & bignum.c & 1167 & mpi_mul_hlp & 4 \\
\hline
\hline
OpenSSL 1.0.2f & RSA/ElGmal & bn_lib.c & 199 & BN_num_bits_word & 1 \\
\hline
OpenSSL 1.0.2f & RSA/ElGmal & bn_lib.c & 200 & BN_num_bits_word & 1 \\
\hline
OpenSSL 1.0.2f & RSA/ElGmal & bn_lib.c & 208 & BN_num_bits_word & 1 \\
\hline
OpenSSL 1.0.2f & RSA/ElGmal & bn_lib.c & 771 & BN_is_bit_set & 2 \\
\hline
OpenSSL 1.0.2f & RSA/ElGmal & bn_lib.c & 775 & BN_is_bit_set & 2 \\
\hline
OpenSSL 1.0.2f & RSA/ElGmal & bn_exp.c & 684 & BN_mod_exp_mont_consttime & 3 \\
\hline
OpenSSL 1.0.2f & RSA/ElGmal & bn_exp.c & 1096 & BN_mod_exp_mont_consttime & 4 \\
\hline
OpenSSL 1.0.2f & RSA/ElGmal & bn_exp.c & 1106 & BN_mod_exp_mont_consttime & 4 \\
\hline
OpenSSL 1.0.2f & RSA/ElGmal & bn_lcl.h & 148 & BN_window_bits_for_exponent_size & 5 \\
\hline
OpenSSL 1.0.2f & RSA/ElGmal & bn_lcl.h & 149 & BN_window_bits_for_exponent_size & 5 \\
\hline                                                                          
OpenSSL 1.0.2f & RSA/ElGmal & bn_lcl.h & 150 & BN_window_bits_for_exponent_size & 5 \\
\hline                                                                         
OpenSSL 1.0.2f & RSA/ElGmal & bn_lcl.h & 151 & BN_window_bits_for_exponent_size & 5 \\
\hline
\hline
OpenSSL 1.0.2k & RSA/ElGmal & bn_lib.c & 199 & BN_num_bits_word & 1 \\
\hline
OpenSSL 1.0.2k & RSA/ElGmal & bn_lib.c & 200 & BN_num_bits_word & 1 \\
\hline
OpenSSL 1.0.2k & RSA/ElGmal & bn_lib.c & 208 & BN_num_bits_word & 1 \\
\hline
OpenSSL 1.0.2k & RSA/ElGmal & bn_lib.c & 771 & BN_is_bit_set & 2 \\
\hline
OpenSSL 1.0.2k & RSA/ElGmal & bn_lib.c & 775 & BN_is_bit_set & 2 \\
\hline
OpenSSL 1.0.2k & RSA/ElGmal & bn_exp.c & 724 & BN_mod_exp_mont_consttime & 3 \\
\hline
OpenSSL 1.0.2k & RSA/ElGmal & bn_exp.c & 1136 & BN_mod_exp_mont_consttime & 4 \\
\hline
OpenSSL 1.0.2k & RSA/ElGmal & bn_exp.c & 1146 & BN_mod_exp_mont_consttime & 4 \\
\hline
OpenSSL 1.0.2k & RSA/ElGmal & bn_lcl.h & 148 & BN_window_bits_for_exponent_size & 5 \\
\hline
OpenSSL 1.0.2k & RSA/ElGmal & bn_lcl.h & 149 & BN_window_bits_for_exponent_size & 5 \\
\hline                                                                         
OpenSSL 1.0.2k & RSA/ElGmal & bn_lcl.h & 150 & BN_window_bits_for_exponent_size & 5 \\
\hline                                                                        
OpenSSL 1.0.2k & RSA/ElGmal & bn_lcl.h & 151 & BN_window_bits_for_exponent_size & 5 \\
\hline
\hline
Libgcrypt 1.6.1 & RSA/ElGmal & secmem.c & 116(1) & ptr_into_pool_p & 1 \\
\hline
Libgcrypt 1.6.1 & RSA/ElGmal & secmem.c & 116(2) & ptr_into_pool_p & 1 \\
\hline
Libgcrypt 1.6.1 & RSA/ElGmal & mpi-pow.c & 615 & _gcry_mpi_powm & 2 \\
\hline
Libgcrypt 1.6.1 & RSA/ElGmal & mpi-pow.c & 670 & _gcry_mpi_powm & 2 \\
\hline
Libgcrypt 1.6.1 & RSA/ElGmal & mpi-pow.c & 704 & _gcry_mpi_powm & 2 \\
\hline
Libgcrypt 1.6.1 & RSA/ElGmal & mpi-pow.c & 706 & _gcry_mpi_powm & 2 \\
\hline
Libgcrypt 1.6.1 & RSA/ElGmal & mpi-pow.c & 769 & _gcry_mpi_powm & 3 \\
\hline
Libgcrypt 1.6.1 & RSA/ElGmal & mpih-mul.c & 493 & _gcry_mpih_mul & 4 \\
\hline                                                           
Libgcrypt 1.6.1 & RSA/ElGmal & mpih-mul.c & 494 & _gcry_mpih_mul & 4 \\
\hline                                                          
Libgcrypt 1.6.1 & RSA/ElGmal & mpih-mul.c & 510 & _gcry_mpih_mul & 4 \\
\hline                                                         
Libgcrypt 1.6.1 & RSA/ElGmal & mpih-mul.c & 512 & _gcry_mpih_mul & 4 \\
\hline
Libgcrypt 1.6.1 & RSA/ElGmal & mpih-mul.c & 83 & mul_n_basecase & 5 \\
\hline
Libgcrypt 1.6.1 & RSA/ElGmal & mpih-mul.c & 84 & mul_n_basecase & 5 \\
\hline
Libgcrypt 1.6.1 & RSA/ElGmal & mpih-mul.c & 100 & mul_n_basecase & 5 \\
\hline
Libgcrypt 1.6.1 & RSA/ElGmal & mpih-mul.c & 102 & mul_n_basecase & 5 \\
\hline
\textcolor{red}{Libgcrypt 1.6.1} & \textcolor{red}{RSA/ElGmal} & \textcolor{red}{mpih-mul.c} & \textcolor{red}{214} & \textcolor{red}{mul_n} & 6 \\
\hline
\textcolor{red}{Libgcrypt 1.6.1} & \textcolor{red}{RSA/ElGmal} & \textcolor{red}{mpih-mul.c} & \textcolor{red}{219} & \textcolor{red}{mul_n} & 6 \\
\hline
Libgcrypt 1.6.1 & RSA/ElGmal & mpi-inline.h & 148 & _gcry_mpih_cmp & 7 \\
\hline
Libgcrypt 1.6.1 & RSA/ElGmal & mpi-inline.h & 157 & _gcry_mpih_cmp & 7 \\
\hline
\textcolor{red}{Libgcrypt 1.6.1} & \textcolor{red}{RSA/ElGmal} & \textcolor{red}{mpi-inline.h} & \textcolor{red}{51} & \textcolor{red}{_gcry_mpih_add_1} & 8 \\
\hline
\textcolor{red}{Libgcrypt 1.6.1} & \textcolor{red}{RSA/ElGmal} & \textcolor{red}{mpi-inline.h} & \textcolor{red}{97} & \textcolor{red}{_gcry_mpih_sub_1} & 9 \\
\hline
\hline
Libgcrypt 1.7.3 & RSA/ElGmal & mpi-internal.h & 116 & MPN_NORMALIZE & 1 \\
\hline
Libgcrypt 1.7.3 & RSA/ElGmal & mpi-pow.c & 609 & _gcry_mpi_powm & 2 \\
\hline
Libgcrypt 1.7.3 & RSA/ElGmal & mpi-pow.c & 680 & _gcry_mpi_powm & 2 \\
\hline
Libgcrypt 1.7.3 & RSA/ElGmal & mpi-pow.c & 706 & _gcry_mpi_powm & 3 \\
\hline
Libgcrypt 1.7.3 & RSA/ElGmal & mpi-pow.c & 724 & _gcry_mpi_powm & 3 \\
\hline
Libgcrypt 1.7.3 & RSA/ElGmal & mpi-pow.c & 780 & _gcry_mpi_powm & 4 \\
\hline
\end{tabular}
}
\end{table}

\begin{table}[t]
\centering
\caption{Table: Information leakage sites due to secret-dependent cache
  accesses. False positives are marked as \textcolor{red}{red}.}
\label{tab:sec-dep-cache-2}
\resizebox{\linewidth}{!}{
\begin{tabular}{c@{~}|@{~}|c@{~}|c@{~}|c@{~}|c@{~}|c@{~}}
\hline
\textbf{Library} & \textbf{Algorithm} & \textbf{File} & \textbf{line number} & \textbf{Function} & \textbf{Leakage Units} \\
\hline
Libgcrypt 1.6.1 & RSA/ElGmal & mpi-pow.c & 677 & _gcry_mpi_powm & 1 \\
\hline                                                          
Libgcrypt 1.6.1 & RSA/ElGmal & mpi-pow.c & 678 & _gcry_mpi_powm & 1 \\
\hline                                                         
Libgcrypt 1.6.1 & RSA/ElGmal & mpi-pow.c & 713 & _gcry_mpi_powm & 2 \\
\hline                                                        
Libgcrypt 1.6.1 & RSA/ElGmal & mpi-pow.c & 714 & _gcry_mpi_powm & 2 \\
\hline
Libgcrypt 1.6.1 & RSA/ElGmal & mpih-mul.c & 82 & mul_n_basecase & 3 \\
\hline
Libgcrypt 1.6.1 & RSA/ElGmal & mpih-mul.c & 99 & mul_n_basecase & 3 \\
\hline
Libgcrypt 1.6.1 & RSA/ElGmal & mpi-internal.h & 88 & MPN_COPY & 4 \\
\hline
Libgcrypt 1.6.1 & RSA/ElGmal & mpi-inline.h & 146 & _gcry_mpih_cmp & 5 \\
\hline
Libgcrypt 1.6.1 & RSA/ElGmal & mpi-inline.h & 147 & _gcry_mpih_cmp & 5 \\
\hline
Libgcrypt 1.6.1 & RSA/ElGmal & mpih-mul.c & 135 & mul_n & 6 \\
\hline
Libgcrypt 1.6.1 & RSA/ElGmal & mpih-mul.c & 137 & mul_n & 6 \\
\hline
Libgcrypt 1.6.1 & RSA/ElGmal & mpih-sub1.S & 81 & _gcry_mpih_sub_n & 7 \\
\hline
Libgcrypt 1.6.1 & RSA/ElGmal & mpih-sub1.S & 82 & _gcry_mpih_sub_n & 7 \\
\hline
Libgcrypt 1.6.1 & RSA/ElGmal & mpih-sub1.S & 84 & _gcry_mpih_sub_n & 7 \\
\hline                                                             
Libgcrypt 1.6.1 & RSA/ElGmal & mpih-sub1.S & 85 & _gcry_mpih_sub_n & 7 \\
\hline                                                            
Libgcrypt 1.6.1 & RSA/ElGmal & mpih-sub1.S & 87 & _gcry_mpih_sub_n & 7 \\
\hline                                                           
Libgcrypt 1.6.1 & RSA/ElGmal & mpih-sub1.S & 88 & _gcry_mpih_sub_n & 7 \\
\hline                                                          
Libgcrypt 1.6.1 & RSA/ElGmal & mpih-sub1.S & 90 & _gcry_mpih_sub_n & 7 \\
\hline
%%\end{tabular}
%%}
%%\end{table}
%%
%%\begin{table}[t]
%%\centering
%%\caption*{Table: Information leakage sites due to secret-dependent cache access (cont'). 
%%False positives are marked as \textcolor{red}{red}.}
%%\label{tab:sec-dep-cache-1}
%%\resizebox{\linewidth}{!}{
%%\begin{tabular}{c@{~}|@{~}|c@{~}|c@{~}|c@{~}|c@{~}|c@{~}}
%%\hline
%\textbf{Library} & \textbf{Algorithm} & \textbf{File} & \textbf{line number} & \textbf{Function} & \textbf{Leakage Units} \\
%\hline
Libgcrypt 1.6.1 & RSA/ElGmal & mpih-sub1.S & 91 & _gcry_mpih_sub_n & 7 \\
\hline
Libgcrypt 1.6.1 & RSA/ElGmal & mpih-sub1.S & 93 & _gcry_mpih_sub_n  & 7 \\
\hline                                                             
Libgcrypt 1.6.1 & RSA/ElGmal & mpih-sub1.S & 94 & _gcry_mpih_sub_n  & 7 \\
\hline                                                            
Libgcrypt 1.6.1 & RSA/ElGmal & mpih-sub1.S & 96 & _gcry_mpih_sub_n  & 7 \\
\hline                                                           
Libgcrypt 1.6.1 & RSA/ElGmal & mpih-sub1.S & 97 & _gcry_mpih_sub_n  & 7 \\
\hline                                                          
Libgcrypt 1.6.1 & RSA/ElGmal & mpih-sub1.S & 99 & _gcry_mpih_sub_n  & 7 \\
\hline                                                         
Libgcrypt 1.6.1 & RSA/ElGmal & mpih-sub1.S & 100 & _gcry_mpih_sub_n & 7  \\
\hline
Libgcrypt 1.6.1 & RSA/ElGmal & mpih-sub1.S & 102 & _gcry_mpih_sub_n & 7 \\
\hline
Libgcrypt 1.6.1 & RSA/ElGmal & mpih-sub1.S & 103 & _gcry_mpih_sub_n & 7 \\
\hline
Libgcrypt 1.6.1 & RSA/ElGmal & mpih-mul1.S & 68 & _gcry_mpih_mul_1 & 8 \\
\hline
Libgcrypt 1.6.1 & RSA/ElGmal & mpih-mul2.S & 69 & _gcry_mpih_addmul_1 & 9 \\
\hline
Libgcrypt 1.6.1 & RSA/ElGmal & mpih-add1.S & 80 & _gcry_mpih_add_n & 10  \\
\hline
Libgcrypt 1.6.1 & RSA/ElGmal & mpih-add1.S & 83 & _gcry_mpih_add_n & 10 \\
\hline
Libgcrypt 1.6.1 & RSA/ElGmal & mpih-add1.S & 86 & _gcry_mpih_add_n & 10 \\
\hline                                                            
Libgcrypt 1.6.1 & RSA/ElGmal & mpih-add1.S & 89 & _gcry_mpih_add_n & 10 \\
\hline                                                           
Libgcrypt 1.6.1 & RSA/ElGmal & mpih-add1.S & 92 & _gcry_mpih_add_n & 10 \\
\hline                                                             
Libgcrypt 1.6.1 & RSA/ElGmal & mpih-add1.S & 95 & _gcry_mpih_add_n & 10 \\
\hline                                                            
Libgcrypt 1.6.1 & RSA/ElGmal & mpih-add1.S & 98 & _gcry_mpih_add_n & 10 \\
\hline                                                           
Libgcrypt 1.6.1 & RSA/ElGmal & mpih-add1.S & 101 & _gcry_mpih_add_n & 10  \\
\hline
Libgcrypt 1.6.1 & RSA/ElGmal & mpih-mul.c & 492 & _gcry_mpih_mul & 11 \\
\hline
Libgcrypt 1.6.1 & RSA/ElGmal & mpih-mul.c & 495 & _gcry_mpih_mul & 11 \\
\hline
Libgcrypt 1.6.1 & RSA/ElGmal & mpih-mul.c & 509 & _gcry_mpih_mul & 11 \\
\hline
\hline
mbedTLS 2.5.1 & RSA & bignum.c & 1563(1)--(2) & mpi_montmul & 1 \\
\hline
mbedTLS 2.5.1 & RSA & bignum.c & 1571 & mpi_montmul & 1 \\
\hline
mbedTLS 2.5.1 & RSA & bignum.c & 1573 & mpi_montmul & 1 \\
\hline
mbedTLS 2.5.1 & RSA & bignum.c & 1131(1)--(2) & mpi_mul_hlp & 2 \\
\hline
mbedTLS 2.5.1 & RSA & bignum.c & 1132(1)--(2) & mpi_mul_hlp & 2 \\
\hline                                        
mbedTLS 2.5.1 & RSA & bignum.c & 1133(1)--(2) & mpi_mul_hlp & 2 \\
\hline                                       
mbedTLS 2.5.1 & RSA & bignum.c & 1134(1)--(2) & mpi_mul_hlp & 2 \\
\hline                                        
mbedTLS 2.5.1 & RSA & bignum.c & 1136(1)--(2) & mpi_mul_hlp & 2 \\
\hline                                       
mbedTLS 2.5.1 & RSA & bignum.c & 1137(1)--(2) & mpi_mul_hlp & 2 \\
\hline                                        
mbedTLS 2.5.1 & RSA & bignum.c & 1138(1)--(2) & mpi_mul_hlp & 2 \\
\hline                                       
mbedTLS 2.5.1 & RSA & bignum.c & 1139(1)--(2) & mpi_mul_hlp & 2 \\
\hline                                        
mbedTLS 2.5.1 & RSA & bignum.c & 1146(1)--(2) & mpi_mul_hlp & 2 \\
\hline                                       
mbedTLS 2.5.1 & RSA & bignum.c & 1147(1)--(2) & mpi_mul_hlp & 2 \\
\hline                                        
mbedTLS 2.5.1 & RSA & bignum.c & 1149(1)--(2) & mpi_mul_hlp & 2 \\
\hline                                        
mbedTLS 2.5.1 & RSA & bignum.c & 1150(1)--(2) & mpi_mul_hlp & 2 \\
\hline
mbedTLS 2.5.1 & RSA & bignum.c & 1157 & mpi_mul_hlp & 2 \\
\hline
\hline
OpenSSL 1.0.2f & RSA/ElGmal & bn_lib.c & 201 & BN_num_bits_word & 1 \\
\hline                                                         
OpenSSL 1.0.2f & RSA/ElGmal & bn_lib.c & 203 & BN_num_bits_word & 1 \\
\hline                                                         
OpenSSL 1.0.2f & RSA/ElGmal & bn_lib.c & 209 & BN_num_bits_word & 1 \\
\hline                                                         
OpenSSL 1.0.2f & RSA/ElGmal & bn_lib.c & 212 & BN_num_bits_word & 1 \\
\hline
OpenSSL 1.0.2f & RSA/ElGmal & bn_lib.c & 777 & BN_is_bit_set & 2 \\
\hline
\textcolor{red}{OpenSSL 1.0.2f} & \textcolor{red}{RSA/ElGmal} & \textcolor{red}{bn_exp.c} & \textcolor{red}{633} & \textcolor{red}{MOD_EXP_CTIME_COPY_FROM_PREBUF} & 3 \\
\hline
\hline
OpenSSL 1.0.2k & RSA/ElGmal & bn_lib.c & 201 & BN_num_bits_word & 1 \\
\hline                                                         
OpenSSL 1.0.2k & RSA/ElGmal & bn_lib.c & 203 & BN_num_bits_word & 1 \\
\hline                                                        
OpenSSL 1.0.2k & RSA/ElGmal & bn_lib.c & 209 & BN_num_bits_word & 1 \\
\hline                                                         
OpenSSL 1.0.2k & RSA/ElGmal & bn_lib.c & 212 & BN_num_bits_word & 1 \\
\hline
OpenSSL 1.0.2k & RSA/ElGmal & bn_exp.c & 777 & BN_is_bit_set & 2 \\
\hline
\hline
mbedTLS 2.5.1 & AES & aes.c & 788(1)--(16) & mbedtls_internal_aes_decrypt & 1 \\
\hline                                                          
mbedTLS 2.5.1 & AES & aes.c & 789(1)--(16) & mbedtls_internal_aes_decrypt & 1 \\
\hline                                                         
mbedTLS 2.5.1 & AES & aes.c & 792(1)--(16) & mbedtls_internal_aes_decrypt & 1 \\
\hline                                                        
mbedTLS 2.5.1 & AES & aes.c & 795 & mbedtls_internal_aes_decrypt & 1 \\
\hline                                                       
mbedTLS 2.5.1 & AES & aes.c & 796 & mbedtls_internal_aes_decrypt & 1 \\
\hline                                                      
mbedTLS 2.5.1 & AES & aes.c & 797 & mbedtls_internal_aes_decrypt & 1 \\
\hline                                                     
mbedTLS 2.5.1 & AES & aes.c & 798 & mbedtls_internal_aes_decrypt & 1 \\
\hline                                                    
mbedTLS 2.5.1 & AES & aes.c & 801 & mbedtls_internal_aes_decrypt & 1 \\
\hline                                                   
mbedTLS 2.5.1 & AES & aes.c & 802 & mbedtls_internal_aes_decrypt & 1 \\
\hline                                                  
mbedTLS 2.5.1 & AES & aes.c & 803 & mbedtls_internal_aes_decrypt & 1 \\
\hline                                                 
mbedTLS 2.5.1 & AES & aes.c & 804 & mbedtls_internal_aes_decrypt & 1 \\
\hline                                                
mbedTLS 2.5.1 & AES & aes.c & 807 & mbedtls_internal_aes_decrypt & 1 \\
\hline                                               
mbedTLS 2.5.1 & AES & aes.c & 808 & mbedtls_internal_aes_decrypt & 1 \\
\hline                                              
mbedTLS 2.5.1 & AES & aes.c & 809 & mbedtls_internal_aes_decrypt & 1 \\
\hline                                             
mbedTLS 2.5.1 & AES & aes.c & 810 & mbedtls_internal_aes_decrypt & 1 \\
\hline                                            
mbedTLS 2.5.1 & AES & aes.c & 813 & mbedtls_internal_aes_decrypt & 1 \\
\hline                                           
mbedTLS 2.5.1 & AES & aes.c & 814 & mbedtls_internal_aes_decrypt & 1 \\
\hline                                          
mbedTLS 2.5.1 & AES & aes.c & 815 & mbedtls_internal_aes_decrypt & 1 \\
\hline                                         
mbedTLS 2.5.1 & AES & aes.c & 816 & mbedtls_internal_aes_decrypt & 1 \\
\hline
\hline
OpenSSL 1.0.2f & AES & aes-586.pl & 1357(1)--(4) & _x86_AES_decrypt_compact & 1 \\
\hline                                                            
OpenSSL 1.0.2f & AES & aes-586.pl & 1358(1)--(4) & _x86_AES_decrypt_compact & 1 \\
\hline                                                           
OpenSSL 1.0.2f & AES & aes-586.pl & 1359(1)--(4) & _x86_AES_decrypt_compact & 1 \\
\hline                                                          
OpenSSL 1.0.2f & AES & aes-586.pl & 1360(1)--(4) & _x86_AES_decrypt_compact & 1 \\
\hline                                                         
OpenSSL 1.0.2f & AES & aes-586.pl & 1377(1)--(4) & _x86_AES_decrypt_compact & 1 \\
\hline                                                        
OpenSSL 1.0.2f & AES & aes-586.pl & 1378(1)--(4) & _x86_AES_decrypt_compact & 1 \\
\hline                                                       
OpenSSL 1.0.2f & AES & aes-586.pl & 1379(1)--(4) & _x86_AES_decrypt_compact & 1 \\
\hline                                                      
OpenSSL 1.0.2f & AES & aes-586.pl & 1380(1)--(4) & _x86_AES_decrypt_compact & 1 \\
\hline
\hline
OpenSSL 1.0.2k & AES & aes-586.pl & 1357(1)--(4) & _x86_AES_decrypt_compact & 1 \\
\hline                                                                     
OpenSSL 1.0.2k & AES & aes-586.pl & 1358(1)--(4) & _x86_AES_decrypt_compact & 1 \\
\hline                                                                    
OpenSSL 1.0.2k & AES & aes-586.pl & 1359(1)--(4) & _x86_AES_decrypt_compact & 1 \\
\hline                                                                   
OpenSSL 1.0.2k & AES & aes-586.pl & 1360(1)--(4) & _x86_AES_decrypt_compact & 1 \\
\hline                                                                  
OpenSSL 1.0.2k & AES & aes-586.pl & 1377(1)--(4) & _x86_AES_decrypt_compact & 1 \\
\hline                                                                 
OpenSSL 1.0.2k & AES & aes-586.pl & 1378(1)--(4) & _x86_AES_decrypt_compact & 1 \\
\hline                                                                
OpenSSL 1.0.2k & AES & aes-586.pl & 1379(1)--(4) & _x86_AES_decrypt_compact & 1 \\
\hline                                                               
OpenSSL 1.0.2k & AES & aes-586.pl & 1380(1)--(4) & _x86_AES_decrypt_compact & 1 \\
\hline
\end{tabular}
}
\end{table}

\vspace{-5pt}
\begin{table*}[t]
\captionsetup{font=small}
  \caption{Evaluation result overview when the parameter $\text{N}$ of the
    $\text{BOU}$ function is 1. All the analysis campaigns are terminated before
    reaching the fixed point due to memory write accesses through the public
    symbol $p$. See the last paragraph of Section 6.2 ``Information Flow'' for
    discussion of such cases.}
  \label{tab:size-1}
  \centering
  \resizebox{0.9\linewidth}{!}{
 \begin{tabular}{c@{~}|@{~}c@{~}|@{~}c@{~}|@{~}c@{~}|@{~}c@{~}|@{~}c@{~}|@{~}c@{~}@{~}c@{~}|@{~}c@{~}}
    \hline
    \multirow{2}{*}{\textbf{Algorithm}} & \multirow{2}{*}{\textbf{Implementation}} & \textbf{Processing Time} & \textbf{\# of Analyzed} & \textbf{\# of Analyzed} & \textbf{\# of Analyzed} & \textbf{Peak Memory}  && \textbf{Information Leakage Sites} \\
    & &  \textbf{Procedures} & \textbf{Contexts} & \textbf{(CPU Seconds)}  & \textbf{REIL Instructions} & \textbf{Usage (MB)} & & \textbf{(known/unknown)} \\
    \hline
    \textbf{RSA/Elgamal} & libgcrypt 1.6.1 & \multicolumn{5}{c}{The analysis is terminated before reaching the fixed point due to a memory write access through $p$.} && NA \\
    \hline
    \textbf{RSA/Elgamal} & libgcrypt 1.7.3 & \multicolumn{5}{c}{The analysis is terminated before reaching the fixed point due to a memory write access through $p$.} && NA \\
    \hline
    \textbf{RSA/Elgamal} & OpenSSL 1.0.2k & \multicolumn{5}{c}{The analysis is terminated before reaching the fixed point due to a memory write access through $p$.} && NA \\
    \hline
    \textbf{RSA/Elgamal} & OpenSSL 1.0.2f & \multicolumn{5}{c}{The analysis is terminated before reaching the fixed point due to a memory write access through $p$.}  && NA \\
    \hline
    \textbf{RSA} & mbedTLS 2.5.1 & \multicolumn{5}{c}{The analysis is terminated before reaching the fixed point due to a memory write access through $p$.}  && NA \\ 
    \hline
    \textbf{AES} & OpenSSL 1.0.2k & \multicolumn{5}{c}{The analysis is terminated before reaching the fixed point due to a memory write access through $p$.}  && NA \\
    \hline
    \textbf{AES} & OpenSSL 1.0.2f  & \multicolumn{5}{c}{The analysis is terminated before reaching the fixed point due to a memory write access through $p$.} && NA \\
    \hline
    \textbf{AES} & mbedTLS 2.5.1 &  \multicolumn{5}{c}{The analysis is terminated before reaching the fixed point due to a memory write access through $p$.}  && NA \\
    \hline
  \end{tabular}
 }
\end{table*}

\vspace{-5pt}
\begin{table*}[t]
\captionsetup{font=small}
  \caption{Evaluation result overview when the parameter $\text{N}$ of the
    $\text{BOU}$ function is 10. Two analysis campaigns are terminated before
    reaching the fixed point due to memory write accesses through the public
    symbol $p$. See the last paragraph of Section 6.2 ``Information Flow'' for
    discussion of such cases. Variances in terms of analyzed contexts are
    highlighted.}
  \label{tab:size-10}
  \centering
  \resizebox{0.9\linewidth}{!}{
 \begin{tabular}{c@{~}|@{~}c@{~}|@{~}c@{~}|c|@{~}c@{~}|@{~}c@{~}|@{~}c@{~}@{~}c@{~}|@{~}c@{~}}
    \hline
    \multirow{2}{*}{\textbf{Algorithm}} & \multirow{2}{*}{\textbf{Implementation}}  & \textbf{\# of Analyzed} & \textbf{\# of Analyzed} & \textbf{Processing Time} & \textbf{\# of Analyzed} & \textbf{Peak Memory}  && \textbf{Information Leakage Sites} \\
    & &  \textbf{Procedures} & \textbf{Contexts} & \textbf{(CPU Seconds)}  & \textbf{REIL Instructions} & \textbf{Usage (MB)} & & \textbf{(known/unknown)} \\
    \hline
    \textbf{RSA/Elgamal} & libgcrypt 1.6.1 & \multicolumn{5}{l}{The analysis is terminated before reaching the fixed point due to a memory write access through $p$.} && NA  \\ 
    \hline
    \textbf{RSA/Elgamal} & libgcrypt 1.7.3 & \multicolumn{5}{c}{The analysis is terminated before reaching the fixed point due to a memory write access through $p$.} && NA  \\
    \hline
    \textbf{RSA/Elgamal} & OpenSSL 1.0.2k & 71 & 81 & 168.67  & 83,183  & 4,237 && 2/3 \\
    \textbf{RSA/Elgamal} & OpenSSL 1.0.2f & 68  & 72 & 150.3  & 80,096  & 4,151  && 2/4 \\
    \textbf{RSA} & mbedTLS 2.5.1 & 29  & \cellcolor{blue!25}34 & 125.34  & 34,137  & 3,501  && 0/29 \\
    \hline
    \textbf{AES} & OpenSSL 1.0.2k  & 1  & 1 & 46.76  & 3,748  & 566 && 32/0 \\
    \textbf{AES} & OpenSSL 1.0.2f  & 1  & 1 & 47.91 & 3,748  & 574 && 32/0 \\
    \textbf{AES} & mbedTLS 2.5.1   & 1  & 1 & 45.52 & 4,803  & 569 && 64/0 \\
    \hline
    \textbf{Total} & & 191 & 190 & 584.5 & 209,715 & 13,598 && 132/36 \\
    \hline
  \end{tabular}
 }
\end{table*}

\vspace{-5pt}
\begin{table*}[t]
\captionsetup{font=small}
  \caption{Evaluation result overview when the parameter $\text{N}$ of the
    $\text{BOU}$ function is 25. Variances in terms of analyzed contexts
    are highlighted.}
  \label{tab:size-25}
  \centering
  \resizebox{0.9\linewidth}{!}{
 %\begin{tabular}{c@{~}|@{~}c@{~}|@{~}c@{~}|c|@{~}c@{~}|@{~}c@{~}|@{~}c@{~}|@{~}c@{~}|@{~}c@{~}}
 \begin{tabular}{c@{~}|@{~}c@{~}|@{~}c@{~}|c|@{~}c@{~}|@{~}c@{~}|@{~}c@{~}|@{~}c@{~}}
    \hline
    \multirow{2}{*}{\textbf{Algorithm}} & \multirow{2}{*}{\textbf{Implementation}} & \textbf{\# of Analyzed} & \textbf{\# of Analyzed} & \textbf{Processing Time} & \textbf{\# of Analyzed} & \textbf{Peak Memory} & \textbf{Information Leakage Sites} \\
    & &  \textbf{Procedures} & \textbf{Contexts} & \textbf{(CPU Seconds)}  & \textbf{REIL Instructions} & \textbf{Usage (MB)} & \textbf{(known/unknown)} \\
    \hline
    \textbf{RSA/Elgamal} & libgcrypt 1.6.1 & 60  & 81 & 351.69  & 50,436   & 6,956 & 22/18 \\
    \textbf{RSA/Elgamal} & libgcrypt 1.7.3 & 59  & 59 & 211.72 & 33,386  & 6,385 & 0/0 \\
    \textbf{RSA/Elgamal} & OpenSSL 1.0.2k  & 71   & 81 & 199.41  & 83,183   & 5,100 &  2/3 \\
    \textbf{RSA/Elgamal} & OpenSSL 1.0.2f  & 68  & 72 & 165.35 & 80,096   & 5,345 &  2/4 \\
    \textbf{RSA} & mbedTLS 2.5.1 & 29  & \cellcolor{blue!25}35 & 378.98 & 35,050  & 6,347 & 0/29  \\
     \hline
    \textbf{AES} & OpenSSL 1.0.2k  & 1  & 1 & 47.29 & 3,748  & 557 & 32/0 \\
    \textbf{AES} & OpenSSL 1.0.2f  & 1  & 1 & 46.46 & 3,748  & 628 & 32/0 \\
    \textbf{AES} & mbedTLS 2.5.1  & 1  & 1 & 45.94  & 4,803  & 633  & 64/0\\
    \hline
    \textbf{Total} & & 290 & 331 & 1,446.84 & 294,450 & 31,951 & 154/54 \\
    \hline
  \end{tabular}
 }
\end{table*}

\vspace{-5pt}
\begin{table*}[t]
\captionsetup{font=small}
  \caption{Evaluation result overview when the parameter $\text{N}$ of the
    $\text{BOU}$ function is 50 (configuration used in our evaluation).
    Variances in terms of analyzed contexts are highlighted.}
  \label{tab:size-50}
  \centering
  \resizebox{0.9\linewidth}{!}{
 \begin{tabular}{c@{~}|@{~}c@{~}|@{~}c@{~}|c|@{~}c@{~}|@{~}c@{~}|@{~}c@{~}|@{~}c@{~}}
    \hline
    \multirow{2}{*}{\textbf{Algorithm}} & \multirow{2}{*}{\textbf{Implementation}} & \textbf{\# of Analyzed} & \textbf{\# of Analyzed} & \textbf{Processing Time} & \textbf{\# of Analyzed} & \textbf{Peak Memory} & \textbf{Information Leakage Sites} \\
    & &  \textbf{Procedures} & \textbf{Contexts} & \textbf{(CPU Seconds)}  & \textbf{REIL Instructions} & \textbf{Usage (MB)} & \textbf{(known/unknown)} \\
    \hline
     \textbf{RSA/ElGamal} & libgcrypt 1.6.1  & 60  & 81 & 228.8  & 50,436   & 7,749 &  22/18 \\
    
    \textbf{RSA/ElGamal} & libgcrypt 1.7.3  & 59  & 59 & 182.2 & 33,386  & 5,823 & 0/0  \\

    \textbf{RSA/ElGamal} & OpenSSL 1.0.2k   & 71   & 81 & 179.2   & 83,183   & 6,134 &  2/3 \\

    \textbf{RSA/ElGamal} & OpenSSL 1.0.2f  & 68   & 72 & 169.5  & 80,096   & 6,113  &  2/4  \\

    \textbf{RSA} & mbedTLS 2.5.1  & 29  & \cellcolor{blue!25}36 & 775.9 & 35,963  & 9,654 & 0/29 \\
    \hline
    \textbf{AES} & OpenSSL 1.0.2k  & 1  & 1 & 33.2 & 3,748  & 620 & 32/0 \\
    \textbf{AES} & OpenSSL 1.0.2f  & 1  & 1 & 35.8 & 3,748  & 578 & 32/0 \\
    \textbf{AES} & mbedTLS 2.5.1  & 1  & 1 & 32.8 & 4,803  & 619 & 64/0 \\
    \hline
    \textbf{Total} & & 290 & 332 & 1,637.4 & 295,363 & 37,290 & 154/54 \\
    \hline
  \end{tabular}
 }
\end{table*}

\vspace{-5pt}
\begin{table*}[t]
\captionsetup{font=small}
  \caption{Evaluation result overview when the parameter $\text{N}$ of the
    $\text{BOU}$ function is 100. Variances in terms of analyzed contexts
    are highlighted.}
  \label{tab:size-100}
  \centering
  \resizebox{0.9\linewidth}{!}{
 \begin{tabular}{c@{~}|@{~}c@{~}|@{~}c@{~}|c|@{~}c@{~}|@{~}c@{~}|@{~}c@{~}|@{~}c@{~}}
    \hline
    \multirow{2}{*}{\textbf{Algorithm}} & \multirow{2}{*}{\textbf{Implementation}} & \textbf{\# of Analyzed} & \textbf{\# of Analyzed} & \textbf{Processing Time}  & \textbf{\# of Analyzed} & \textbf{Peak Memory}  & \textbf{Information Leakage Sites} \\
    & &  \textbf{Procedures} & \textbf{Contexts} & \textbf{(CPU Seconds)}  & \textbf{REIL Instructions} & \textbf{Usage (MB)} & \textbf{(known/unknown)} \\
    \hline
    \textbf{RSA/Elgamal} & libgcrypt 1.6.1 & 60  & 81 & 335.08  & 50,436   & 6,750 &  22/18 \\
    \textbf{RSA/Elgamal} & libgcrypt 1.7.3  & 59 & 59 & 226.16  & 33,386  & 7,339 & 0/0 \\
    \textbf{RSA/Elgamal} & OpenSSL 1.0.2k  & 71  & 81 & 358.02  & 83,183   & 8,431 &  2/3 \\
    \textbf{RSA/Elgamal} & OpenSSL 1.0.2f  & 68  & 72 & 314.53  & 80,096   & 8,549 &  2/4 \\
    \textbf{RSA} & mbedTLS 2.5.1   & 29 & \cellcolor{blue!25}36 & 2,189.69   & 35,963  & 22,140 & 0/29 \\
    \hline
    \textbf{AES} & OpenSSL 1.0.2k   & 1 & 1 & 47.02 & 3,748  & 567 & 32/0 \\
    \textbf{AES} & OpenSSL 1.0.2f   & 1 & 1 & 45.57 & 3,748  & 590 & 32/0 \\
    \textbf{AES} & mbedTLS 2.5.1   & 1 & 1 & 47.39  & 4,803  & 631 & 64/0 \\
    \hline
    \textbf{Total} & & 290 & 332 & 3,563.46 & 259,363 & 54,997 & 154/54 \\
    \hline
  \end{tabular}
 }
\end{table*}

\end{appendix}

\end{document}